\definecolor{violet}{RGB}{124,35,217}
\newtheorem{corollary}{Corollary}
\newtheorem{lemma}{Lemma}
\newtheorem{remark}{Remark}
\DeclareMathOperator{\theargmin}{argmin}
\DeclareMathOperator{\theargmax}{argmax}
\DeclareMathOperator{\diagonal}{diag}
\DeclareMathOperator{\id}{Id}
\DeclareMathOperator{\pr}{\mathbb{P}}
\DeclareMathOperator{\R}{\mathbb{R}}
\DeclareMathOperator{\OG}{O}
\DeclareMathOperator{\support}{supp}
\DeclareMathOperator{\thesbm}{SBM}
\newcommand{\bA}{\bm{A}}
\newcommand{\hA}{{\widehat{\bm{A}}}}
\newcommand{\bAt}{\bm{A}^{(t)}}
\newcommand{\mA}{\sE{\pr}}
\newcommand{\bB}{\bm{B}}
\newcommand{\bD}{\bm{D}}
\newcommand{\hD}{{\widehat{\bm{D}}}}
\newcommand{\Dbar}{\overline{\bD}}
\newcommand{\dbar}{\overline{d\mspace{4mu}}}
\newcommand{\bE}{\bm{E}}
\newcommand{\eps}{\varepsilon}
\newcommand{\eqdef}{\stackrel{\text{\tiny def}}{=}}
\newcommand{\citeg}[1]{(e.g., \cite{#1}, and references therein)}
\newcommand{\fr}{Fr\'echet\xspace}
\newcommand{\Gt}{G^{(t)}}
\newcommand{\ist}{{i^\ast}}
\newcommand{\bJ}{\bm{J}}
\newcommand{\bLambda}{\bm{\Lambda}}
\newcommand{\blb}{\bm{\lambda}}
\newcommand{\hlm}{\widehat{\lambda}}
\newcommand{\bL}{\bm{L}}
\newcommand{\cL}{\bm{\mathcal{L}}}
\newcommand{\cLbar}{{\overline{\bm{\mathcal{L}}}}\mspace{4mu}}
\newcommand{\cLt}{\bm{\mathcal{L}}^{(t)}}
\newcommand{\hLM}{\widehat{\bm{\mathcal{L}}}_M \big(\fmpr\big)}
\newcommand{\spcL}{\lmb{\cL(\fmpr)}} 
\newcommand{\Lmu}{{\cL\big(\sfm{\pr}\big)}}
\newcommand{\mpr}{{m^\prime}}
\newcommand{\fmpr}{{\sfm{\pr}}}
\newcommand{\fmprM}{{\widehat{\bm{\mu}}}_T^M\big[\pr\big]}
\newcommand{\one}{\bm{1}}
\newcommand{\bpsi}{\bm{\psi}}
\newcommand{\bPsi}{\bm{\Psi}}
\newcommand{\bp}{\bm{p}}
\newcommand{\bP}{\bm{P}}
\newcommand{\bQ}{\bm{Q}}
\newcommand{\cR}{\mathcal{R}}
\newcommand{\cS}{\mathcal{S}}
\newcommand{\zr}{\bm{0}}
\NewDocumentCommand \F {o m}   
{
  \IfNoValueTF {#2} {\widehat{F}\left(#2\right)}{\widehat{F}_{#1}\mspace{-2mu}\left(#2\right) }
}
\DeclareRobustCommand{\argmin}[1]{\underset{#1}{\theargmin}\mspace{4mu}}
\DeclareRobustCommand{\argmax}[1]{\underset{#1}{\theargmax}\mspace{4mu}}
\DeclareRobustCommand{\diag}[1]{\mspace{4mu}{\diagonal\big(#1 \mspace{-2mu}\big)}\mspace{14mu}}
\DeclareRobustCommand{\E}[1]{{\mathbb{E} \mspace{-2mu}\big[\mspace{-1mu} #1 \mspace{-1mu} \big]}}             
\DeclareRobustCommand{\ipr}[2]{{\langle #1, #2 \rangle}} 
\newcommand{\lmb}[1]{{\bm{\lambda} \big( #1 \big)}}
\DeclareRobustCommand{\lon}[1]{{\lvert #1 \rvert}}
\DeclareRobustCommand{\norm}[1]{{\big \|#1 \big \|}}
\DeclareRobustCommand{\mse}[1]{{n^{-2}\big \|#1 \big \|^2_{F}}}
\DeclareRobustCommand{\nrm}[2]{{\big \|#1 \big \|_{#2}}}
\DeclareRobustCommand{\o}[1]{{\scriptstyle\mathscr{O}} \textstyle \left(\displaystyle #1\right)}
\DeclareRobustCommand{\O}[1]{{\displaystyle \mathcal{O}} \mspace{-4mu} \left(#1\right)}
\DeclareRobustCommand{\prob}[1]{\mspace{2mu}\mathbb{P}\mspace{-2mu}\left(\mspace{-2mu} #1 \mspace{-2mu}\right)}
\DeclareRobustCommand{\rstr}[2]{{#1\raisebox{-.5ex}{\big\vert}_{#2}}}
\DeclareRobustCommand{\sE}[1]{\widehat{\mathbb{E}}_T\mspace{-2mu}\left[#1 \right]} 
\DeclareRobustCommand{\sfm}[1]{\widehat{\bm{\mu}}_T   \mspace{-2mu}\big [\mspace{-2mu}#1 \mspace{-2mu}\big]}    
\DeclareRobustCommand{\sbm}[3]{{\thesbm \big(#1, #2, #3 \big)}}
\DeclareRobustCommand{\supp}[1]{{\support \big(#1\big)}}
\begin{document}

\title{The Spectral Barycentre of a Set of Graphs\\ with Community Structure}
\author{Fran\c{c}ois G. Meyer\footnote{FGM was supported in part by the National Science Foundation (CCF/CIF 1815971).}\\
  Applied Mathematics\\University of Colorado at Boulder, Boulder CO 80305\\
  \href{mailto:fmeyer@colorado.edu}{\sf \small fmeyer@colorado.edu}\\{\small
    \url{https://francoismeyer.github.io}}
  }
\maketitle
\begin{abstract}
  
     The notion of barycentre graph is of crucial importance for machine
    learning algorithms that process graph-valued data. The barycentre graph is a ``summary
    graph'' that captures the mean topology and connectivity structure of a training dataset
    of graphs.  The construction of a barycentre requires the definition of a metric to
    quantify distances between pairs of graphs. In this work, we use a multiscale spectral
    distance that is defined using the eigenvalues of the normalized graph Laplacian. The
    eigenvalues -- but not the eigenvectors -- of the normalized Laplacian of the barycentre
    graph can be determined from the optimization problem that defines the barycentre. In
    this work, we propose a structural constraint on the eigenvectors of the normalized
    graph Laplacian of the barycentre graph that guarantees that the barycentre inherits the
    topological structure of the graphs in the sample dataset. The eigenvectors can be
    computed using an algorithm that explores the large library of Soules bases. When the
    graphs are random realizations of a balanced stochastic block model, then our algorithm
    returns a barycentre that converges asymptotically (in the limit of large graph size)
    almost-surely to the population mean of the graphs. We perform Monte Carlo simulations
    to validate the theoretical properties of the estimator; we conduct experiments on
    real-life graphs that suggest that our approach works beyond the controlled environment
    of stochastic block models.\\

Keywords: Barycentre graph; Soules basis; \fr mean; Laplacian Spectral distance; statistical analysis of graph-valued data.    
\end{abstract}
\section{Introduction, problem statement, and related work}
\subsection{The barycentre graph}
The design of machine learning algorithms that can analyze "graph-valued random variables"
is of fundamental importance
\citeg{avrachenkov22,balan22,dubey20,ghoshdastidar20,haasler24,haghani22,kolaczyk20,lunagomez21,petersen19,zambon19}.
Such algorithms often require the computation of a "sample mean" graph that can summarize
the topology and connectivity of a training dataset of graphs, $\big\{G^{(1)},\ldots,
G^{(T)} \big\}$.

Formally, we denote by $\cS$ the set of $n \times n$ symmetric adjacency matrices with
nonnegative weights.

  We equip $\cS$ with a probability $\pr$. An example of a probability space $(\cS,\pr)$ is the
  stochastic block model (SBM), described in \cref{SBM-section}.  

The adjacency matrix $\bAt$ of the graph $\Gt, t=1,\ldots,T$ is sampled from $(\cS,\pr)$.  We equip
$(\cS,\pr)$ with a metric $d$ to quantify proximity of graphs. A notion of summary
graph is provided by the concept of {\em barycentre} \cite{sturm03}, or {\em \fr mean}
graph \cite{blanchard25}, $\fmpr$, which minimizes the sum of the squared distances to all
the graphs in the sample,
\begin{equation}
  \fmpr \eqdef \argmin{\bB\in \cS} \sum_{t=1}^T d^2(\bB,\bAt).
  \label{le-barycentre}
\end{equation}
Before continuing, we introduce some notations. We denote by $[n] \eqdef \{1,\ldots,n\}$. We
define $\one \eqdef [1 \cdots 1]^T$, and $\bJ = \one\one^T$. We use $\bA$ to denote the
adjacency matrix of a graph $G$, and $\bD\eqdef \diag{\bA\one}$ to denote the diagonal
degree matrix. The symmetric normalized adjacency matrix, $\hA = \bD^{-1/2} \bA \bD^{-1/2}$,
is defined by
\begin{equation}
  \hat{a}_{ij} \eqdef a_{ij}/\sqrt{d_id_j} \mspace{8mu}\text{if}\mspace{8mu} d_id_j\neq
  0; \mspace{8mu}\text{and} \mspace{8mu} \hat{a}_{ij}  \eqdef 0\mspace{8mu}
  \text{otherwise}.
  \label{Ahat}
\end{equation}
The normalized Laplacian is defined by $\cL \eqdef \id - \hA$. We denote
by $\blb(\cL) = [\lambda_1,\ldots,\lambda_n]$ the ascending sequence of
eigenvalues of $\cL$, $0 = \lambda_1 \le \cdots \le \lambda_n \le 2$.

\subsection{The Laplacian spectral pseudo-distance}
The metric $d$ in \cref{le-barycentre} influences the topological characteristics that
$\sfm{\mu}$ inherits from $\{G^{(1)},\ldots,G^{(T)}\}$ \cite{meyer24}.  We advocate that the
distance between graphs should be evaluated in the spectral domain, by comparing the
eigenvalues of the normalized Laplacian, $\cLt$, of the respective graphs $\Gt$. We define
the Laplacian spectral pseudo-metric between two graphs $G$ and $G^\prime$ by
\begin{equation}
  d(\cL,\cL^\prime) \eqdef \nrm{\blb (\cL) - \blb (\cL^\prime)}{2},
  \label{la-distance-Laplace}
\end{equation}
where $\blb (\cL)$ and $\blb (\cL^\prime)$ are the vectors of eigenvalues of $\cL$ and
$\cL^\prime$ respectively. This pseudo-distance captures at multiple scales the structural
and connectivity in the graphs (e.g., diameter, number of connected components,
clusterability, diffusion distance, \cite{Chung1997,Coifman06b,donnat18,wills20c}). Defining
a pseudo-distance in the spectral domain alleviates the difficulty of solving the node
correspondence problem, and in the case of the normalized Laplacian, it makes it possible to
compare graphs of different sizes. When the graphs are realizations of a stochastic block
model, the eigenvalues of $\cL$ associated with each community are better separated from the
bulk (see \cref{les_valeurs_propres} for a real-life instance of this fact) than the
corresponding eigenvalues of $\bL \eqdef \bD - \bA$ \cite{deng21}.

  \subsection{The spectrum of the barycentre graph}

In spite of the advantages of the pseudo-metric $d$ defined by \cref{la-distance-Laplace},
the computation of the solution to \cref{le-barycentre} leads to two technical obstacles.
The first challenge stems from the fact that $d$ is defined in the spectral domain, but the
optimization \cref{le-barycentre} takes place in $\cS$ \cite{garner24}. This leads to the
definition of a {\em realizable} sequence; we say that $\blb = [\lambda_1,\ldots,\lambda_n]$
is realizable if there exists $\bA \in \cS$ whose Laplacian, $\cL(\bA)$, satisfies
$\blb(\cL(\bA)) = \blb$. Hereinafter, we define $\cR$ to be the {\em set of realizable
  sequences}. We can formalize the optimisation problem associated with the estimation
of $\fmpr$ in \cref{le-barycentre},
\begin{equation}
  \spcL
  = \argmin{\blb  \in \cR }\sum_{t=1}^T ||\blb - \blb(\cLt)||_2^2.
  \label{def-mean-spectral}
\end{equation}
If we relax this minimization problem ($\blb \in \R^n$, instead of $\blb \in \cR$), then the
solution to \cref{def-mean-spectral}
is the sample mean $\sE{\blb} \eqdef T^{-1} \sum_{t=1}^T \blb(\cLt)$, which
has no guarantee to be realizable. Which brings us to the second
difficulty in using a spectral pseudo-distance. 

  \subsection{The normalized Laplacian of the barycentre graph}
  The knowledge of the eigenvalues $\spcL$ solution to \cref{def-mean-spectral} is
  insufficient to reconstruct the adjacency matrix of the barycentre graph $\fmpr$; we need an
  orthonormal basis of eigenvectors, $\bPsi= \begin{bmatrix} \bpsi_1 \cdots
    \bpsi_n \end{bmatrix}$ to compute 
  \begin{equation}
    \Lmu = \bPsi \diag{\sE{\blb}} {\mspace{-12mu}\bPsi^T}.
    \label{Laplacian_of_barycentre}
  \end{equation}
  In addition to the constraint,  $\bPsi \in \OG(n)$, where $\OG(n)$ is the orthogonal group, 
  we need to make sure that $\begin{bmatrix} \bpsi_1 & \cdots & \bpsi_n \end{bmatrix}$ are the
  eigenvectors of a valid normalized Laplacian, to wit 
  \begin{equation}
    \exists \bA \in \cS, \mspace{16mu} \bPsi \diag{\sE{\blb}} {\mspace{-12mu}\bPsi^T} = \id -
    \bD^{-1/2} \bA\bD^{-1/2}, \label{validLaplacian}
  \end{equation}
  where $\bD = \diag{\bA \one}\mspace{-16mu}$. When $\bPsi$ complies with
  \cref{validLaplacian}, we can take the adjacency matrix of the barycentre graph to be
  \begin{equation}
    \fmpr = \bD^{1/2}\big[\id - \bPsi \diag{\sE{\blb}}\mspace{-12mu} \bPsi^T\big]\bD^{1/2}.
    \label{from_Laplacian_to_barycentre}
  \end{equation}
  \subsection{The adjacency matrix of the barycentre graph: a structural constraint}

In general the barycentre $\fmpr$ given by \cref{from_Laplacian_to_barycentre}, for some
random choice of $\bPsi \in \OG(n)$, may have a very different topological structure than
the graphs $\big\{G^{(1)},\ldots, G^{(T)} \big\}$. We would like to enforce the fact that
$\fmpr$ should share the same topology and connectivity as $\E{\pr}$. This requirement is
well founded since $\E{\pr}$ captures the structural connectivity of the random graph
ensemble defined by $\pr$. For instance, if $\E{\pr}$ contains structures such as modular
communities, rich clubs, hubs, trees, etc. we expect these structures to be present in
$\fmpr$.

   This structural constraint can be formalized in the following way: we
  request that $\fmpr$ converges (as $n\rightarrow +\infty$) to $\E{\pr}$, with high
  probability (asymptotically almost-surely). Formally, we require
  \begin{equation}
    \lim_{n\rightarrow +\infty} \prob{\fmpr  =  \E{\pr}} = 1.
    \label{barycentre_close_to_mean}
  \end{equation}
  We note in passing, that the trivial choice $\fmpr = \mA$, where
  \begin{equation}
    \big[\sE{\pr}\big]_{ij} \eqdef T^{-1} \sum_{t=1}^T a^{(t)}_{ij},
    \label{sample_mean_P}
  \end{equation}
  does not satisfy the constraint given by \cref{def-mean-spectral}, since in general we
  have $\blb(\mA) \ne \sE{\blb}$ \cite{athreya22,chakrabarty20}.

  Because the topological structures present in a graph are independent of the normalization
  of $\bA$, we can replace $\bA$ with its symmetric normalized version $\hA$ (defined by
  \cref{Ahat}), or equivalently we can work with the normalized Laplacian of $\E{\pr}$,
  \begin{equation}
    \cLbar \eqdef \cL\big(\E{\pr}\big) = \id -\Big[\Dbar\Big]^{-1/2} \bP \Big[\Dbar\Big]^{-1/2},
  \end{equation}
  where
  \begin{equation}
    \bP \eqdef \E{\pr},
    \mspace{16mu} \text{and} \mspace{16mu}
    \Dbar \eqdef \E{\bD}.
    \label{edge_probability}
  \end{equation}
  Our definition of $\cLbar$ is in agreement with others who have used this matrix to derive
  bounds on the concentration of $\cL$ \cite{agterberg25,le18,lu13,oliveira09,rohe11}.  On
  this account, we replace the asymptotic structural constraint given by
  \cref{barycentre_close_to_mean} with
  \begin{equation}
    \lim_{n\rightarrow +\infty} \prob{\Lmu = \cLbar \:} = 1.
    \label{Laplacian_close_to_mean}
  \end{equation}
  \subsection{Formulation of the problem}
  We combine \cref{Laplacian_of_barycentre} and \cref{Laplacian_close_to_mean}, with
  the constraint that $\bpsi \in \OG(n)$ to get the following problem: given $\sE{\blb}$, find
  $\bPsi \in \OG(n)$ such that
  \begin{equation}
    \lim_{n\rightarrow +\infty} \prob{\bPsi \diag{\sE{\blb}} \mspace{-12mu}\bPsi^T =  \cLbar} = 1.
    \label{le_gros_probleme}
  \end{equation}
  In this work, we prove that it is possible to solve this problem using a ``customized''
  Soules basis $\bPsi$. We prove that when $\big(\cS,\pr\big)$ is the probability space
  associated with a balanced stochastic block model, then $\fmpr = \E{\pr}$ {
    asymptotically almost-surely}, where $\fmpr$ is defined by
  \cref{from_Laplacian_to_barycentre}.
  \subsection{State of the art}
  The requirement given by \cref{le_gros_probleme} can be approximately achieved if $\bPsi$ is
  an ``average on $\OG(n)$'' of the distribution of bases of eigenvectors associated with the
  respective Laplacian matrices $\cLt$ of the graphs in the sample. To this goal several
  authors have proposed to align the eigenvectors of the respective graph adjacency matrices
  \cite{ferrer05} or Laplacian matrices \cite{white07}. Rather than working with the
  eigenvalues of the normalized Laplacian, the authors in \cite{ginestet17} seek the
  barycentre graph on the cone of symmetric positive semidefinite matrices. They characterize
  the space of graph Laplacians, a subset of this cone, to develop statistical tests about the
  location of the \fr mean. Others \cite{ferguson23a} have proposed numerical methods to find
  the best stochastic block model (SBM) whose eigenvalues match the sample mean eigenvalues.
  \subsection{The stochastic block model\label{SBM-section}}
  Experiments performed on random graph ensembles, with adjustable parameters, provide a
  mechanism to quantify the performance of graph-valued algorithms \cite{penschuck22}.  To
  provide theoretical guarantees for the algorithms presented in this paper, we analyse the
  algorithms when the graphs are sampled from a random graph model: the stochastic block model
  \citeg{abbe18}.

  In the next subsections, we define the stochastic block model, justify the importance of
  this model, and give an example of a real-life dynamic graph that will be studied in
  detail in \cref{lespetitsfrancais}.

\subsubsection{Definition}
We define the general stochastic block model $\sbm{\bp}{q}{n}$. Let $\{B_k\}, 1 \le k \le M$
be a partition of the vertex set $[n]$ into $M$ blocks (or communities). We define the
vector $\bp = \begin{bmatrix} p_1 \cdots  p_M \end{bmatrix}$ to be the edge probabilities
within each block, and $q$ to be the edge probability between blocks. The entries $a_{ij} =
a_{ji}, i< j$ of the adjacency matrix $\bA$ are independent (up to symmetry) and are
distributed with Bernoulli distributions with parameter $p_m$ if $i$ and $j$ are in the same
block $B_m$, and parameter $q$ if $i$ and $j$ are in distinct blocks.

We often represent $\sbm{\bp}{q}{n}$ by the matrix of edge probabilities, or matrix of
connection probabilities, $\bP$ defined by \cref{edge_probability}. We sometimes consider a balanced
version of the model where all blocks have the same size, $\lon{B_m} = n/M$, (in that case
we assume without loss of generality that $n$ is a multiple of $M$), and all the edge
probabilities are equal, $p_1 = \cdots = p_M$. We denote this probability space by
$\sbm{p}{q}{n}$, since $p$ is a scalar and no longer a vector.

  \subsubsection{Motivation for using stochastic block models}

The stochastic block model represents the quintessential exemplar of a graph with community
structure. It has been used extensively in the study of complex real-life graphs
\cite{faskowitz18,thibeault24}. Stochastic block models have also been shown to provide
universal approximants (under various norms or distances) (e.g.,
\cite{airoldi13,ferguson23a,gerlach18,olhede14} and references therein), and can therefore
be used as building blocks to analyse more complex

  graphs. In fact, it was shown in \cite{young18} that the structural analysis of a large
  graph is equivalent to the approximation of that graph using stochastic block models,
  of decreasing complexity as the scale of the analysis becomes coarser.

Stochastic block models also provide a discrete version of step graphons
\cite{borgs20,dolevzal21,gao15}, which are dense in the space of graphons for the topology
induced by the cut-norm \cite{gao15}. Finally, stochastic block models are amenable to a
rigorous mathematical analysis, and are indeed at the cutting edge of rigorous probabilistic
analysis of random graphs \cite{abbe18}.

  \subsubsection{Example of a real-life graph that is well approximated with an SBM}

To emphasize the importance of the stochastic block model in the study of real-life
graphs, we present a dataset that is studied in detail in \cref{experiments}.  The
data is composed of a time-series of dynamic social-contact graphs collected in a French
primary school \cite{Stehle2011}. The dataset is considered a classic benchmark to study
real-life face-to-face contact graphs, and dynamical processes on such graphs (e.g.,
\cite{ferraz21,bovet22,cencetti25,contisciani22,djurdjevac25,failla24,gauvin14,gemmetto14,genois18,sattar23}).

Briefly, students carried RFID tags that recorded (every 20 seconds) face-to-face contacts
during two school days \cite{Stehle2011}. The primary school is composed of ten grades
(1-5); each grade is divided into two classes (A \& B); each student is a node of the
graph.  During the school day (8:30 AM -- 4:30 PM), events (morning and afternoon
recess: 10:30 -- 11:00 AM,
\begin{figure}[H]
  \setlength{\fboxsep}{-0.5pt}
  \centerline{
    \framebox{\parbox{.31\textwidth}{{\includegraphics[width=0.31\textwidth]
          {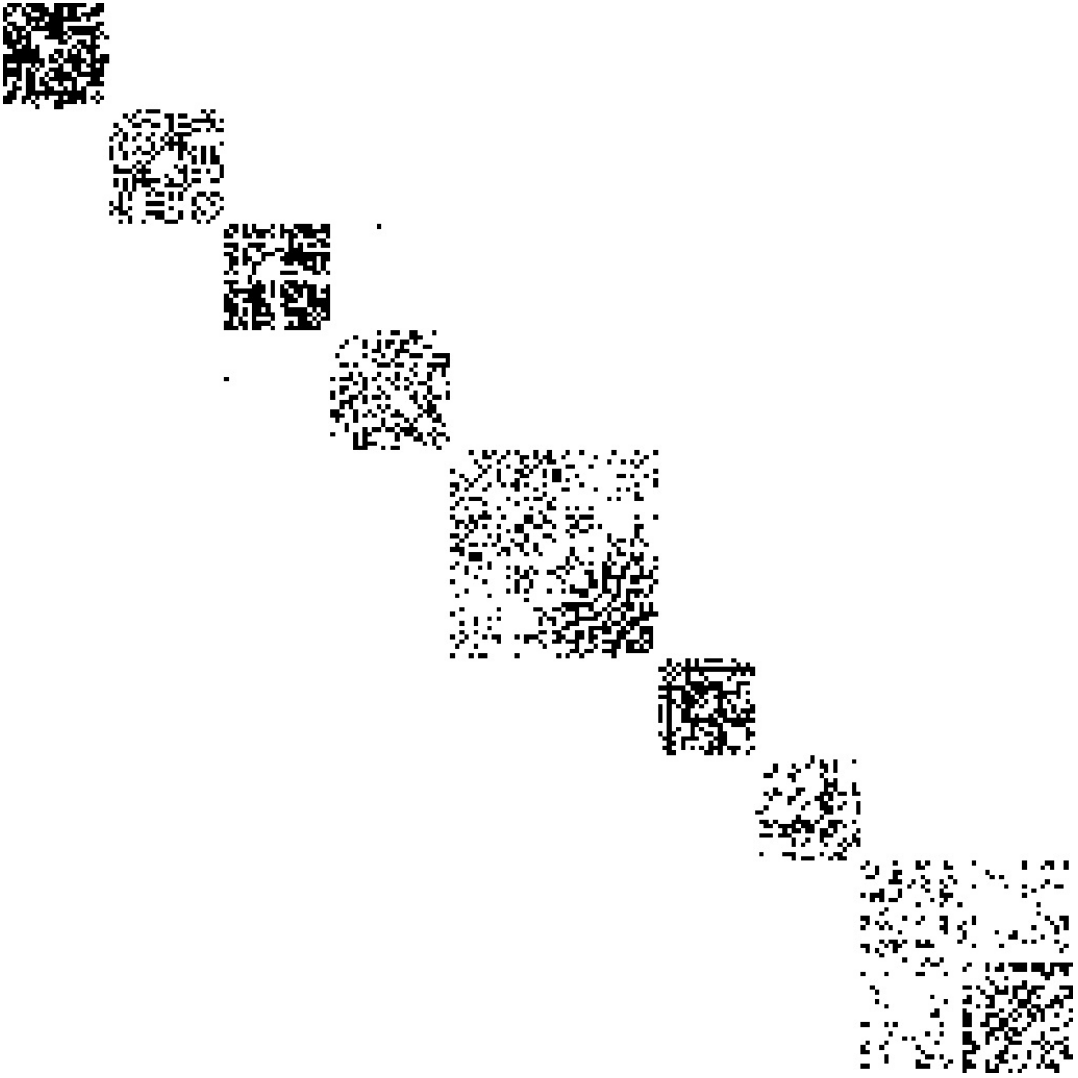}}}}
    \hspace*{0.4pc}
    \framebox{\parbox{.31\textwidth}{{\includegraphics[width=0.31\textwidth]
          {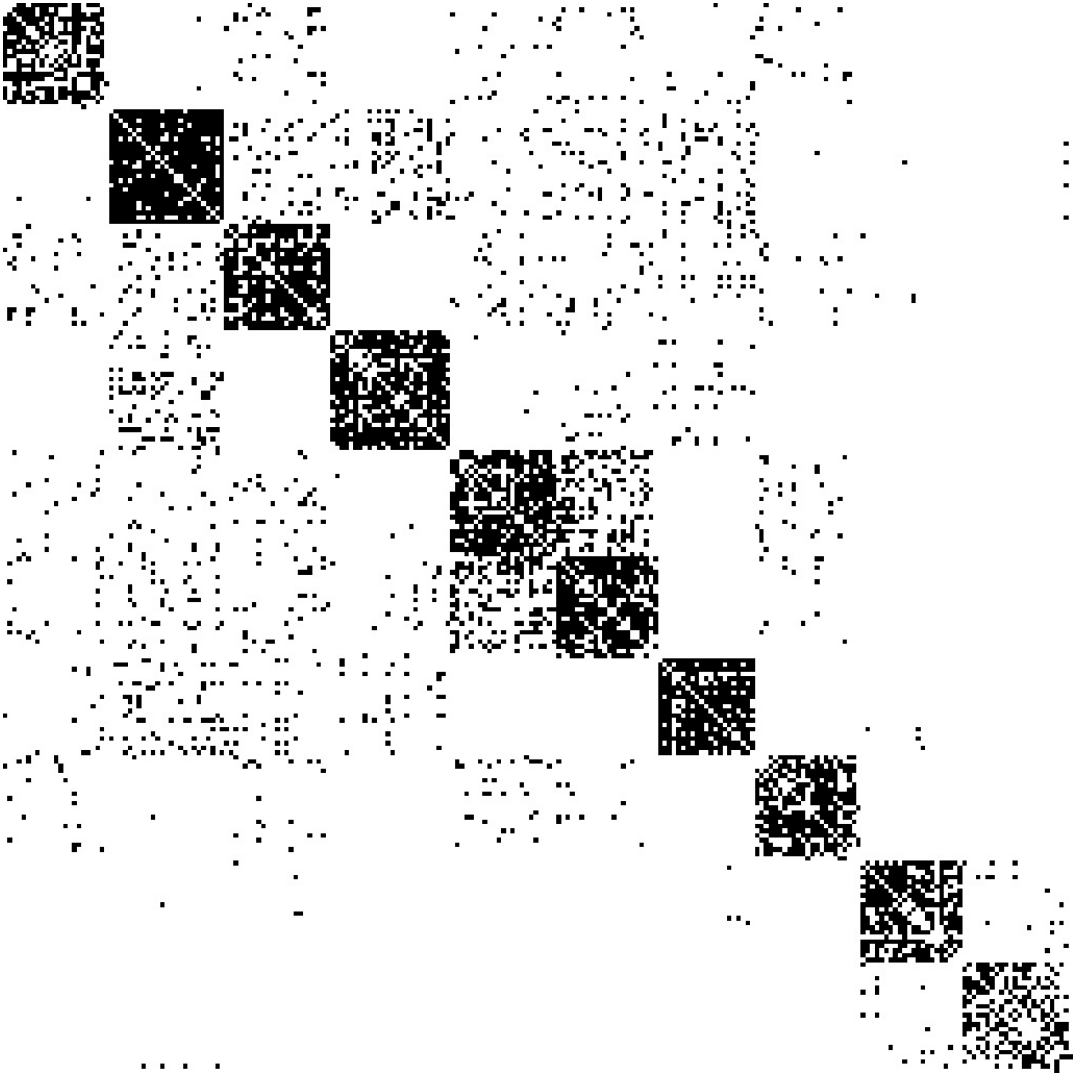}}}}
    \hspace*{0.4pc}
    \framebox{\parbox{.31\textwidth}{{\includegraphics[width=0.31\textwidth]
          {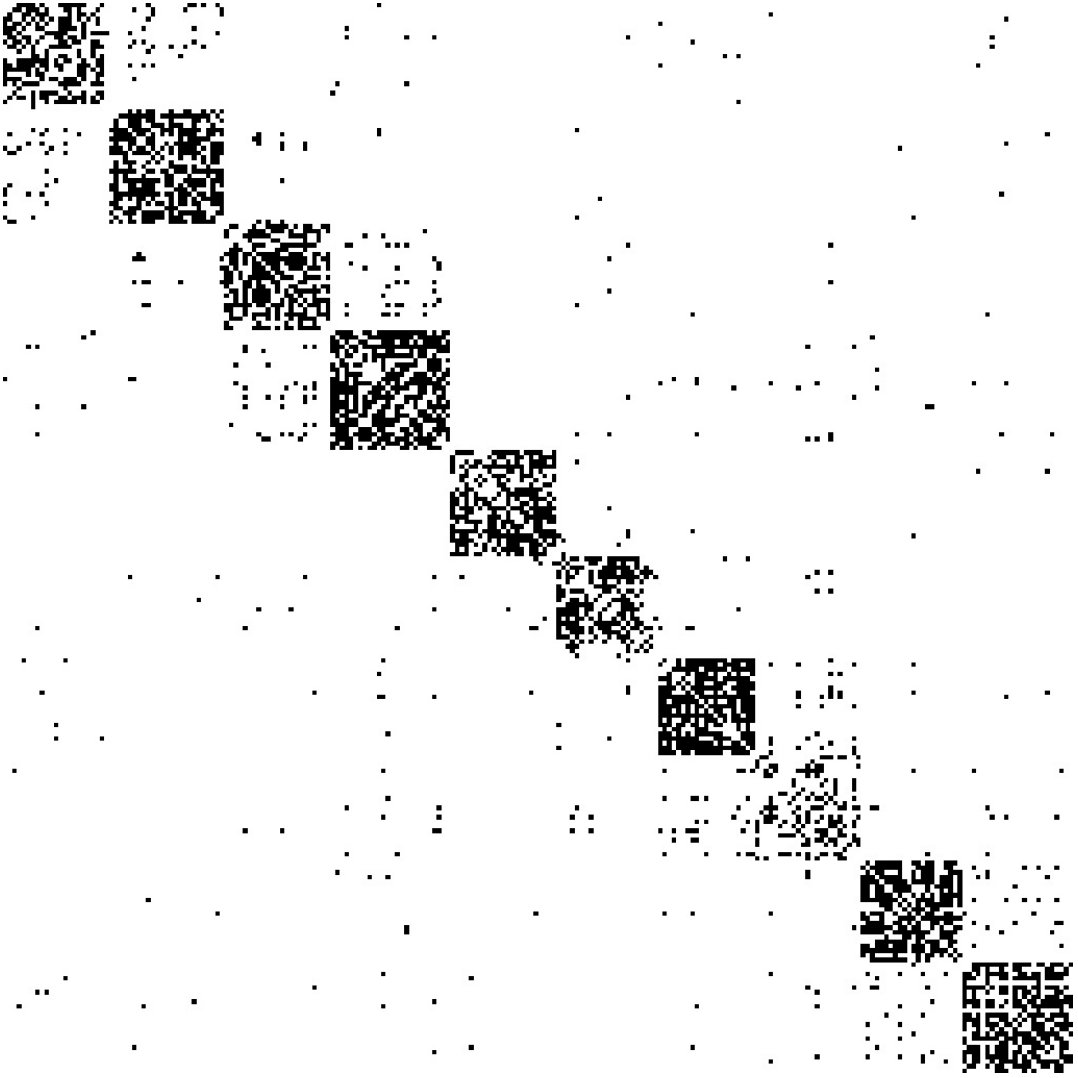}}}}
  }
  \centerline{
    \includegraphics[width=0.33\textwidth]{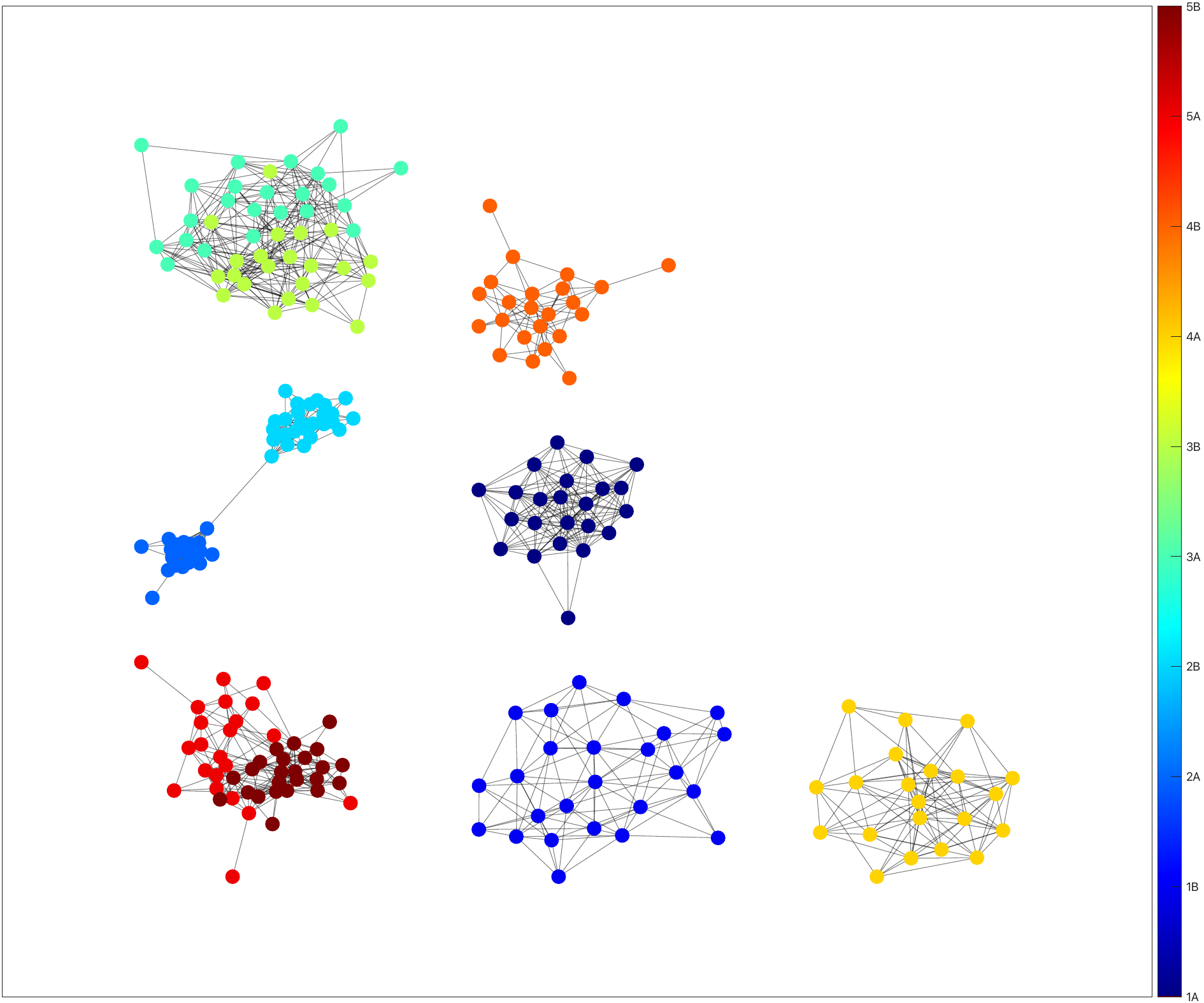}
    \includegraphics[width=0.33\textwidth]{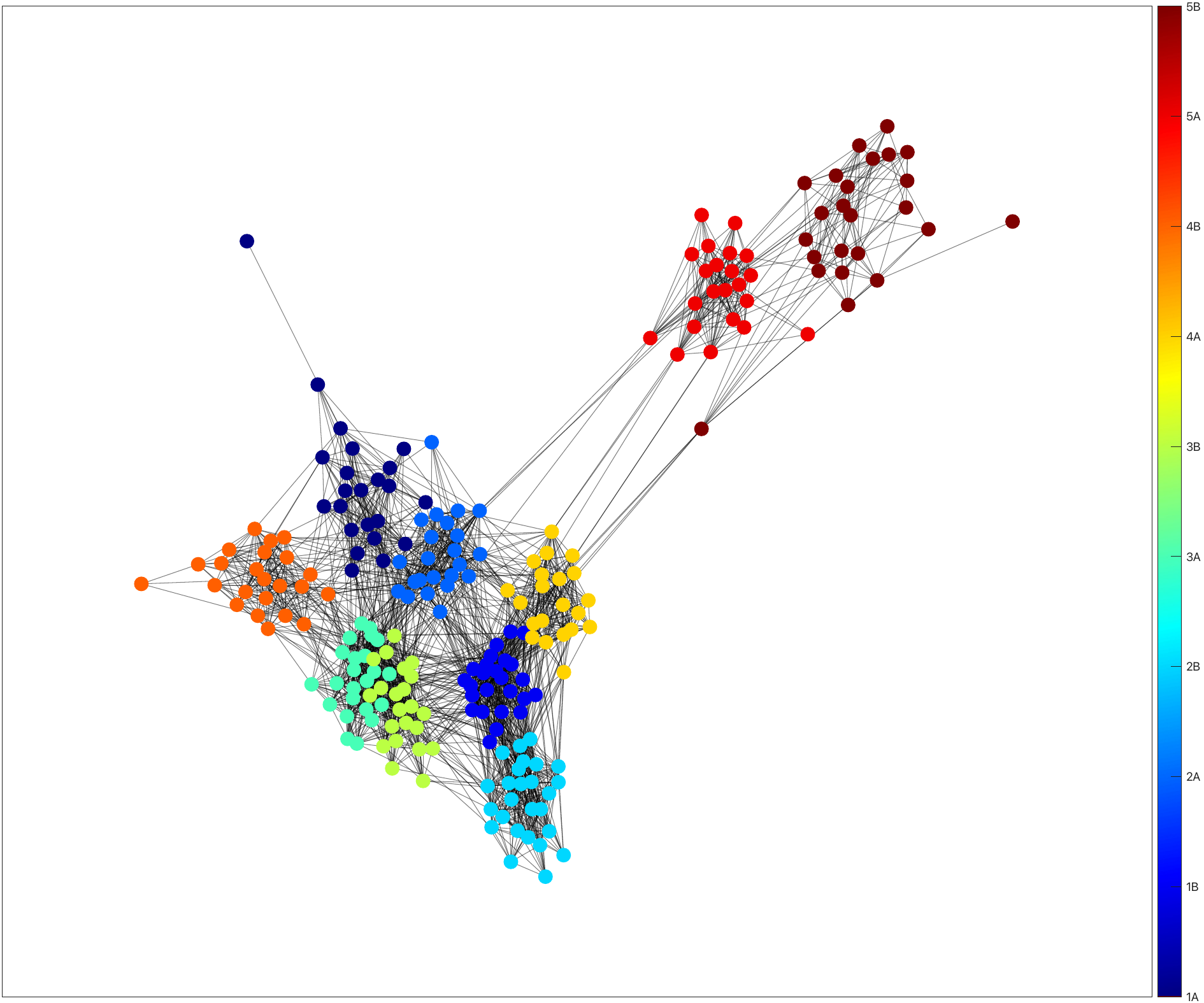}
    \includegraphics[width=0.33\textwidth]{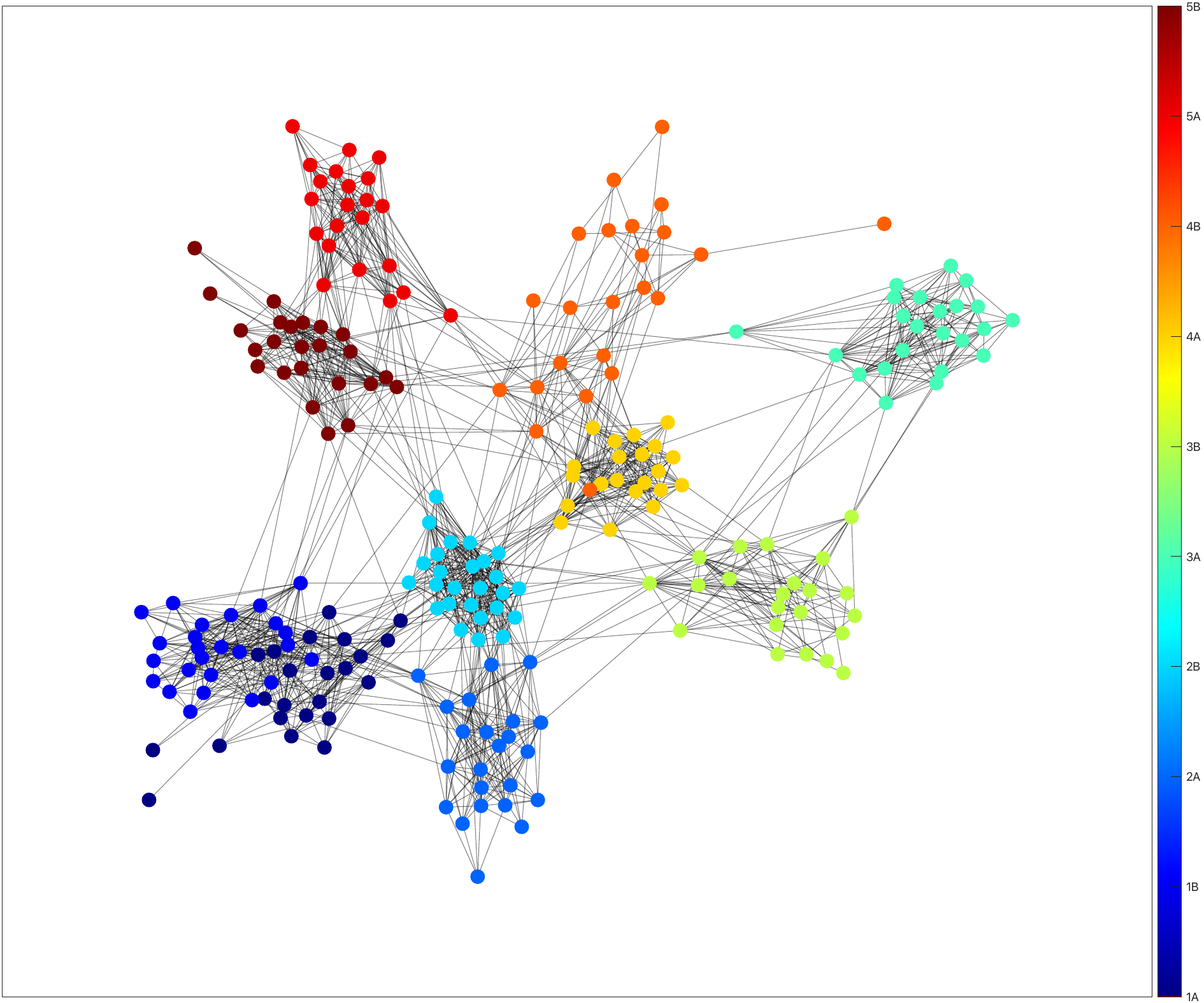}
  }
  \centerline{\hfill 9:00 AM \hfill \hfill 10:30 AM \hfill \hfill 12:00 PM \hfill}
  \centerline{\hfill beginning of the day\hfill \hfill morning recess\hfill\hfill end of morning
    period \hfill}
  \caption{Dynamic face-to-face contact graphs (top: adjacency matrix; bottom: graphical 
    representation). The nodes in the adjacency matrices are grouped from class 1A to class
    5B. Each node is a student; the color of the node encodes the class of the student (see
    colorbar). From left to right: beginning of the school day (9:00 AM); morning recess
    (10:30 AM), and end of morning period before lunch (12:00
    PM).\label{le_matin_a_l_ecole}}
\end{figure}
\noindent 3:30 -- 4:00 PM; lunch periods: 12:00 PM-- 1:00 PM, and 1:00 --
2:00 PM) punctuate the school day and trigger significant changes in connectivity and
topology of the contact graphs (see \cref{le_matin_a_l_ecole}). 

To facilitate the interpretation of the community structure, we aggregate the graphs over a
time window of $40$ minutes. The graphs at 9:00 AM (see \cref{le_matin_a_l_ecole}-left)
display the grouping of the two classes of the same grade (second, third, and fifth) caused
by a common grade-dependent activity. At 10:30 AM all students mix during recess (see
\cref{le_matin_a_l_ecole}-center). We note that students are preferably in contact with
other students from the same grade. A similar connectivity pattern is discernible before the
lunch break (see \cref{le_matin_a_l_ecole}-right).

\noindent The graphs displayed in \cref{le_matin_a_l_ecole} are examples of real-life graphs
that can be well approximated with stochastic block models. Blocks associated with denser
connectivity are clearly visible in the adjacency matrices of the graphs at 9:00 AM, 10:30
AM, and 12:00 PM (see top row of \cref{le_matin_a_l_ecole}). We take note of the merging of
the fifth and sixth blocks, and the ninth and tenth blocks, at 9:00 AM. This is also visible
in the graphical depiction of the graphs (see left of top and bottom rows of
\cref{le_matin_a_l_ecole}). These contact graphs exhibit intricate dynamics and complex
changes in connectivity and structural properties.

A more detailed analysis reveals that the distribution of eigenvalues of the
normalized graph Laplacian of these graphs appear similar to the theoretical distributions
of the eigenvalues of the normalized graph Laplacian of the stochastic block model (see
\cref{les_valeurs_propres}). Specifically, both distributions reveal the presence of a
bump-shaped, centered around 1, which is usually referred as the {\em bulk}, and which
contains most eigenvalues. The presence of the bulk in this dataset is created by the
stochastic nature of the graphs. A similar bulk is present in the empirical spectral
distribution of the stochastic block model
\cite{athreya22,avrachenkov22,chakrabarty20,oliveira09,zhang14}; see also
\cref{lambda_inthelimit} in \cref{informal_statements}. Starting at 0, there are ten
eigenvalues that are separated (this phenomenon is more visually
striking in the morning distribution) from the bulk in the morning and
afternoon distributions (see left of both distributions in \cref{les_valeurs_propres}),
which is a signature of the stochastic block model.
\begin{figure}[H]
  \noindent\centerline{
    \includegraphics[width=0.4\textwidth]{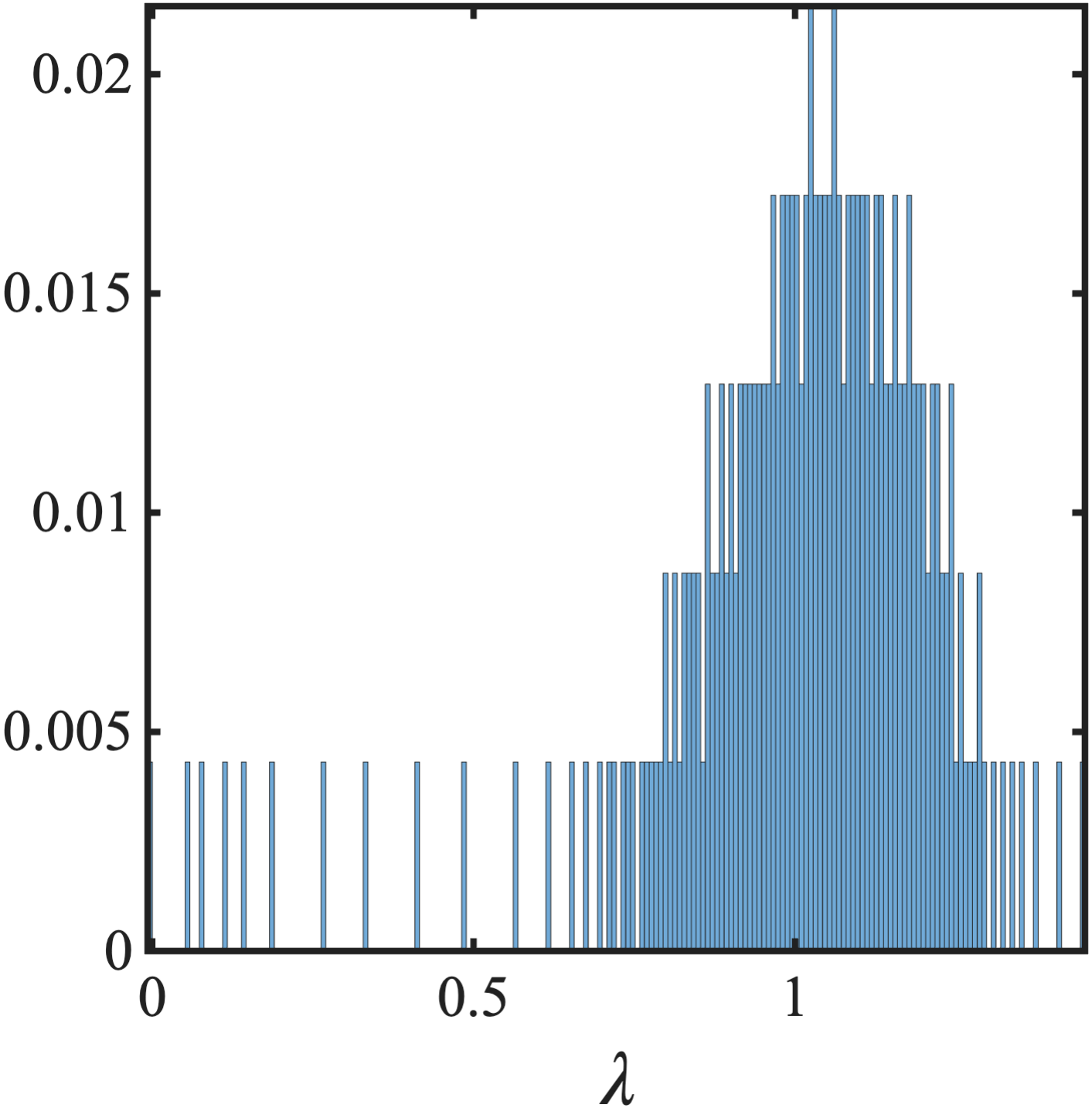}
    \hspace*{1pc}
    \includegraphics[width=0.4\textwidth]{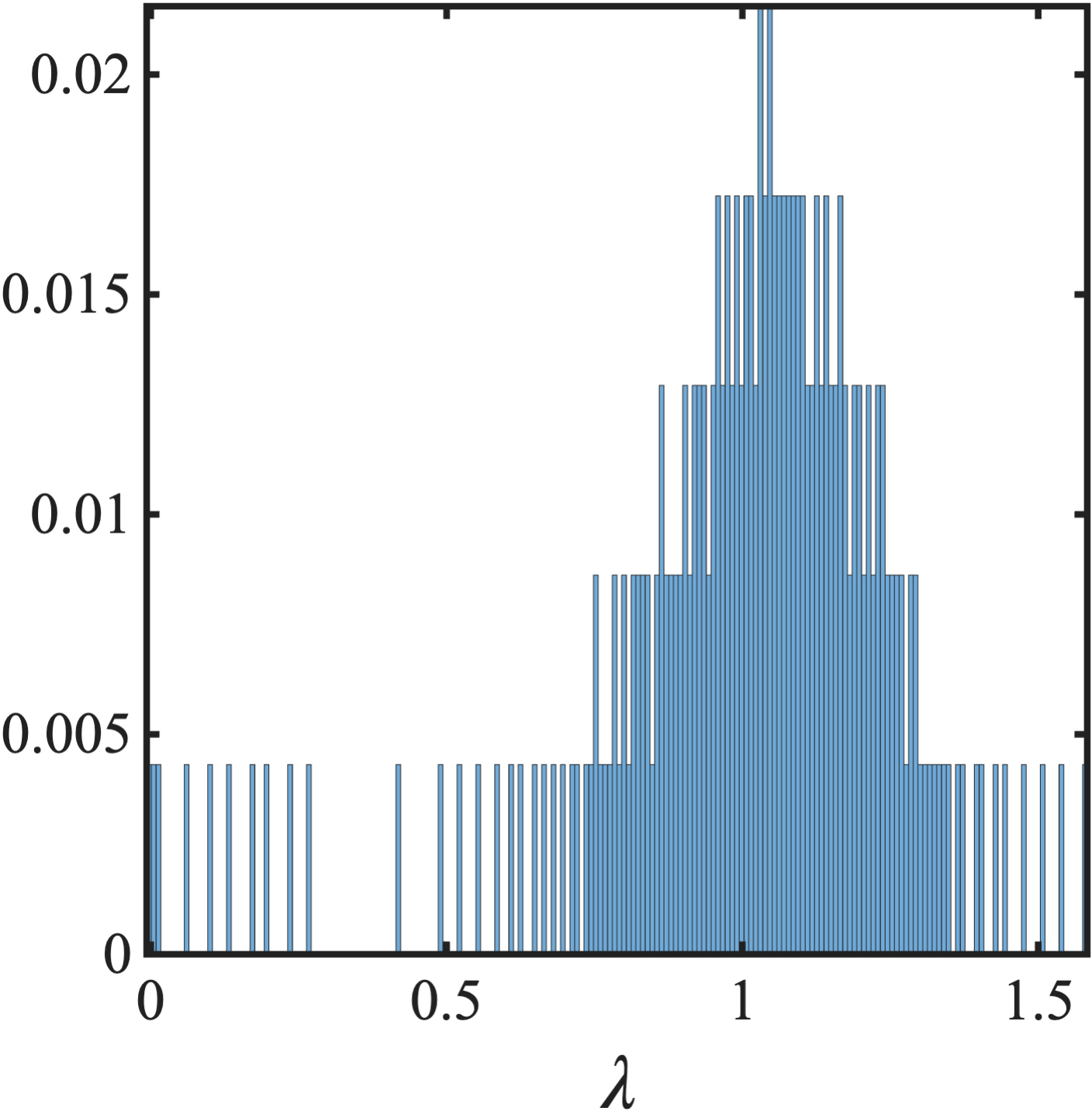}
  }
  \centerline{morning \hspace{8pc} afternoon}
  \caption{Distribution of the eigenvalues of $\cL$ in the morning (left) and afternoon
    (right). For both distributions the ten smallest eigenvalues are separated from the bulk.
    \label{les_valeurs_propres}}
\end{figure}
\subsection{Content of the paper: overview of the results}
The main contribution of this work is a fast algorithm to compute the barycentre of a set of
graphs with community structure. The barycentre is determined using a Laplacian spectral
pseudo-distance. We solve the problem defined by \cref{le_gros_probleme} by relaxing the
optimization problem \cref{def-mean-spectral} and taking

  { $\blb(\fmpr) = \sE{\blb}$.}

We then construct $\bPsi$ using the large library of Soules bases
\cite{eubanks10,soules83}, which were designed specifically to solve similar inverse
eigenvalue problems.
This work is significant because it opens the door to the design of
new spectral-based graph synthesis \cite{baldesi19,shine19} that have theoretical
guarantees. We publicly share our code to facilitate future work \cite{meyer25}.
\section{A tour of the key ideas
  \label{informal_statements}}
To help guide the intuition of the reader, we highlight the key ideas of our
approach. Except for Idea~2, we omit the proofs; \cref{BSB-section}  comprises the rigorous
execution of these ideas. 

We provide a description of our approach in the most restricted context wherein we can
derive proofs for the theorems: the graphs in the sample are random realizations of a
balanced stochastic block model, $\sbm{p}{q}{n}$. Our experiments (see \cref{experiments})
demonstrate that in practice our approach works beyond the controlled environment of
balanced stochastic block models; these experiments suggest that our theoretical analysis
could be extended to a larger class of community graphs.\\

  ~\\
        {\noindent \bfseries Idea~1.} 
        To solve \cref{le_gros_probleme}, we first derive the expression of $\cLbar$ in the case
        where $\pr$ is a general $\sbm{\bp}{q}{n}$. A simple calculation yields
        \begin{equation}
          \cLbar_{ij} =  
          \begin{cases}
            \displaystyle - \frac{p_m}{\dbar_m} & \text{if}\; (i,j) \in B_m\times B_m, \\
            1  & \text{if}\;  i = j,\\
            \displaystyle - \frac{q}{\sqrt{\dbar_m\dbar_\mpr}} & \text{if}\; (i,j) \in B_m\times B_\mpr, m \ne \mpr,
          \end{cases}
          \label{expected_normalized_Laplacian}
        \end{equation}
        where $\dbar_m \eqdef \E{d_{ii}} = \lon{B_m}\big\{p_m + (n/\lon{B_m} -1)q\big\}$ for all nodes $i \in
        B_m$.
        \vspace*{2em}\\
                {\noindent \bfseries Idea~2.} 
                In the case where $\pr$ is a general $\sbm{\bp}{q}{n}$, the first $M$ eigenvalues of $\cL$, $0 =
                \lambda_1 \le \cdots \le \lambda_M$ converge to deterministic values as the size of the
                graph goes to infinity
                \cite{athreya22,avrachenkov22,chakrabarty20,fan22,lowe24,lu13,rohe11,tang18}. We propose 
                simplifying \cref{le_gros_probleme} by substituting $\sE{\lambda_k}$ for the large graph
                size deterministic limits.

                Unfortunately, closed form expressions of these limits can only be derived for balanced
                $\sbm{p}{q}{n}$ composed of $M$ blocks \cite{fan22,lowe24,rohe11}. For this reason, hereafter we assume
                that $\pr = \sbm{p}{q}{n}$. In \cite{lowe24}, the authors provide the following estimates of
                the eigenvalues $\blb(\cL)$ of the normalized Laplacian.
                \begin{lemma}[Proposition 4.3 of \cite{lowe24}]
                  \label{lemmaLowe24}
                  Let $\bA$ be a random realization of $\sbm{p}{q}{n}$, with normalized Laplacian matrix
                  $\cL$. Then the eigenvalues of $\cL$ are given by
                  \begin{equation}
                    \lambda_m = l_m  + \O{\sqrt{\frac{\log n}{n}}},
                  \end{equation}
                  \noindent asymptotically almost-surely, where
                  \begin{equation}
                    l_m  =
                    \begin{cases}
                      0  & \text{if} \mspace{8mu} m=1,\\
                      \displaystyle \frac{Mq}{p + (M-1)q} & \text{if} \mspace{8mu} 2  \le m \le M,\\   
                      1 & \text{if} \mspace{8mu} M+1 \le m \le n.
                    \end{cases}
                    \label{lambda_inthelimit}
                  \end{equation}
                \end{lemma}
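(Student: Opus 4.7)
The plan is to prove the lemma via a standard two-step strategy: (i) compute exactly the spectrum of the population Laplacian $\cLbar$, then (ii) transfer those eigenvalues to $\cL$ by a high-probability operator-norm concentration bound combined with Weyl's inequality.

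For step (i), I would exploit the balanced structure of the model. Since every row of $\bP = \E{\bA}$ has the same sum $\dbar = (n/M)(p + (M-1)q)$, we have $\Dbar = \dbar\,\id$, and therefore $\cLbar = \id - \bP/\dbar$. The matrix $\bP$ has an $M \times M$ block structure in which each block of size $(n/M)\times(n/M)$ equals either $p$ or $q$ times an all-ones matrix. A direct Kronecker-product diagonalisation then yields exactly three distinct eigenvalues for $\bP$: the Perron value $\dbar$ with eigenvector $\one$; the value $(p-q)n/M$ with multiplicity $M-1$, whose eigenvectors are block-constant vectors orthogonal to $\one$; and $0$ with multiplicity $n-M$, whose eigenvectors have zero mean within each block. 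Dividing by $\dbar$ and subtracting from $1$ produces precisely the values $0$, $Mq/(p+(M-1)q)$ with multiplicity $M-1$, and $1$ with multiplicity $n-M$, which match the claimed limits $l_m$ (in ascending order, since $p>q$ forces $Mq/(p+(M-1)q)<1$).

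For step (ii), I would establish the operator-norm bound $\|\cL - \cLbar\|_{\mathrm{op}} = \O{\sqrt{\log n/n}}$ asymptotically almost-surely. A Chernoff/Bernstein estimate applied to the independent Bernoulli entries in each row of $\bA$ yields $\|\bD - \Dbar\|_{\mathrm{op}} = \O{\sqrt{n \log n}}$ with high probability; a matrix-Bernstein inequality gives $\|\bA - \bP\|_{\mathrm{op}} = \O{\sqrt{n\log n}}$. These combine through the telescoping identity
\begin{equation*}
  \cL - \cLbar = -\bigl(\bD^{-1/2} - \Dbar^{-1/2}\bigr)\bA\bD^{-1/2} - \Dbar^{-1/2}(\bA - \bP)\bD^{-1/2} - \Dbar^{-1/2}\bP\bigl(\bD^{-1/2} - \Dbar^{-1/2}\bigr),
\end{equation*}
in which each factor is controlled via the two concentration results above together with $\|\Dbar^{-1/2}\|_{\mathrm{op}} = 1/\sqrt{\dbar}$ and $\dbar = \Theta(n)$ in the dense regime considered here. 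Weyl's inequality then gives $|\lambda_m(\cL) - \lambda_m(\cLbar)| \le \|\cL - \cLbar\|_{\mathrm{op}}$ for every $m$, which is the conclusion.

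The main technical obstacle is step (ii): because the normalization $\bD^{-1/2}$ is a random matrix-valued function of $\bA$ that does not commute with $\bA$, a naive application of matrix concentration to $\bA - \bP$ is insufficient. The telescoping splitting above is the right device because it decouples the random normalization from the centered adjacency matrix, at the price of having to control the deviation $\bD^{-1/2} - \Dbar^{-1/2}$ uniformly. This in turn requires a pointwise Chernoff bound on the degrees plus the elementary estimate $|d^{-1/2} - \dbar^{-1/2}| \lesssim |d-\dbar|\,\dbar^{-3/2}$ on a high-probability event where no degree is too small. Tracking the constants carefully yields the $\sqrt{\log n/n}$ rate, and this is exactly the route followed in the references cited in the excerpt (\cite{oliveira09,lu13,le18}), with Proposition 4.3 of \cite{lowe24} instantiating the calculation in the balanced SBM setting.
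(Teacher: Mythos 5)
The paper does not actually prove this statement: it is imported verbatim as Proposition~4.3 of \cite{lowe24}, so there is no internal proof to compare against. Your argument is therefore a genuine addition rather than an alternative route, and it is essentially correct. Step~(i) is right: for the balanced model $\Dbar = \dbar\,\id$ with $\dbar = (n/M)(p+(M-1)q)$, and the eigenvalues of $\bP$ are $\dbar$ (eigenvector $\one$), $(p-q)n/M$ with multiplicity $M-1$ (block-constant vectors orthogonal to $\one$), and $0$ with multiplicity $n-M$, which after the map $\lambda \mapsto 1 - \lambda/\dbar$ reproduce the three values of $l_m$ exactly. Step~(ii), the telescoping decomposition of $\cL - \cLbar$ plus degree concentration, matrix Bernstein, and Weyl, is the standard concentration route of \cite{oliveira09,lu13,le18} and does yield the claim. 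The one point you should make explicit rather than leave implicit is the density assumption: the rate $\O{\sqrt{\log n / n}}$ requires $\dbar = \Theta(n)$, i.e.\ $p,q$ bounded away from $0$. In the sparse regime the paper actually uses in its experiments ($p \asymp (\log n)^2/n$), the bound $\nrm{\bA - \bP}{\mathrm{op}} = \O{\sqrt{\dbar \log n}}$ only gives $\nrm{\cL - \cLbar}{\mathrm{op}} = \O{\sqrt{\log n/\dbar}}$, which is far weaker than $\sqrt{\log n/n}$; so the lemma as stated (and as used downstream in \cref{leprobleme-whp}) is implicitly a dense-regime statement, and your parenthetical ``in the dense regime considered here'' is doing real work that deserves to be promoted to a hypothesis.
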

                This result informs the role of the sample eigenvalues $\sE{\lambda_m}$. We expect that only
                the lowest $M$ eigenvalues of $\cL$ contribute to the recovery of the geometry and the edge
                density of $\sbm{p}{q}{n}$. Conversely, we anticipate that the largest $\sE{\lambda_m}$,
                $M+1 \le m \le n$, which are created by the stochastic nature of the model and appear in the
                bulk, \cite{avrachenkov22,chakrabarty20,chaudhuri12,lee14,zhang14}, do not serve a useful
                purpose and only contaminate the reconstruction of $\Lmu$ in \cref{Laplacian_of_barycentre}.

                To alleviate this issue, we propose to replace the $n-M$ largest sample eigenvalues
                $\sE{\lambda_m}$ with their limits $l_m$, and define a ``regularized'' set of eigenvalues as
                follows,
                \begin{equation}
                  \hlm_m =
                  \begin{cases}
                    \sE{\lambda_m} & \text{if} \; 1 \le m \le M,\\
                    l_m & \text{if} \; M+1 \le m \le n.\\
                  \end{cases}
                  \label{regularized}
                \end{equation}
                In practice, one needs to estimate $M$, the number of eigenvalues outside the
                bulk. Fortunately, many estimators are available \citeg{saldana17,fan20,yan18}. We use the
                regularized eigenvalues $\{\hlm_m\}$ to devise the following estimator of $\Lmu$,
                \begin{equation}
                  \hLM \eqdef \sum_{k=1}^n \hlm_k \bpsi_k\bpsi_k^T =
                  \sum_{k=1}^M \sE{\lambda_k} \bpsi_k\bpsi_k^T +
                  \mspace{-8mu} \sum_{k=M+1}^n \mspace{-4mu} l_k \bpsi_k\bpsi_k^T.
                  \label{truncated_Laplacian}
                \end{equation}
                The following lemma provides sufficient conditions for $\hLM$ to converge (in the limit of
                large graph size, $n \rightarrow +\infty$) toward $\cLbar$ with high probability.
                \begin{lemma}
                  \label{leprobleme-whp}
                  If the eigenvectors
                  $\bPsi= \begin{bmatrix} \bpsi_1 & \cdots & \bpsi_n \end{bmatrix}$ satisfy
                  \begin{equation}
                    \left\{
                    \begin{aligned}
                      & \bpsi_1 = n^{-1/2} \one, \\
                      & \sum_{k=1}^M \bpsi_k\bpsi_k^T (i,j) = 
                      \begin{cases}
                        M/n & \text{if} \mspace{8mu}  \exists \; m \in [M], \; (i,j) \in B_m\times B_m, \\
                        0 & \text{otherwise,}
                      \end{cases}\\
                      &\sum_{k=1}^n \bpsi_k\bpsi_k^T = \id,    
                    \end{aligned}
                    \right. \label{le_gros_probleme3}
                  \end{equation}  
                  then we have
                  \begin{equation}
                    \nrm{\hLM - \cLbar}{F} = \o{1},
                  \end{equation}
                  with probability converging to 1 as the graph size $n \rightarrow +\infty$. In other words,
                  if we replace the sample eigenvalues $\sE{\lambda_m}$ with the ``regularized'' eigenvalues $\hlm_m$ (defined
                  by \cref{regularized}), then the matrix $\fmpr$ corresponding to $\hLM$ satisfies
                  \cref{le_gros_probleme}.
                \end{lemma}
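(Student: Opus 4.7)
The plan is to show that conditions (i)--(iii) on $\bPsi$ force it to simultaneously diagonalize both $\cLbar$ and $\hLM$, reducing the Frobenius error to a sum of $M$ squared eigenvalue discrepancies that can be controlled by \cref{lemmaLowe24}.

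Step 1: Verify that the three conditions identify $\bPsi$ as an eigenbasis of $\cLbar$, with eigenvalues exactly the $l_k$ of \cref{lambda_inthelimit}. In the balanced case, $\dbar_m = \dbar$ for all $m$ and $\cLbar = \id - \bP/\dbar$. A direct calculation on \cref{expected_normalized_Laplacian} shows that $\bP/\dbar$ acts as the identity on $\one$ (since $(\bP/\dbar)\one = \one$), as the scalar $(p-q)/(p+(M-1)q)$ on the $(M-1)$-dimensional subspace of block-constant mean-zero vectors, and as zero on the orthogonal complement of the span of block indicators. The first condition places $\bpsi_1 = n^{-1/2}\one$ in the kernel of $\cLbar$. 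The second condition, together with the entry pattern $M/n$ on same-block pairs and $0$ off-block, identifies $\sum_{k=1}^M \bpsi_k\bpsi_k^T$ as the orthogonal projection onto the span of block indicators. Combined with the third condition (orthonormality), this forces $\bpsi_2,\ldots,\bpsi_M$ to span the block-constant mean-zero subspace (eigenvalue $l := Mq/(p+(M-1)q)$) and $\bpsi_{M+1},\ldots,\bpsi_n$ to span its orthogonal complement (eigenvalue $1$). Hence
\begin{equation*}
  \cLbar \;=\; \sum_{k=1}^n l_k \,\bpsi_k\bpsi_k^T.
\end{equation*}

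Step 2: Subtract the two spectral expansions. Because $l_k = 1$ for $k > M$, the high-frequency terms of $\hLM$ and $\cLbar$ cancel exactly, leaving
\begin{equation*}
  \hLM - \cLbar \;=\; \sum_{k=1}^M \bigl(\sE{\lambda_k} - l_k\bigr)\,\bpsi_k\bpsi_k^T.
\end{equation*}
Orthonormality of $\bPsi$ makes $\{\bpsi_k\bpsi_k^T\}$ pairwise orthogonal rank-one matrices of unit Frobenius norm, so
\begin{equation*}
  \nrm{\hLM - \cLbar}{F}^2 \;=\; \sum_{k=1}^M \bigl(\sE{\lambda_k} - l_k\bigr)^2.
\end{equation*}

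Step 3: Apply \cref{lemmaLowe24}. For each $t \in [T]$ and each $k \le M$, the lemma gives $\lambda_k(\cLt) - l_k = \O{\sqrt{\log n /n}}$ asymptotically almost surely. A union bound over the $TM$ corresponding events preserves this rate for the average $\sE{\lambda_k} = T^{-1}\sum_t \lambda_k(\cLt)$. Summing $M$ squared error terms and taking the square root yields $\nrm{\hLM - \cLbar}{F} = \O{\sqrt{M \log n / n}} = \o{1}$ with probability tending to $1$, provided $M = \o{n/\log n}$ (which holds under the SBM regime of \cref{lemmaLowe24}).

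Step 1 is the only non-routine step: it requires recognizing that the three conditions together are precisely strong enough to diagonalize $\cLbar$ with the known limiting eigenvalues. Once that is in hand, Steps 2 and 3 are a clean algebraic cancellation followed by a standard concentration argument, so I do not anticipate any genuine obstacle.
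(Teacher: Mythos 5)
Your proof is correct and follows essentially the same route as the paper: both identify $\cLbar = \sum_{k=1}^n l_k\,\bpsi_k\bpsi_k^T$ from the structural conditions on $\bPsi$, reduce the difference to the first $M$ rank-one terms, and control it via the eigenvalue concentration of \cref{lemmaLowe24}, arriving at the same $\O{\sqrt{M\log n/n}}$ bound. Your Step 1 phrases the paper's ``simple calculation'' as a spectral decomposition of $\cLbar$, but the content is identical.
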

                \begin{proof}
                  We first specialize \cref{expected_normalized_Laplacian} for $\sbm{p}{q}{n}$, to get
                  \begin{equation}
                    \cLbar = \id - \displaystyle \frac{M}{n(P + (M-1)q)} \bP,
                  \end{equation}
                  where $\bP$ is defined in \cref{edge_probability}. We then substitute in
                  \cref{truncated_Laplacian} $\sE{\lambda_m}, \; 1 \le m \le M$ for the large graph size
                  estimate given by \cref{lambda_inthelimit}. We obtain with probability converging to 1 as
                  the graph size $n \rightarrow +\infty$,
                  \begin{equation}
                    \hLM = \sum_{k=1}^n l_k \bpsi_k\bpsi_k^T  + \eps_n,
                  \end{equation}
                  where $\eps_n = \O{\sqrt{\displaystyle \frac{\log n}{n}}} \bigg\{\sum_{k=1}^M \bpsi_k\bpsi_k^T\bigg\}$.\\

                  \noindent Now,
                  \begin{equation}
                    \begin{aligned}
                      \sum_{k=1}^n l_k \bpsi_k\bpsi_k^T
                      & =  \frac{Mq}{p + (M-1)q} \sum_{k=2}^M \bpsi_k\bpsi_k^T + \sum_{k=M+1}^n  \bpsi_k\bpsi_k^T\\
                      & =  \frac{Mq}{p + (M-1)q} \sum_{k=1}^M \bpsi_k\bpsi_k^T  - \frac{Mq}{p + (M-1)q} \bpsi_1\bpsi_1^T 
                      - \sum_{k=1}^M  \bpsi_k\bpsi_k^T + \sum_{k=1}^n  \bpsi_k\bpsi_k^T.
                    \end{aligned}
                  \end{equation}
                  Whence
                  \begin{equation}
                    \hLM =  \sum_{k=1}^n \bpsi_k \bpsi_k^T
                    - \bigg\{
                    \frac{p-q}{p + (M-1)q} \big( \sum_{k=1}^M \bpsi_k \bpsi_k^T\big)
                    + \frac{Mq}{p + (M-1)q} \bpsi_1\bpsi_1^T
                    \bigg\}
                    + \eps_n,
                  \end{equation}
                  with high probability. If the eigenvectors $\bPsi= \begin{bmatrix} \bpsi_1  \cdots 
                    \bpsi_n \end{bmatrix}$ satisfy \cref{le_gros_probleme3} then a simple calculation shows
                  that
                  \begin{equation}
                    \hLM - \cLbar = \eps_n,
                  \end{equation}
                  with probability converging to 1 as the graph size $n \rightarrow +\infty$. Finally, since
                  $\bPsi$ is an orthonormal basis, we get
                  \begin{equation}
                    \nrm{\eps_n}{F} = \O{\sqrt{\displaystyle \frac{\log n}{n}}} \nrm{\sum_{k=1}^M\bpsi_k
                      \bpsi_k^T}{F} = \O{\sqrt{\displaystyle \frac{\log n}{n}}} \sqrt{M}.
                  \end{equation}
                  We conclude that $\nrm{\hLM - \cLbar}{F} = \o{1}$, with probability converging to 1 as the
                  graph size $n \rightarrow +\infty$.
                \end{proof}

~\\
{\noindent \bfseries Idea~3.}
Given a sample mean estimate $\mA$ of $\E{\pr}$, we design an algorithm that explores
the library of Soules bases (which is organized as a binary tree \cite{elsner98}), and
returns an orthonormal Soules basis $\bPsi = \begin{bmatrix} \bpsi_1\cdots
  \bpsi_n \end{bmatrix}$, that satisfies \cref{le_gros_probleme3} 

We briefly describe the ideas behind the construction of the sequence of $\bpsi_k$ in the
library of Soules basis. Because the support of $\sum_{k=1}^M \bpsi_k\bpsi_k^T$ is formed by
the $M$ blocks of the $\sbm{p}{q}{n}$ (see the second condition in
\cref{le_gros_probleme3}), we design $\bpsi_2, \bpsi_3, \ldots,\bpsi_M$ so that they are
constant on each block $B_m$; and the zero-crossing of $\bpsi_k$ is aligned with the jumps
between the blocks in $\mA$.
\begin{figure}[H]
  \centerline{
    \includegraphics[width=0.3\textwidth]{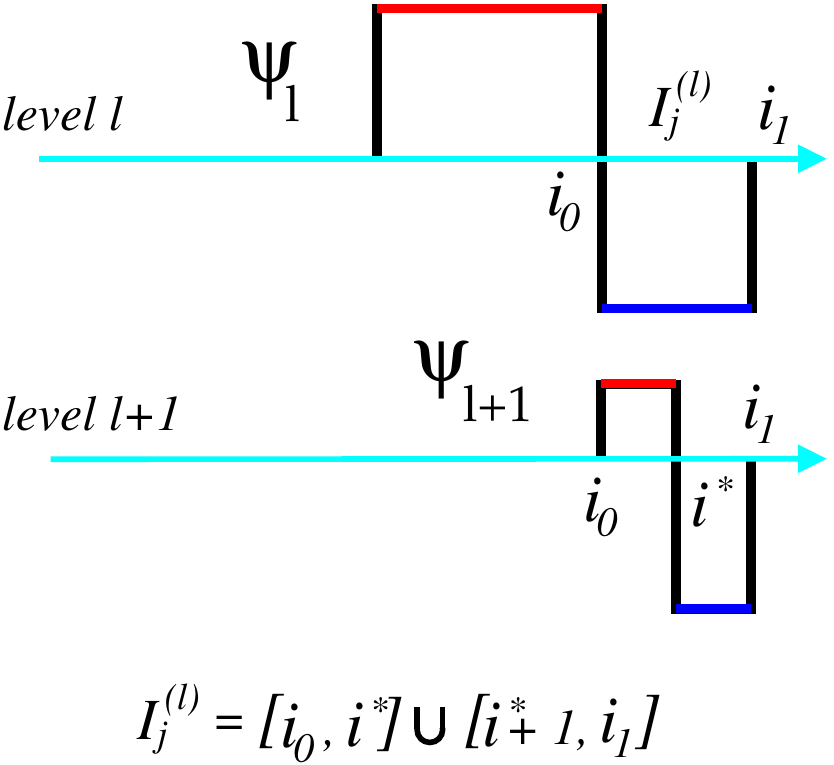}
    \includegraphics[width=0.2\textwidth]{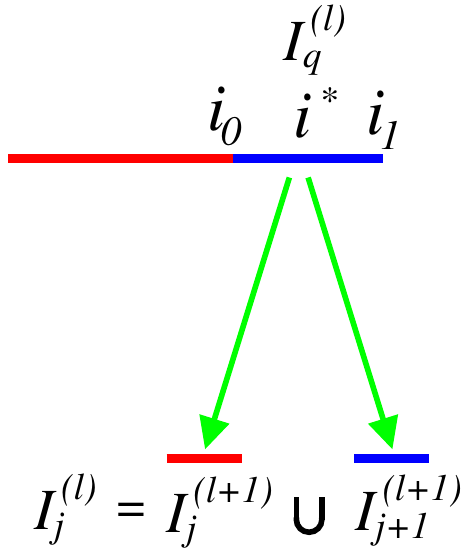}
  }
  \caption{Left: $\bpsi_{l+1}$ is created by splitting the block of indices
    $I^{(l)}_q=[i_0,i_1]$ at level $l$ into two sub-blocks, $[i_0,\ist] \cup [\ist+1,i_1]$ at level
    $l+1$.  Right: a node in the Soules binary tree is triggered by the splitting of 
    $[i_0,i_1]= [i_0,\ist] \cup [\ist+1,i_1]$.
    \label{one-iteration}}
\end{figure}
\noindent The construction of the Soules vectors starts at the coarsest scale with $\bpsi_1=
\sqrt{n}\one$.  The next Soules vector, $\bpsi_2$, is designed to detect the largest
gradient between any pair of blocks $B_m \times B_m$ and $B_\mpr \times B_\mpr$. In terms of
$\mA$, the zero-crossing of $\bpsi_2$ is carefully aligned with the boundaries between two
blocks of $\mA$ associated with the largest jump in the edge probability. Whence we can
choose $\bpsi_2$ to maximize $\lvert \ipr{\bpsi_2 \bpsi_2^T}{\mA}\rvert^2$. The construction
of the remaining $\bpsi_k$ proceeds iteratively by detecting all the boundaries between the
blocks $B_m \times B_m$. An added benefit of working with Soules bases is that the condition
$\sum_{k=1}^n \bpsi_k\bpsi_k^T = \id$ comes for free \cite{elsner98}.
\subsection{Organization of the paper}

   For the sake of completeness, we review in \cref{les_bases_soules} the
  concept of Soules bases and their key theoretical properties. The reader who is already
  familiar with Soules bases can skip to \cref{BSB-section} wherein we describe the
  algorithm that computes the Soules basis solution to \cref{le_gros_probleme3}. The
  theoretical properties of \cref{BSB} is presented in
  \cref{la_section_des_details}. Finally, we report results of experiments on synthetic and
  real-life datasets in \cref{experiments}.  In \cref{la-discussion}, we discuss the
  implications of our work. The proofs of some technical lemmata are left aside in
  \cref{les-preuves}.
\section{Soules Bases: definition and properties
  \label{les_bases_soules}
}
Soules bases \cite{elsner98,soules83} were invented to provide a solution to the following
symmetric nonnegative inverse eigenvalue problem: find $\bPsi \in \OG(n)$ such that
$\bPsi\diag{\lambda_1,\ldots,\lambda_n}{\mspace{-12mu}\bPsi^T} \in \cS$. Soules bases
provide a large family of solutions to this problem.
\subsection{Definition of Soules Bases}
A Soules bases is an orthogonal matrix that is constructed iteratively by applying a
product of Givens rotations to a fixed vector $\bpsi_1$ with nonnegative entries.  The
construction starts at the coarsest level ($l=1$) with a normalized vector $\bpsi_1$ with
nonnegative entries, whose support is the interval $I^{(1)} = [n]$. Hereunder, our analysis
assumes that we always choose $\bpsi_1 \eqdef \scalebox{0.8}{$\small n^{-1/2}$} \one$.

At any given level $l$, the set $[n]$ is partitioned into $l$ ordered intervals $I^{(l)}_q,
1 \le q \le l$.  When progressing from level $l$ to $l+1$, one chooses an interval,
$I^{(l)}_q=[i_0,i_1)$, and one chooses an index $\ist \in [i_0,i_1]$ and defines
  $I_q^{(l+1)} \eqdef [i_0,\ist]$, and $I_{q+1}^{(l+1)} \eqdef [\ist+1,i_1]$ (see
  \cref{one-iteration}-right).  The split of $I^{(l)}_q$ into $I^{(l+1)}_q$ and
  $I^{(l+1)}_{q+1}$ triggers the construction of the Soules vector $\bpsi_{l+1}$ (see
  \cref{one-iteration}-left), defined by \begingroup \addtolength{\jot}{.25\baselineskip}
  \begin{equation}
    \bpsi_{l+1} (i) \eqdef
    \frac{1}{\norm{\bpsi(i_0:i_1)}}
    \left\{
    \begin{aligned}
      \mspace{14mu} \displaystyle \frac{\norm{\bpsi_1(\ist+1:i_1)}}{\norm{\bpsi_1(i_0:\ist)}}
      \bpsi_1 (i) \mspace{16mu} & \text{if} \mspace{16mu} i_0 \le i \le \ist,\\
      - \displaystyle \frac{\norm{\bpsi_1(i_0:\ist)}}{\norm{\bpsi_1(\ist+1:i_1)}} \bpsi_1
      (i) \mspace{16mu} & \text{if} \mspace{16mu} \ist+1 \le i \le i_1,\\
      0\mspace{128mu}  & \text{otherwise,}
    \end{aligned}
    \right.
    \label{from_l_to_plusone}
  \end{equation}
  \endgroup
  where the vectors $\bpsi_1(i_0:i_1), \bpsi_1(i_0:\ist)$, and $\bpsi_1(\ist+1:i_1)$ are
  $n$-dimensional vectors whose nonzero entries are extracted from $\bpsi_1$ at the corresponding indices,
  \begin{equation}
    \scalebox{0.9}{$
      \begin{aligned}
        \bpsi_1(i_0:i_1) & = [0 \; \cdots \; 0 \mspace{12mu} \bpsi_1(i_0) \; \cdots \;
          \bpsi_1 (\ist) \; \bpsi_1(\ist+1) \; \cdots \; \bpsi_1 (i_1)\; 0  \; \cdots 0]^T,\\
        \bpsi_1(i_0:\ist) & = [0 \; \cdots \; 0 \mspace{12mu} \bpsi_1(i_0) \; \cdots \;
          \bpsi_1 (\ist) \mspace{8mu} 0\cdots\cdots\cdots\cdots\cdots\cdots\cdots
          \cdots \mspace{2mu} 0]^T,\\
        \bpsi_1(\ist+1:i_1) & = [0 \mspace{8mu} \cdots\cdots\cdots\cdots\cdots\cdots\cdots  0  \mspace{8mu} 
          \bpsi_1(\ist+1) \; \cdots \; \bpsi_1 (i_1)\; 0 \; \cdots 0]^T.
      \end{aligned}
      $}
  \end{equation}
 The iterative subdivision process can be described using a binary tree (see
  \cref{the-soules-tree}-right) where a new vector is created at each node that has two
  children. We observe that $\bpsi_k$ and $\bpsi_{k^\prime}$, $k \ne k^\prime$, are
  either nested, or they do not overlap; whence $\ipr{\bpsi_k}{\bpsi_{k^\prime}} = 0$, and
  $\begin{bmatrix} \bpsi_1 \cdots \bpsi_n \end{bmatrix}$ is an orthonormal matrix
  \cite{elsner98}; see \cite{thiele96} for a similar construction of Walsh-Hadamard
  packets and \cite{coifman01} for an analogous construction of complex-valued packets.
  \subsection{Properties of Soules Bases}
  Using \cref{from_l_to_plusone}, we derive the following lemma with a proof by induction.
  \begin{lemma}[See \cite{elsner98}]
    Let $\begin{bmatrix} \bpsi_1 \cdots \bpsi_n\end{bmatrix}$ be a Soules basis
      constructed according to \cref{from_l_to_plusone}. Then,
      \begin{equation}
        \forall m=1,\ldots,n, \mspace{16mu}
        \sum_{k=1}^m \bpsi_k \bpsi_k^T \ge 0,
        \mspace{8mu} \text{and} \mspace{16mu}
        \sum_{k=1}^n \bpsi_k \bpsi_k^T = \id.
        \label{definition-de-Em}
      \end{equation}
  \end{lemma}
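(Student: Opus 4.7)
The plan is to establish the two identities in turn. The completeness relation $\sum_{k=1}^n \bpsi_k \bpsi_k^T = \id$ follows immediately from the observation, made at the end of the previous subsection, that $\bpsi_1,\ldots,\bpsi_n$ form an orthonormal basis of $\R^n$: the equation is simply the resolution of the identity in this basis.

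For the nontrivial assertion---the entrywise nonnegativity of every partial sum---I would prove by induction on $m$ the following stronger invariant:
\begin{equation*}
\sum_{k=1}^m \bpsi_k \bpsi_k^T \;=\; \sum_{q=1}^m \frac{\bpsi_1|_{I^{(m)}_q}\,(\bpsi_1|_{I^{(m)}_q})^T}{\norm{\bpsi_1|_{I^{(m)}_q}}^2},
\end{equation*}
where $\bpsi_1|_I$ denotes the $n$-vector that agrees with $\bpsi_1$ on the interval $I$ and vanishes elsewhere, and $I^{(m)}_1,\ldots,I^{(m)}_m$ are the $m$ ordered intervals of $[n]$ produced by the Soules tree at level $m$. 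Granted this representation, the desired conclusion is instantaneous: since $\bpsi_1 = n^{-1/2}\one$ is entrywise nonnegative, each rank-one summand on the right is an entrywise nonnegative matrix, and the claim $\sum_{k=1}^m \bpsi_k \bpsi_k^T \ge 0$ follows.

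The base case $m=1$ is trivial because $I^{(1)}_1 = [n]$ and $\norm{\bpsi_1}=1$. For the inductive step, the passage from level $l$ to level $l+1$ only subdivides a single interval $I^{(l)}_q=[i_0,i_1]$ into $[i_0,\ist]\cup[\ist+1,i_1]$, while all other intervals are untouched. Consequently the induction reduces to the local, one-block identity
\begin{equation*}
\frac{\bpsi_1|_{[i_0,i_1]}\,(\bpsi_1|_{[i_0,i_1]})^T}{\norm{\bpsi_1|_{[i_0,i_1]}}^2} + \bpsi_{l+1}\bpsi_{l+1}^T \;=\; \frac{\bpsi_1|_{[i_0,\ist]}\,(\bpsi_1|_{[i_0,\ist]})^T}{\norm{\bpsi_1|_{[i_0,\ist]}}^2} + \frac{\bpsi_1|_{[\ist+1,i_1]}\,(\bpsi_1|_{[\ist+1,i_1]})^T}{\norm{\bpsi_1|_{[\ist+1,i_1]}}^2}.
\end{equation*}
Setting $a = \norm{\bpsi_1|_{[i_0,\ist]}}$ and $b = \norm{\bpsi_1|_{[\ist+1,i_1]}}$, the definition \cref{from_l_to_plusone} gives $\bpsi_{l+1} = (a^2+b^2)^{-1/2}\big[(b/a)\,\bpsi_1|_{[i_0,\ist]} - (a/b)\,\bpsi_1|_{[\ist+1,i_1]}\big]$. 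Expanding $\bpsi_{l+1}\bpsi_{l+1}^T$ and adding the first term of the left-hand side, the cross-terms cancel exactly, the coefficient of $\bpsi_1|_{[i_0,\ist]}(\bpsi_1|_{[i_0,\ist]})^T$ collapses to $1/a^2$, and that of $\bpsi_1|_{[\ist+1,i_1]}(\bpsi_1|_{[\ist+1,i_1]})^T$ collapses to $1/b^2$, which matches the right-hand side.

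I do not expect any serious obstacle: the whole argument is elementary rank-two bookkeeping at each Givens rotation. The only care required is in preserving the inductive invariant under an arbitrary order of subdivisions, and in reproducing the normalisation constants of \cref{from_l_to_plusone} correctly when the outer product is expanded.
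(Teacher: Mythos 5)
Your proof is correct: the local rank-two identity checks out (the cross terms in $\bpsi_{l+1}\bpsi_{l+1}^T$ cancel exactly against those of the parent block's projector), and your inductive invariant is, after specializing $\bpsi_1 = n^{-1/2}\one$, precisely the block-constant form \cref{le-EM} that the paper establishes by the same induction in the proof of \cref{corollaryp37}. The paper itself defers this lemma to \cite{elsner98} with only the remark that it follows by induction from \cref{from_l_to_plusone}, so your argument is essentially the intended proof rather than a genuinely different route.
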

Finally, we have the fundamental property of Soules bases.
  \begin{lemma}[See \cite{elsner98}]
    \label{nonnegative}
    Let $\bPsi$ be a Soules basis constructed according to \cref{from_l_to_plusone}.  Let
    $\bLambda=\diag{\lambda_1,\ldots,\lambda_n}\mspace{-12mu}$, where $\lambda_1 \ge \lambda_2 \ge \cdots
    \ge \lambda_n$. Then, the off-diagonal entries of $\bPsi \bLambda \bPsi^T$ are
    non-negative. In addition, if $\lambda_n \ge 0$, then $\bPsi \bLambda \bPsi^T \ge 0$.
  \end{lemma}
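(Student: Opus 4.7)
The plan is to reduce the claim to the partial-sum identity $\bE_m \eqdef \sum_{k=1}^m \bpsi_k\bpsi_k^T \ge 0$ (entrywise) established in \cref{definition-de-Em}, via Abel summation (summation by parts). First I would write the spectral decomposition of $\bPsi \bLambda \bPsi^T$ as a sum of rank-one pieces and rearrange it as
\begin{equation*}
\bPsi \bLambda \bPsi^T \;=\; \sum_{k=1}^n \lambda_k\, \bpsi_k \bpsi_k^T \;=\; \sum_{m=1}^{n-1} (\lambda_m - \lambda_{m+1})\, \bE_m \;+\; \lambda_n\, \bE_n,
\end{equation*}
a standard telescoping identity that expresses $\bPsi\bLambda\bPsi^T$ as a linear combination of the nested partial-sum matrices $\bE_m$ with coefficients $(\lambda_m - \lambda_{m+1})$ and a leftover term $\lambda_n \bE_n$.

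Next I would invoke the two hypotheses. The ordering $\lambda_1 \ge \lambda_2 \ge \cdots \ge \lambda_n$ guarantees that each coefficient $(\lambda_m - \lambda_{m+1})$ is non-negative, while \cref{definition-de-Em} provides $\bE_m \ge 0$ entrywise for every $m$ together with $\bE_n = \id$. Since $\lambda_n \bE_n = \lambda_n \id$ is supported on the diagonal, it contributes zero in every off-diagonal position $i \ne j$, leaving
\begin{equation*}
\big[\bPsi \bLambda \bPsi^T\big]_{ij} \;=\; \sum_{m=1}^{n-1} (\lambda_m - \lambda_{m+1})\, [\bE_m]_{ij} \;\ge\; 0,
\end{equation*}
which proves the first claim. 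If in addition $\lambda_n \ge 0$, then the diagonal contribution $\lambda_n \id$ is itself non-negative, and the same decomposition yields non-negativity of every entry of $\bPsi \bLambda \bPsi^T$.

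The argument reduces to a one-line rearrangement once \cref{definition-de-Em} is in hand; there is no real obstacle at this stage, because the heavy lifting has already been done by the entrywise non-negativity of the partial sums $\bE_m$, which in turn traces back to the nested-or-disjoint support structure of the Soules vectors imposed by the recursion in \cref{from_l_to_plusone}. No further computation beyond the telescoping identity and a sign check is required.
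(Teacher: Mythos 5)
Your argument is correct. Note that the paper itself does not prove \cref{nonnegative} at all --- it is stated with a pointer to \cite{elsner98} --- so there is no in-paper proof to compare against; your Abel-summation (telescoping) decomposition $\sum_{k=1}^n \lambda_k \bpsi_k\bpsi_k^T = \sum_{m=1}^{n-1}(\lambda_m-\lambda_{m+1})\bE_m + \lambda_n \id$ is precisely the classical Soules/Elsner--Nabben--Neumann argument, and it correctly reduces everything to the entrywise non-negativity of the partial sums $\bE_m$ guaranteed by \cref{definition-de-Em}. The only point worth making explicit is that the off-diagonal claim holds \emph{without} any sign assumption on $\lambda_n$ exactly because $\bE_n=\id$ kills the $\lambda_n$ term off the diagonal, which you do state.
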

  \noindent We note that there has been some recent interest in Soules bases to solve various inverse
  eigenvalue problems \cite{devriendt19,redko20}.
  \begin{remark}
    The result in \cref{nonnegative} relies on the fact that the sequence of eigenvalues
    is decreasing. On the other hand, the eigenvalues of $\cL$ are by nature ranked in
    ascending order (the index $k$ of eigenvalue $\lambda_k$ of $\cL$ encodes the frequency of
    the corresponding eigenvector). Given an ascending sequence of eigenvalues of $\cL$, $0 =
    \lambda_1 < \lambda_2 \le \cdots \le \lambda_n$, we would like to apply
    \cref{nonnegative} to reconstruct a Laplacian matrix using a Soules basis. Since the
    off-diagonal entries of a normalized Laplacian $\cL$ are nonpositive, we need to work with
    $-\blb$. Then, $0 = \lambda_1 > - \lambda_2 \ge \cdots \ge - \lambda_n$, and we can use
    \cref{nonnegative} to construct $\cL^\ast$ such that
    \begin{equation}
      \cL^\ast = \bPsi\diag{\lambda_1,\ldots,\lambda_n}\mspace{-14mu}\bPsi^T, \mspace{16mu}
      \text{where} \mspace{8mu}
      \mathscr{L}^\ast_{ij} \le 0 \mspace{8mu}\text{if}\mspace{8mu}
      i \ne j.
      \label{synthesis}
    \end{equation}
    Since we choose, $\bpsi_1 = n^{-1/2}\one$, we have $\cL^\ast \one = \zr$, and therefore
    $\mathscr{L}^\ast_{ii} \ge 0$. While the signs of the entries of $\cL^\ast$ match
    those of a normalized Laplacian, there is no guarantee that $\cL^\ast$ be a valid
    normalized Laplacian (but see a definite answer in the case of the combinatorial
    Laplacian, $\bL = \bD -\bA$ in \cite{devriendt19}). 
  \end{remark}
  \begin{figure}[H]
    \centerline{
      \includegraphics[width=0.67\textwidth]{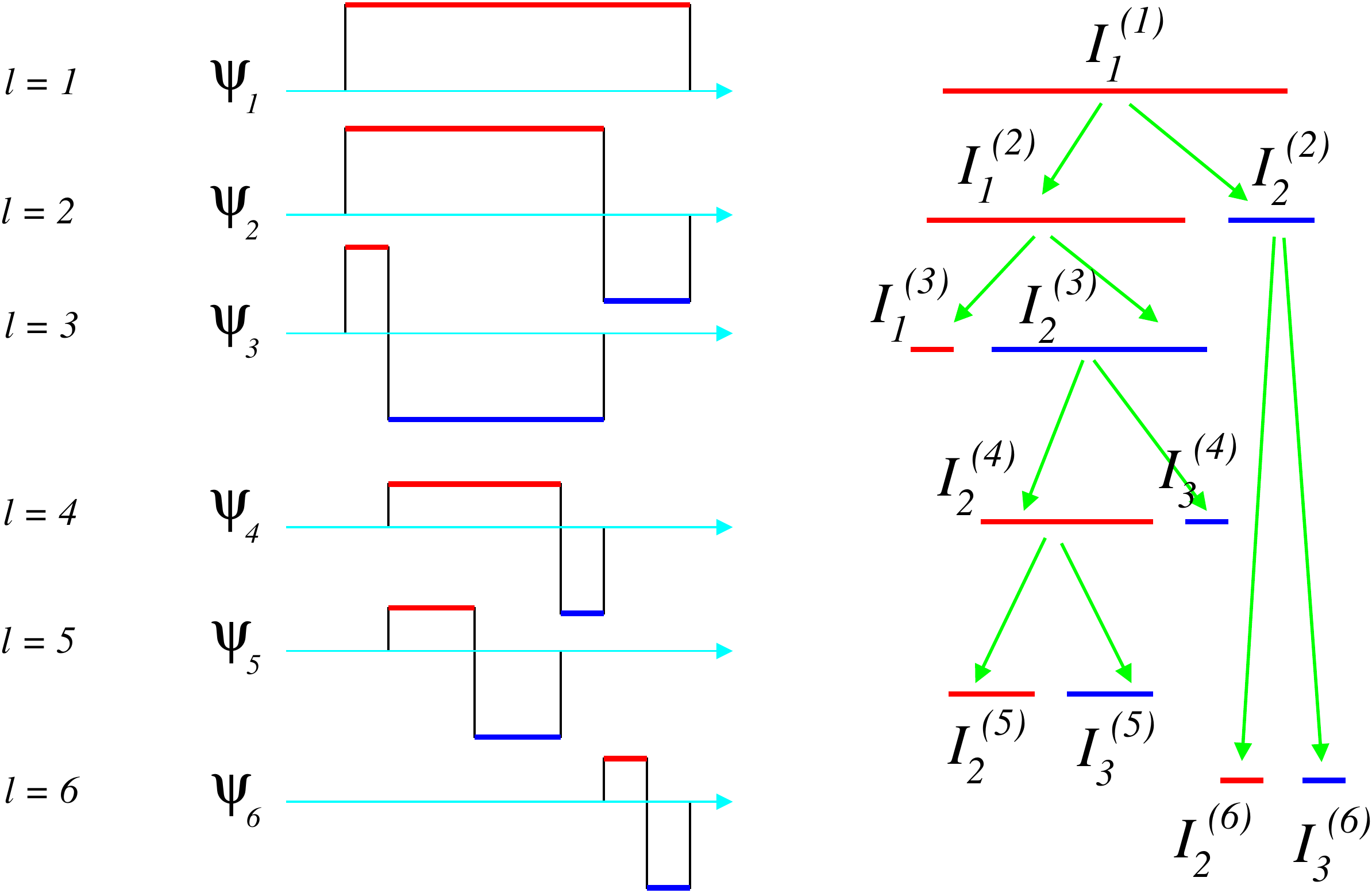}}
    \caption{Left: starting from level $l=1$, one Soules vector $\bpsi_l$ is constructed at
      each level $l\ge 2$ by selecting and then splitting an interval $I^{(l)}_j$ over which
      $\forall 1\le k \le  l, \;\rstr{\bpsi_k}{I^{(l)}_j}$ is constant. Right: each
      Soules basis is associated with a binary tree. The leaves of the tree are intervals
      that are not split.
      \label{the-soules-tree}}
  \end{figure}
    \section{A Soules Basis solution to \cref{le_gros_probleme3}
    \label{BSB-section}
  }
  \subsection{A greedy exploration of the Soules library}
  Given $T$ independent realizations of $\sbm{p}{q}{n}$, represented by their adjacency
  matrices $\bA^{(1)},\ldots,\bA^{(T)}$, we describe in this section a greedy algorithm that
  constructs a Soules basis $\begin{bmatrix}\bpsi_1  \cdots  \bpsi_n \end{bmatrix}$ that
  solves \cref{le_gros_probleme3}.
  
  As explained in \cref{informal_statements} (Idea~3), the construction of the Soules vectors
  proceeds using a multiscale approach. One starts at the coarsest scale with $\bpsi_1
  = {n}^{-1/2} \one$.  The next vector, $\bpsi_2$, is chosen so that it detects the two
  communities with the largest difference in edge probability. In terms of the sample
  adjacency matrix, the zero-crossing of $\bpsi_2$ is aligned with the boundaries between two
  blocks of $\sE{\pr}$ associated with the largest jump in the edge probability.

  Wherefore we choose $\bpsi_2$ to maximize $\lvert \ipr{\bpsi_2 \bpsi_2^T}{\mA}\rvert^2$. The
  next Soules vector, $\bpsi_3$, has its support inside either one of the two sets $\{\bpsi_2
  \ge 0\}$ or $\{\bpsi_2 \le 0\}$. We can therefore detect the second largest jump in the edge
  probability by maximizing the magnitude of the inner product between $\bpsi_3\bpsi_3^T$ and
  the reconstruction error, $\big[\mA - \ipr{\mA}{\bpsi_2\bpsi_2^T}\bpsi_2\bpsi_2^T \big]$,
  \begin{equation}
    \begin{aligned}
      \bpsi_3 & = \argmax{\bpsi_3 \text{defined by} \cref{from_l_to_plusone}}
      \big \lvert
      \ipr{\bpsi_3 \bpsi_3^T}
          {\big[\mA - \ipr{\mA}{\bpsi_2\bpsi_2^T}\bpsi_2\bpsi_2^T \big]}
          \big \rvert^2, \\
          & = \argmax{\bpsi_3 \text{defined by} \cref{from_l_to_plusone}}
          \big \lvert
          \ipr{\bpsi_3 \bpsi_3^T}{\mA}
          \big \rvert^2,
    \end{aligned}
  \end{equation}
  since $\bpsi_3^T\bpsi_2 = 0$.  At level $l$, the algorithm selects the new Soules vector
  $\bpsi_{l+1}$ that maximizes the magnitude of the inner product between
  $\bpsi_{l+1}\bpsi_{l+1}^T$ and $\mA$,
  \begin{equation}
    \bpsi_{l+1} = \argmax{\bpsi_{l+1} \text{defined by} \cref{from_l_to_plusone}}
    \bigg \lvert \ipr{\bpsi_{l+1}\bpsi_{l+1}^T}{\mA} \bigg
    \rvert^2.
  \end{equation}
  \noindent \cref{BSB} provides the pseudocode for the implementation of this algorithm.
  \begin{algorithm}[H]
    \caption{Top-down exploration of the Soules binary tree.\label{BSB}}
    \begin{algorithmic}[1]
      \Procedure{BestSoulesBasis}{$\mA$,$\bPsi$}
      \Comment{Input: $\mA$; Output: $\bPsi$ the Soules
        matrix}
      \State $\bpsi_1 \gets {n}^{-1/2} \one$
      \ForAll{levels $l\in\{1,\ldots, n-1\}$}  
      \Statex \Comment{For each block $I^{(l)}_q = [i_0,i_1]$ which was not split at level $l$ we split it
        using an index $\ist \in [i_0,i_1]$ and construct the eigenvector $\bpsi_{l+1}$
        associated with the split. We compute  $\lon{\ipr{\bpsi_{l+1}\bpsi_{l+1}^T}{\mA}}^2$}
      \State $\ist \gets 1$ \Comment{index of the tentative $\bpsi_{l+1}$ at level $l$}
      \ForAll {blocks $I^{(l)}_q$ at level $l$} \Comment{there are exactly $l$ blocks at level $l$}
      \State $i0 \gets \text{leftend} (I^{(l)}_q)$; \hspace*{0.5pc}
      $i1 \gets \text{rightend} (I^{(l)}_q)$  \Comment{$I^{(l)}_q = [i_0,i_1]$}
      \If{$(i_0 < i_1)$} \Comment{the block $I^{(l)}_q$ is not a leaf}
      \ForAll{$\ist \in\{i_0,\ldots,i_1\}$}
      \State $\bpsi_{l+1} \gets \text{buildvector} ([i_0,i_1],\ist)$ 
      \Comment{construct $\bpsi_{l+1}$ using \cref{from_l_to_plusone}}
      \State $\text{coeff}(\ist) \gets \lvert \ipr{\bpsi_{l+1}\bpsi_{l+1}^T}{\mA} \rvert^2$ 
      \State $\ist \gets \ist +1$ \Comment{update the index of the next
        tentative $\bpsi_{l+1}$}
      \EndFor \Comment{next index $\ist$ so that $[i_0,i_1] = [i_0,\ist] \cup
        [\ist+1,i_1]$}  
      \EndIf
      \EndFor \Comment{move to the next block at level $l$}
      \Statex \Comment{We have explored all the blocks at level $l$. We now find the block
        $[i_0,i_1]$ and the index $\ist$ of the split that result in the largest
        $\lvert \ipr{\bpsi_{l+1}\bpsi_{l+1}^T}{\mA}\rvert^2$. We save the corresponding  $\bpsi_{l+1}$
        in $\bPsi$}    
      \State $\big([i_0,i_1], \ist\big)\gets \argmax{i_0,i_1}\argmax{\ist \in \{i_0,\ldots,i_1\}}
      \big(\text{coeff})$
      \State $I_{q}^{(l+1)} \gets [i_0,\ist]$;  \hspace*{0.5pc}
      $I_{q+1}^{(l+1)} \gets [\ist+1,i_1]$
      \State $\bpsi_{l+1} \gets \text{buildvector} \big(I_{q}^{(l+1)}, I_{q+1}^{(l+1)}\big)$
      \Comment{use \cref{from_l_to_plusone} to construct $\bpsi_{l+1}$ }
      \State $\bPsi(:,l+1) \gets \bpsi_{l+1}$ \Comment{add $\bpsi_{l+1}$ to the Soules basis}
      \EndFor \Comment{go down to a finer level}
      \State \Return $\bPsi$ \Comment{return the Soules basis}
      \EndProcedure
    \end{algorithmic}
  \end{algorithm}
  \subsection{Spectral clustering of the nodes.}
  The greedy construction of the Soules basis necessitates that $\mA$ be ``well-aligned'' --
  in the sense that nodes are aggregated in clusters wherein the local node connectivity is
  approximately constant. This required step is equivalent to the estimation of a piecewise
  constant graphon for each observed graph in \cite{han22}. We use a spectral clustering
  method based on the eigenvectors of the normalized graph Laplacian
  \cite{damle19,meyer2014,Shen08} to organize nodes into clusters; the algorithm only
  requires the knowledge of the number $M$ of clusters. This prerequisite step only provides a grouping of the nodes based on
  the connectivity measurements (see \cref{clustering}). After this coarse clustering, the
  adjacency matrices $\bA^{(1)},\ldots,\bA^{(T)}$ are not aligned: one cannot match the
  nodes from one graph to another. In short, this step corresponds to the approximation of
  each $\bAt$ using a step graphon (e.g.,
  \cite{airoldi13,borgs20,ferguson23a,gao15,han22,lovasz12,olhede14,xu21}).
  As demonstrated in the experiments, the clustering of the nodes into communities is not
  always accurate. Fortunately, the choice of the Soules vectors in \cref{BSB} only relies
  on the coarse scale eigenvectors. These eigenvectors are determined by computing $\lvert
  \ipr{\bpsi_{l} \bpsi_{l}^T}{\mA}\rvert^2$. The support of $\bpsi_k \bpsi_k^T$ is large
  for $k=2, 3, \ldots$, and $\bpsi_k \bpsi_k^T$ is piecewise constant.  The random noise
  in $\sE{\pr}$ -- created by the incorrect assignment of nodes to the wrong community --
  is partly suppressed when computing $\big \lvert \ipr{\bpsi_k \bpsi_k^T}{\sE{\pr}}\big
  \rvert^2$. \cref{clustering} demonstrates the coarse clustering (see
  \cref{clustering}-right) applied on a randomly permuted adjacency matrix (see
  \cref{clustering}-center) of the $\sbm{\bp}{q}{n}$ model (see \cref{clustering}-left)
  
  The inherent uncertainty associated with the cluster labels is resolved by ranking the
  clusters according to their volume. This spectral clustering only requires that we compute
  the $M$ dominant eigenvectors of the sample symmetric normalized adjacency matrix,
  $\sE{\hA}$, and therefore does not increase significantly the computational load of the
  algorithm.


  
    \begin{figure}[H]
      \centerline{
        \includegraphics[width=0.3\textwidth]{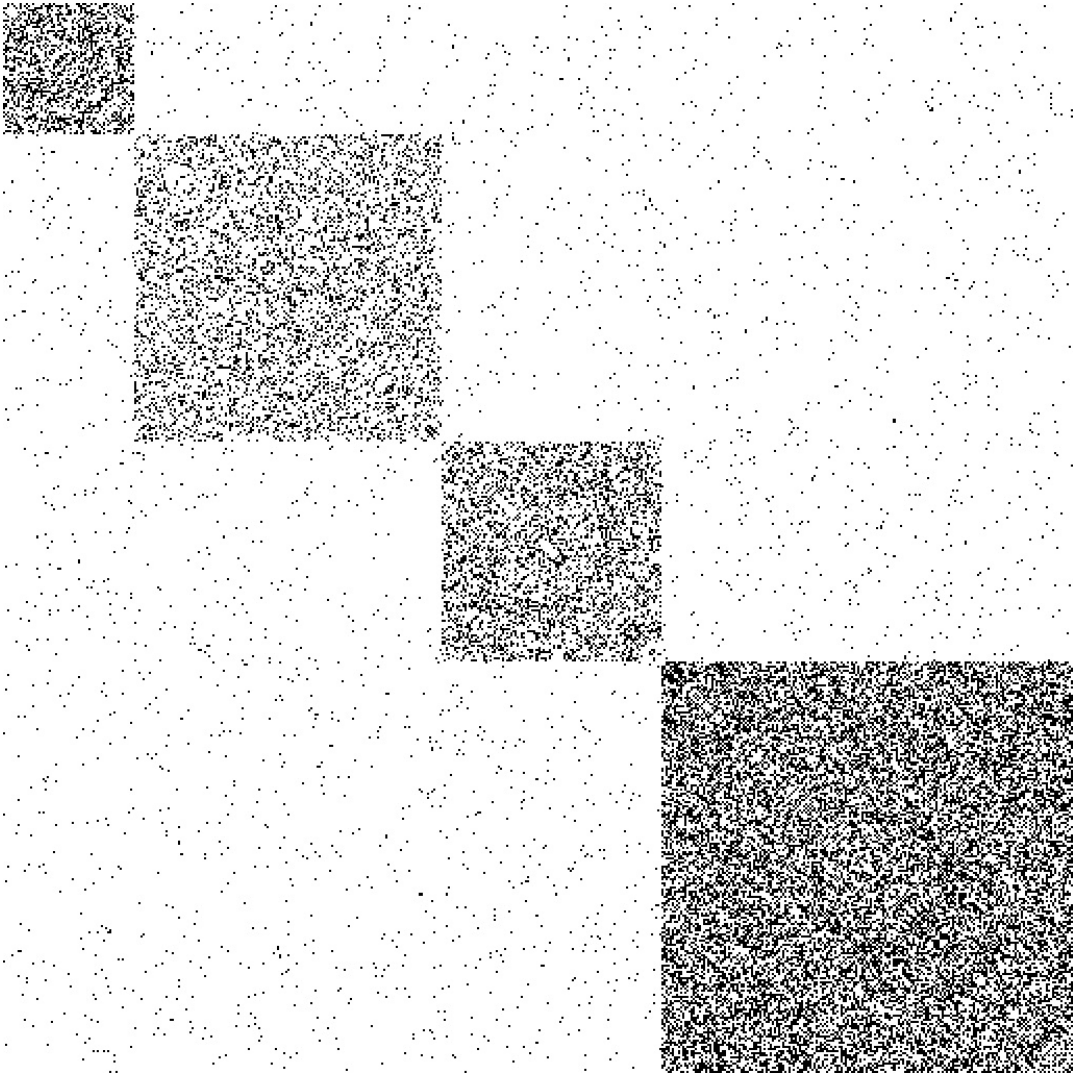} 
        \raisebox{5pc}{\includegraphics[width=1.5pc]{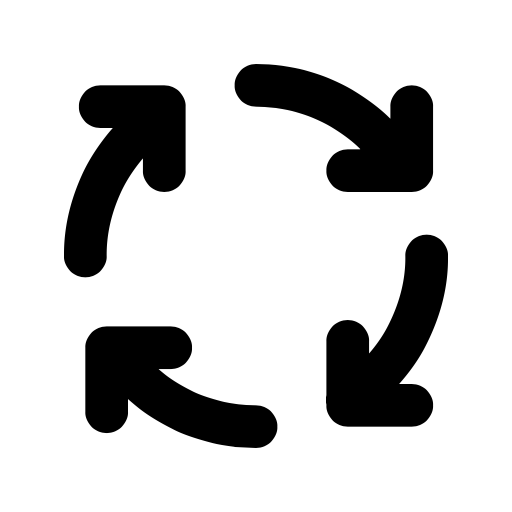}}
        \includegraphics[width=0.3\textwidth]{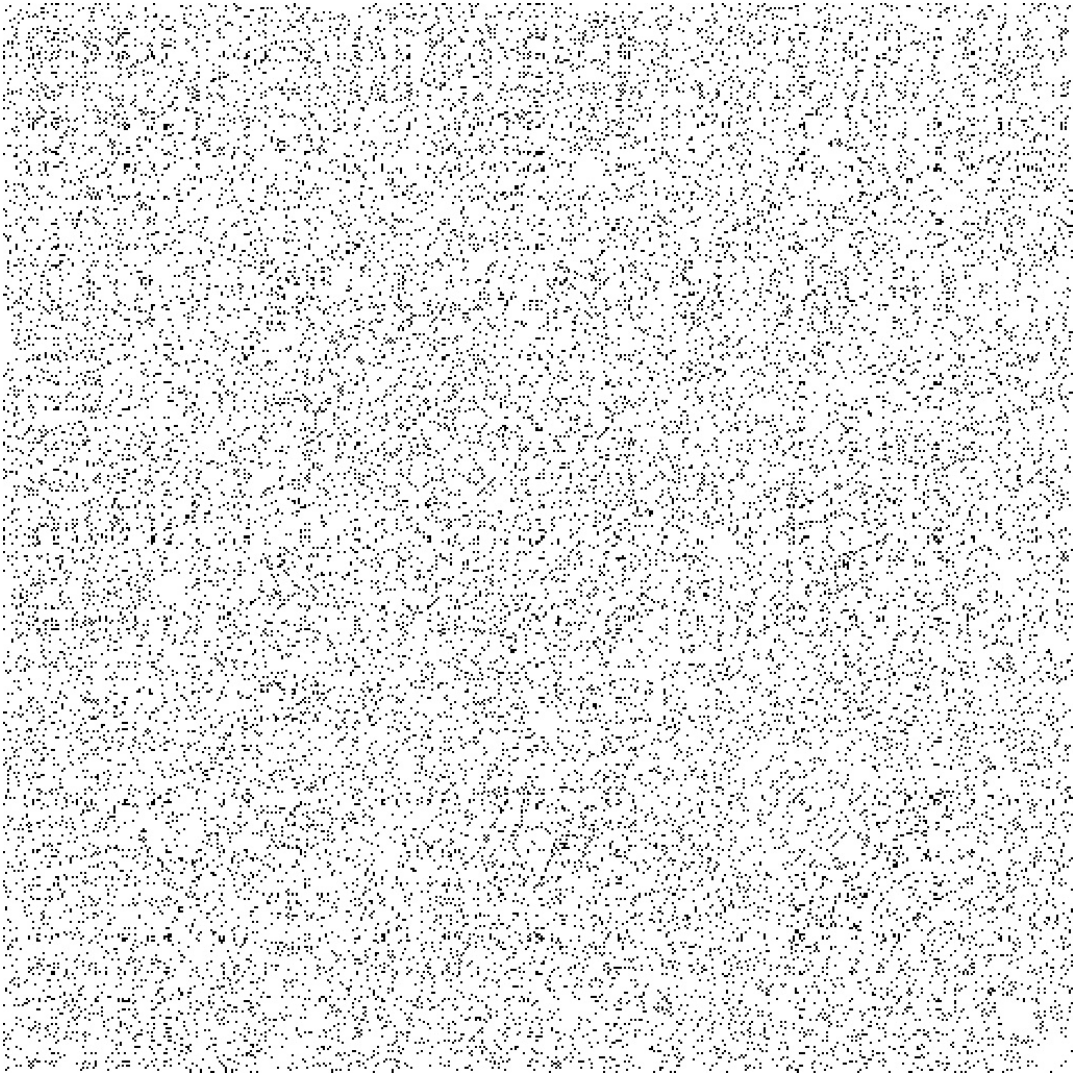}
        \raisebox{5pc}{\includegraphics[width=1.5pc]{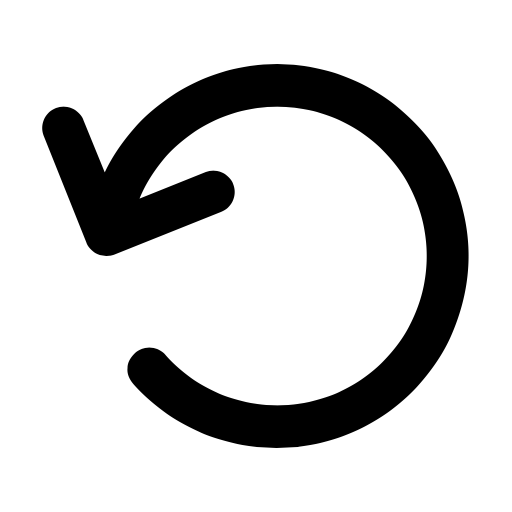}}
          \includegraphics[width=0.3\textwidth]{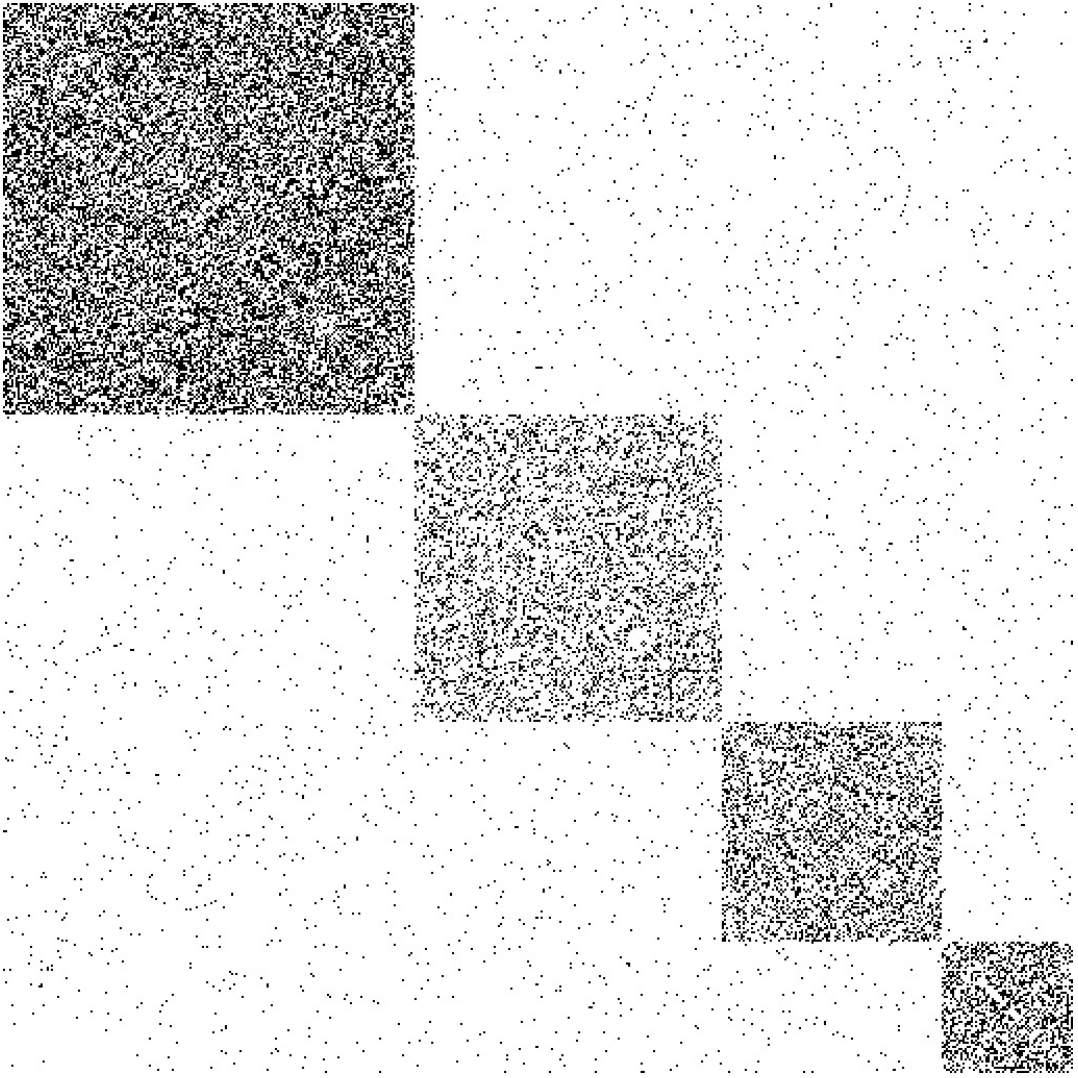} 
      }
      \caption{ Left: the adjacency matrix of a random realization of the
        $\sbm{\bp}{q}{n}$ model. We have $M=4$ communities of sizes $63, 147, 105,  197$;
        the graph size is $n=512$; the edge probability within community $i$ was $p_i \propto
        (\log{n})^2/n$, The edge probability across communities was $q=2 (\log{n})/n$. Center:
        the adjacency matrix is permuted with a random permutation.  Right: coarse clustering
        of the randomly permuted matrix. This last adjacency matrix is the input to
        \cref{BSB}; see \cref{les-soules}-left.
        \label{clustering}}
    \end{figure}
  
  \noindent
    
    \subsubsection{Algorithmic Complexity}
    The optimization problem given by \cref{le_gros_probleme} cannot be solved with a
    brute-force search, since we search $\bPsi$ over the orthogonal group
    $\OG(n)$. Polynomial-time numerical methods \cite{boumal14} have been devised to take
    advantage of the lower dimension of this Riemannian submanifold. In comparison, the
    computational complexity of \cref{BSB} is linear in $n$. We need to process $M$ levels;
    the discovery of $\bpsi_m$ at each level $m$ requires that we evaluate $\bigg \lvert
    \ipr{\bpsi_{m+1}\bpsi_{m+1}^T}{\mA} \bigg \rvert^2$ $n$ times, for an overall complexity
    of $M n$.
  
  \subsection{Theoretical guarantees for the algorithm.}
  Our analysis of \cref{BSB} is performed under the assumption that the input to the
  algorithm is not the sample mean adjacency matrix $\mA$ but its population equivalent $\bP
  = \E{\pr}$. Our experiments (see \cref{effet-des-parametres}-left) confirm the validity of
  this assumption. A finite sample analysis of the error bounds is left for future work.
  The following lemma proves that the first $M$ vectors of the Soules basis estimated by
  \cref{BSB} solve \cref{le_gros_probleme3}. Unlike \cref{lemmaLowe24}, \cref{lemma4} holds
  for all SBM, not just balanced SBM.
  \begin{lemma}
    \label{lemma4}
    Let $\bP$ be the population mean adjacency matrix of $\sbm{\bp}{q}{n}$ defined by
    \begin{equation}
      \bP = \sum_{m=1}^M (p_m - q)\one_{B_m}\one_{B_m}^T + q \bJ,
      \label{SBM}
    \end{equation}
    where the $M$ blocks $\{B_m\}$ form a partition of $[n]$. Let $J_l, 1 \le l \le M$ be
    the leaves in the binary Soules tree (these are the $M$ intervals that are no longer split,
    see \cref{the-soules-tree}-right) after $M$ steps of \cref{BSB}. Then, the $M$ blocks
    $\{B_m\}$ in \cref{SBM} coincide with $M$ the intervals $\{J_l\}$. The entries of the
    matrix $\sum_{k=1}^M \bpsi_k \bpsi_k^T$ satisfy
    \begin{equation}
      \sum_{k=1}^M \bpsi_k \bpsi_k^T(i,j) =
      \begin{cases}
        \displaystyle \frac{1}{\lon{J_m}} & \text{if} \mspace{8mu} (i,j) \in J_m \times J_m,\\
        0 & \text{otherwise},
      \end{cases}
    \end{equation}
    where $\lon{J_m}$ is the length of the interval $J_m$.  
  \end{lemma}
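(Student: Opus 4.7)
The plan is to argue by induction on the level $l \in \{1,\ldots,M\}$ that after $l-1$ greedy splits performed by \cref{BSB}, the current partition $\{I^{(l)}_q\}$ is a coarsening of the block partition, i.e., each $I^{(l)}_q$ is a union of consecutive blocks $B_m$. The base case $l = 1$ is immediate, and iterating the inductive step to $l = M$, together with the fact that $[n]$ contains exactly $M-1$ block boundaries, forces the level-$M$ partition to coincide with $\{B_m\}$, i.e., $\{J_l\} = \{B_m\}$. The claimed formula for $\sum_{k=1}^M \bpsi_k \bpsi_k^T$ then follows from the Soules identity $\sum_{k=1}^l \bpsi_k \bpsi_k^T = \sum_q \lon{I^{(l)}_q}^{-1} \one_{I^{(l)}_q} \one_{I^{(l)}_q}^T$, itself an easy induction on \cref{from_l_to_plusone}.

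For the inductive step, I use the decomposition $\bP = \sum_m (p_m - q) \one_{B_m}\one_{B_m}^T + q\bJ$. Since every candidate $\bpsi \perp \one$ obtained by splitting $I = [i_0, i_1]$ at $i^*$ into $L = [i_0, i^*]$ and $R = [i^*+1, i_1]$ kills $\bJ$, the objective reduces to
\begin{equation}
\ipr{\bpsi \bpsi^T}{\bP} = \bpsi^T \bP \bpsi = \sum_m (p_m - q) c_m^2, \qquad c_m = \one_{B_m}^T \bpsi.
\end{equation}
Using \cref{from_l_to_plusone} with $a = \lon{L}$, $b = \lon{R}$, $N = \lon{I}$, and $\alpha_m = \lon{B_m \cap L}$, a direct calculation yields $c_m = (\alpha_m N - \lon{B_m} a)/\sqrt{Nab}$ whenever $B_m \subseteq I$, and $c_m = 0$ otherwise. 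The greedy rule therefore maximizes $g(s) = H(s)/(N D(s))$ over the split position $s = a$, where $H(s) = \sum_m (p_m - q)(\alpha_m N - \lon{B_m} s)^2$ and $D(s) = s(N - s)$.

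The core technical claim is a unimodality statement: when $s$ varies over splits inside a single block $B_{m_j} \subseteq I$, i.e., $s \in [a_0, a_0 + n_{m_j}]$ with $a_0 = \sum_{i<j} n_{m_i}$, $H$ is a fixed quadratic in $s$ and the maximum of $g$ on this sub-range is attained at one of its endpoints. Indeed, expanding $H'(s) D(s) - H(s) D'(s)$ collapses to the quadratic $(H(N) - H(0)) s^2 + 2 N H(0) s - N^2 H(0) = 0$, whose discriminant factors cleanly as $4 N^2 H(0) H(N)$; its unique root in $(0, N)$ is $s^\star = N/\big(1 + \sqrt{H(N)/H(0)}\,\big)$. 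Since $g(s) \to +\infty$ as $s \to 0^+$ and as $s \to N^-$ (using $H(0), H(N) > 0$), $s^\star$ is a global minimum and $g$ is U-shaped on $(0, N)$; restricted to $[a_0, a_0 + n_{m_j}] \subset (0, N)$, its maximum therefore lies at an endpoint, which is a block boundary by the inductive hypothesis. The edge cases $j = 1$ or $j = k$ (when $H(0) = 0$ or $H(N) = 0$) reduce to $H(s) \propto s^2$ or $(N - s)^2$, making $g$ strictly monotone and attaining its maximum at the adjacent block boundary.

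Since any split strictly inside a block is dominated by one at a bordering block boundary, the greedy rule places each of the first $M - 1$ splits at one of the $M - 1$ block boundaries of $[n]$; each boundary can be used at most once (once split, its two sides sit in different intervals), so after $M - 1$ steps all block boundaries are resolved and the level-$M$ partition equals $\{B_m\}$. The hard part will be the unimodality step: the clean factorization of the critical-point discriminant as $4 N^2 H(0) H(N)$ is what renders the single-block analysis tractable, but it relies on a careful expansion of $H'D - H D'$; everything else is bookkeeping about the Soules construction.
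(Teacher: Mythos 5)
Your proof is correct and reaches the same two conclusions as the paper, but by a genuinely different route for the key step. Both arguments use the same block-averaging identity $\sum_{k=1}^{l}\bpsi_k\bpsi_k^T=\sum_q \lon{I^{(l)}_q}^{-1}\one_{I^{(l)}_q}\one_{I^{(l)}_q}^T$ (the paper's \cref{corollaryp37}, proved there by the same induction you sketch), so the second half is essentially identical. The difference is in proving that each greedy split lands on a block boundary. The paper isolates a two-block sub-problem (\cref{maximum-coefficient}), where the inner product factors as an explicitly monotone function of $\ist$ times $\{p_0+p_1-2q\}$, and then lifts this to general intervals by a contradiction argument (\cref{first-cut}) plus induction on the number of blocks (\cref{corollary3p51}). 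You instead compute the objective exactly for an interval containing arbitrarily many blocks, reduce it to $g(s)=H(s)/(ND(s))$ with $H$ piecewise quadratic, and show $g$ is U-shaped on $(0,N)$ via the discriminant factorization $4N^2H(0)H(N)$ of $H'D-HD'=0$. Your version buys a one-shot treatment of multi-block intervals and sidesteps the delicate step in the paper's \cref{first-cut}, where \cref{maximum-coefficient} (stated for $\bP$ supported exactly on $\supp{\bpsi_l\bpsi_l^T}$) is invoked for a sub-block strictly inside that support and the within-block contribution is asserted "not to contribute"; the paper's version buys more elementary individual lemmata. Two small points to tighten: (i) the algorithm maximizes $\lon{\ipr{\bpsi\bpsi^T}{\bP}}^2=g(s)^2$, not $g(s)$, so your unimodality argument needs $H$ of constant sign on the relevant range --- this holds in the assortative regime $p_m>q$, the same tacit assumption under which the paper's factor $\{p_0+p_1-2q\}$ is treated as positive; and (ii) you should note explicitly that an interval consisting of a single block yields $H\equiv 0$, hence objective zero, so the greedy rule never splits such an interval while a multi-block interval (with strictly positive objective at its internal boundaries) remains --- this is what lets you conclude that the first $M-1$ splits exhaust the $M-1$ block boundaries.
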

  The proof of \cref{lemma4} can be found in \cref{proof-lemma4}.  The proof relies on two
  different facts. The first shows that a top-down exploration of the Soules binary tree,
  when the first Soules vector is $\bpsi_1 = \scalebox{0.8}{${n}^{-1/2}$} \one$, always
  creates a matrix $\bE_M = \sum_{k=1}^M \bpsi_k \bpsi_k^T$ that is piecewise constant on
  square blocks aligned along the diagonal, and zero outside of the blocks (see
  \cref{corollaryp37}). This property only relies on the fact that the sequence of
  $\{\bpsi_k\}_{1 \le k \le M}$ have nested supports.

  The second result specifically addresses the construction of each $\bpsi_k$ in
  \cref{BSB}. We prove in \cref{corollary3p51} that at each level $k$, the 
  Soules vector $\bpsi_k$ returned by \cref{BSB} is aligned with the boundary of a
  block $B_m$ of the edge probability matrix $\bP$. At level~$M$, \cref{BSB} has
  discovered all the $M$ blocks.
  \begin{corollary}
    Let $\bPsi \eqdef \begin{bmatrix}\bpsi_1  \cdots  \bpsi_n \end{bmatrix}$ be the Soules
    basis returned by \cref{BSB}. Then $\bPsi$ solves \cref{le_gros_probleme3}.
    
  \end{corollary}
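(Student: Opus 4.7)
The plan is to verify the three requirements of \cref{le_gros_probleme3} one by one, since each follows almost immediately from material already in place. First, the requirement $\bpsi_1 = n^{-1/2}\one$ is imposed by the initialisation line of \cref{BSB}, so it will hold by construction. Next, the identity $\sum_{k=1}^n \bpsi_k \bpsi_k^T = \id$ is a generic property of any Soules basis built from the recursion \cref{from_l_to_plusone}; it is the right-hand identity in \cref{definition-de-Em}, and I would simply cite it.

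The substantive piece is the second (middle) requirement, and my plan is to obtain it as a direct specialisation of \cref{lemma4}. That lemma guarantees that after $M$ top-down steps of \cref{BSB} applied to the population matrix $\bP$, the $M$ leaves $\{J_l\}$ of the Soules binary tree coincide with the blocks $\{B_m\}$ of the SBM, and that
\[
\sum_{k=1}^M \bpsi_k \bpsi_k^T(i,j) = \frac{1}{\lon{J_m}} \mspace{8mu}\text{if}\mspace{8mu} (i,j) \in J_m\times J_m, \mspace{8mu}\text{and}\mspace{8mu} 0 \mspace{8mu}\text{otherwise.}
\]
Now I would specialise to the balanced model $\pr = \sbm{p}{q}{n}$: every block has size $\lon{B_m} = n/M$, so $1/\lon{J_m} = M/n$, which matches exactly the prescribed value in \cref{le_gros_probleme3}. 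Assembling the three verifications then closes the proof in one short step.

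The hard part is not the corollary itself but the work already packaged into \cref{lemma4}: one must show, on the one hand, that the greedy rule driven by $\lvert \ipr{\bpsi_{l+1}\bpsi_{l+1}^T}{\bP}\rvert^2$ at each level locks onto an exact block boundary of $\bP$, and on the other hand, that the cumulative sum of rank-one Soules projectors with nested supports collapses into a block-diagonal matrix that is constant on each diagonal block. Once those two structural facts are available (via the auxiliary results cited in the proof sketch of \cref{lemma4}), the corollary reduces to bookkeeping plus the substitution of the balanced block sizes.
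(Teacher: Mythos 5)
Your proposal is correct and follows essentially the same route the paper intends: the corollary is stated without a separate proof precisely because it is meant to follow from \cref{lemma4} (specialised to the balanced case $\lon{B_m}=n/M$, so that $1/\lon{J_m}=M/n$), together with the initialisation $\bpsi_1=n^{-1/2}\one$ in \cref{BSB} and the resolution-of-identity property \cref{definition-de-Em} of any Soules basis. Your explicit three-part verification, including the remark that the analysis assumes the input is the population matrix $\bP=\E{\pr}$, matches the paper's implicit argument.
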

  \noindent After \cref{BSB} returns $\begin{bmatrix}\bpsi_1 \cdots \bpsi_n \end{bmatrix}$ of
  $\Lmu$, we reconstruct $\fmpr$ using \cref{from_Laplacian_to_barycentre}.
  \section{The reconstruction of \texorpdfstring{$\fmpr$}{TEXT} \label{la_section_des_details}}
  
     By virtue of being normalized, $\cL$  cannot reveal the corresponding
    adjacency matrix $\bA$. Indeed, from \cref{Ahat}, we have 
    \begin{equation}
      \bA = \bD^{1/2} \big(\id - \cL \big)\bD^{1/2}, \label{fromLtoA}
    \end{equation}
  
  and therefore we need the knowledge of the degree matrix $\bD$ to
  recover $\bA$. We describe in the next paragraph an estimator $\hD$ of the degree
  matrix of $\fmpr$.
  \subsection{Estimation of the degree matrix of \texorpdfstring{$\fmpr$}{TEXT}}
  We observe that \cref{lemma4} yields an estimate of the location of the blocks $B_m \times
  B_m$, $1 \le m \le M$ associated with $\sbm{\bp}{q}{n}$. An estimate of the degree matrix,
  $\hD$, is obtained by averaging the degrees of all the nodes in each block $B_m$ of $\mA$,
  \begin{equation}
    \widehat{d}_m \eqdef \sum_{(i,j)\in B_m \times B_m} \big[\mA\big]_{ij}, \mspace{16mu}
    1 \le m\le M. 
    \label{les-degres}
  \end{equation}
  \noindent A quick calculation confirms that the estimator $\widehat{d}_m$ concentrates
  around the ``within block'' expected degree $\dbar_m$ (see
  \cref{expected_normalized_Laplacian}). We have
  \begin{equation}
    \widehat{d}_m = T^{-1} \sum_{t=1}^T \sum_{(i,j)\in B_m \times B_m} a^{(t)}_{ij}.
  \end{equation}
  Therefore $\widehat{d}_m$ is the sum of $T$ independent binomial random variables, and it
  concentrates around its mean $\dbar_m$. The variation of $\widehat{d}_m$ around $\dbar_m$
  is bounded by Hoeffding inequality,
  \begin{equation}
    \forall m \in [M],  \forall \; T \ge 1,\forall \; \delta >0,
    \prob{\Big \lvert  \widehat{d}_m - \dbar_m \Big \rvert  \ge \delta
    }
    \le
    \exp{\displaystyle \big(-4T\delta^2/(\lon{B_m}(\lon{B_M} -1))\big)}.
  \end{equation}
  
     We recall that $\dbar_m \eqdef \E{d_{ii}} = \lon{B_m}\big\{p_m +
    (n/\lon{B_m} -1)q\big\}$. Since we always have $p_i \gg q$ (when $p_i \approx q$,
    communities can no longer be detected \cite{abbe18,mossel14}), we can neglect $q$ in the
    estimation of $\dbar_m$. Wherefore we conclude that $\widehat{d}_m$ is asymptotically
    unbiased when the sample size $T\rightarrow +\infty$.
  
  \subsection{An estimator of the adjacency matrix of the barycentre graph}
  We conclude from \cref{from_Laplacian_to_barycentre} that we have
  \begin{equation}
    \fmprM \eqdef \hD^{1/2} \big(\id - \hLM \big)\hD^{1/2},
    \label{le_graph_barycentre}
  \end{equation}
  where $\hLM$ is given by \cref{truncated_Laplacian}, and $\hD$ is given by
  \cref{les-degres}.

  \begin{figure}[H]
    \centerline{
      \includegraphics[width=0.275\textwidth]{A}
        \raisebox{5pc}{\includegraphics[width=1.5pc]{rotate-left}}
      \includegraphics[width=0.275\textwidth]{EA-aligned} 
        \raisebox{5pc}{\includegraphics[width=1.5pc]{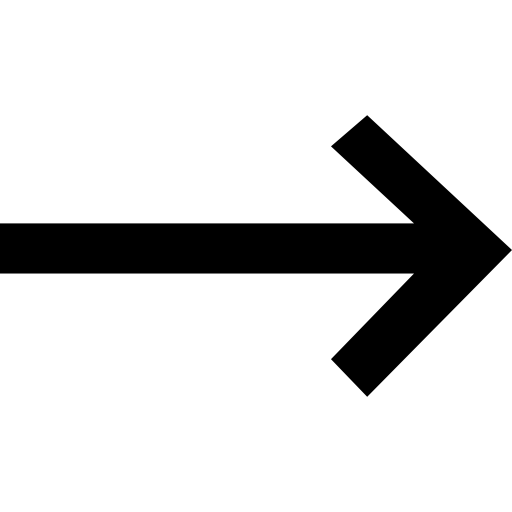}}
      \raisebox{-0.25pc}{\includegraphics[width=0.3125\textwidth]{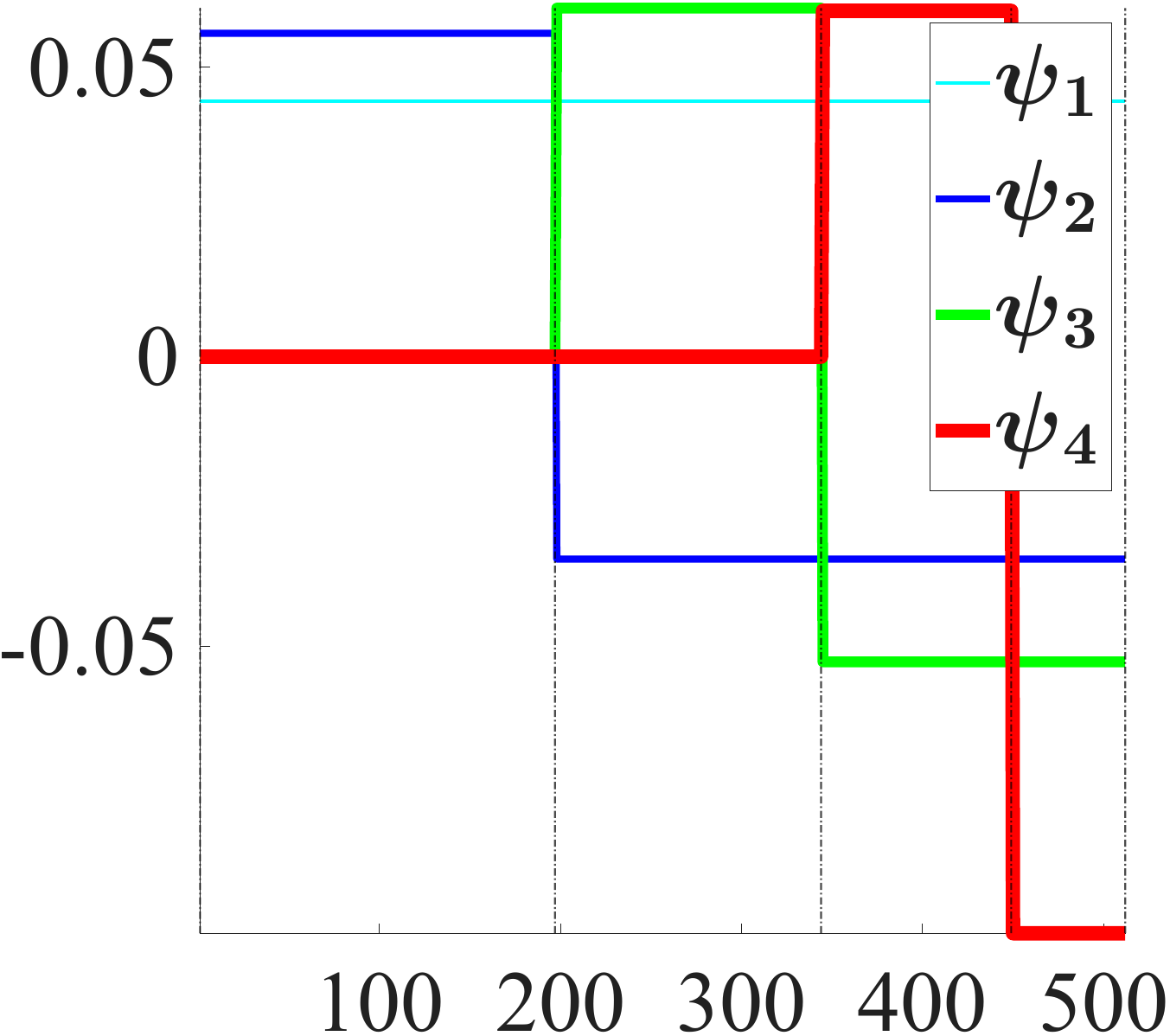}}
    }
    \caption{Left: the realization of $\sbm{\bp}{q}{n}$ shown in \cref{clustering} after a
      random permutation. Center: coarse clustering of the randomly permuted matrix.  After
      grouping nodes into homogeneous clusters, we have $M=4$ communities of sizes $197,
      105, 147, 63$; the graph size is $n=512$. The first four trivial Soules vectors
      accurately detected the boundaries between the blocks (see bars indicating the edge
      boundaries), in spite of the very low contrast between the communities (see
      \cref{clustering}-left).
      \label{les-soules}}
  \end{figure}
  \section{Experiments
    \label{experiments}}
  \subsection{Random graph models}
  We compare our theoretical analysis to finite sample estimates, which were computed using
  numerical simulations. The software used to conduct the experiments is publicly available
  \cite{meyer25}. All graphs were generated using the $\sbm{\bp}{q}{n}$ model. The nodes of
  the random realizations of the adjacency matrices are permuted with a different random
  permutation for each realization (e.g., see \cref{clustering}-centre). 
  \subsubsection{Experimental validation of \cref{lemma4}}
  
     The first experiment provides a validation of \cref{lemma4}. For this
    experiment, we use a regular $\sbm{\bp}{q}{n}$ that is not balanced. The number of
    communities is similar to the numbers that are used in the literature (e.g.,
    \cite{airoldi13,dolevzal21,gao15,gerlach18,olhede14}). Because the $\sbm{\bp}{q}{n}$ is
    not balanced, we are effectively testing \cref{BSB} outside of the theoretical
    guarantees provided by \cref{lemma4}.
  
  We use $M=4$ communities of sizes $63, 147, 105, 197$ (see \cref{clustering}-left). The edge
  probabilities were given by $p_i = c_i {\log{n}}^2/n$, where the scaling factor $c_i$ was
  chosen randomly in $[1,4]$, and $q=2 \log{n}/n$. These edge probabilities yield sparse
  graphs that are connected almost surely. We generate a randomly permuted single realisation
  of the SBM ($T=1$) (see \cref{clustering}-center, and \cref{les-soules}-left).
  
     We first illustrate the selection of the Soules vectors (guaranteed by
    \cref{lemma4}). The case $T=1$ is the least favorable scenario, since we do not expect
    that the sample mean adjacency matrix $\mA$ be close to the population mean adjacency
    matrix $\E{\pr}$. In addition, all the blocks have different sizes (the SBM is not
    balanced), and the hypotheses of \cref{lemma4} do not hold.  Whence we expect that the
    estimation of the Soules basis to be most challenging.
  As shown in \cref{les-soules}-right, the first three non trivial Soules vectors accurately
  detected the boundaries between the blocks (vertical bars located at $i = 63, 210, 315$
  mark the block boundaries), in spite of the very low contrast between the communities (see
  \cref{clustering}-left). This numerical evidence supports the theoretical analysis of
  \cref{lemma4}.
  We then evaluated the accuracy of
  \cref{le_graph_barycentre}. \cref{les-reconstructions}-left displays the original edge
  probability matrix $\bP=\E{\pr}$; \cref{les-reconstructions}-center displays the adjacency
  matrix of the barycentre graph $\fmprM$ using the top $M=4$ Soules vectors, and
  \cref{les-reconstructions}-right displays the residual error $\E{\pr} - \fmprM$. The mean
  squared error, defined by
  \begin{equation}
    \label{MSE}
    \mse{\E{\pr} - \fmprM} \eqdef \frac{1}{n^2}
    \sum_{i=1}^n \sum_{j=1}^n \Big \lvert p_{ij} - \widehat{p}_{ij} \Big \rvert^2,
  \end{equation}
  was $3.0484e-05$.

  \begin{figure}[H]
    \centerline{
      \includegraphics[width=0.3\textwidth]{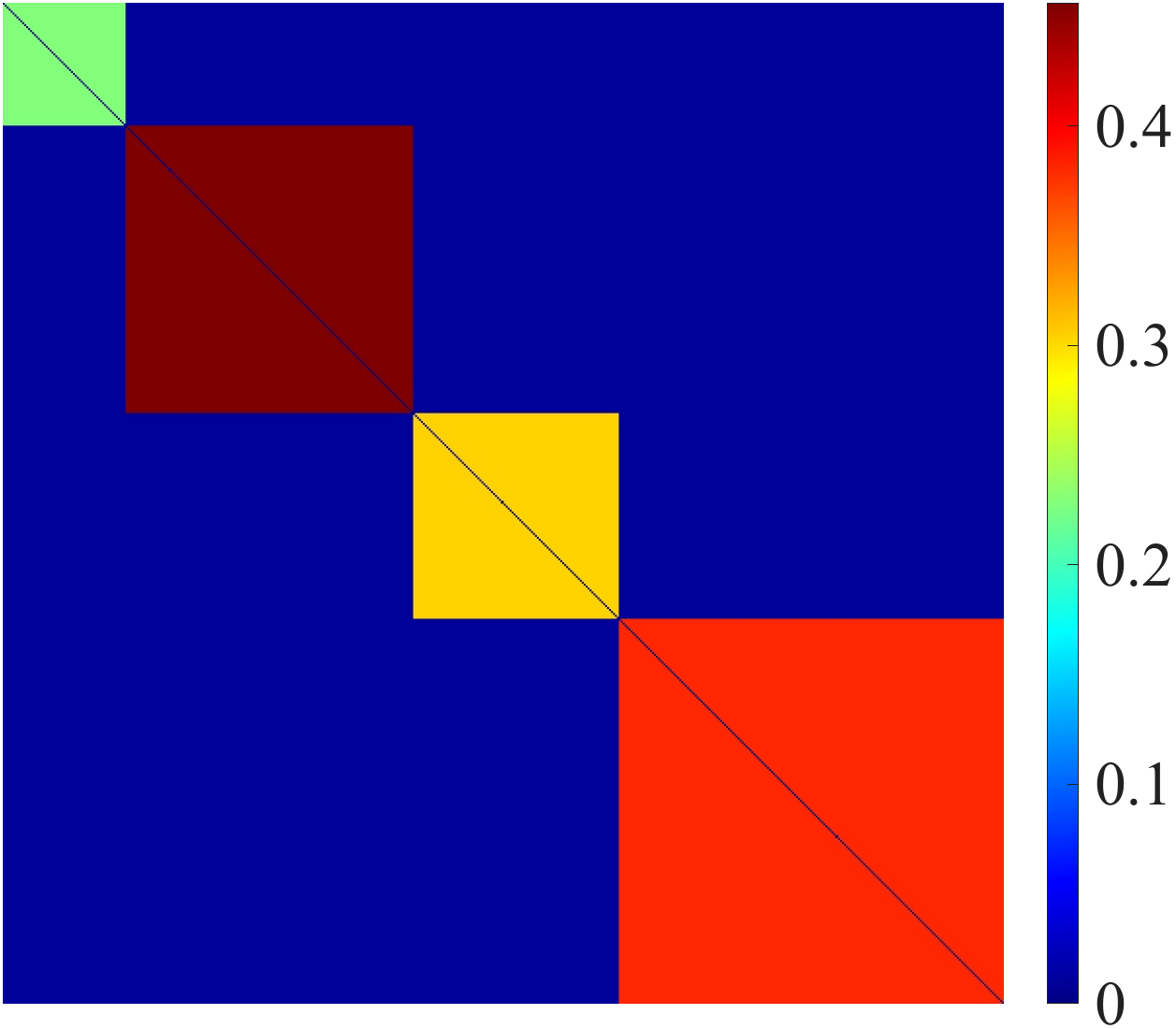}
      \includegraphics[width=0.3\textwidth]{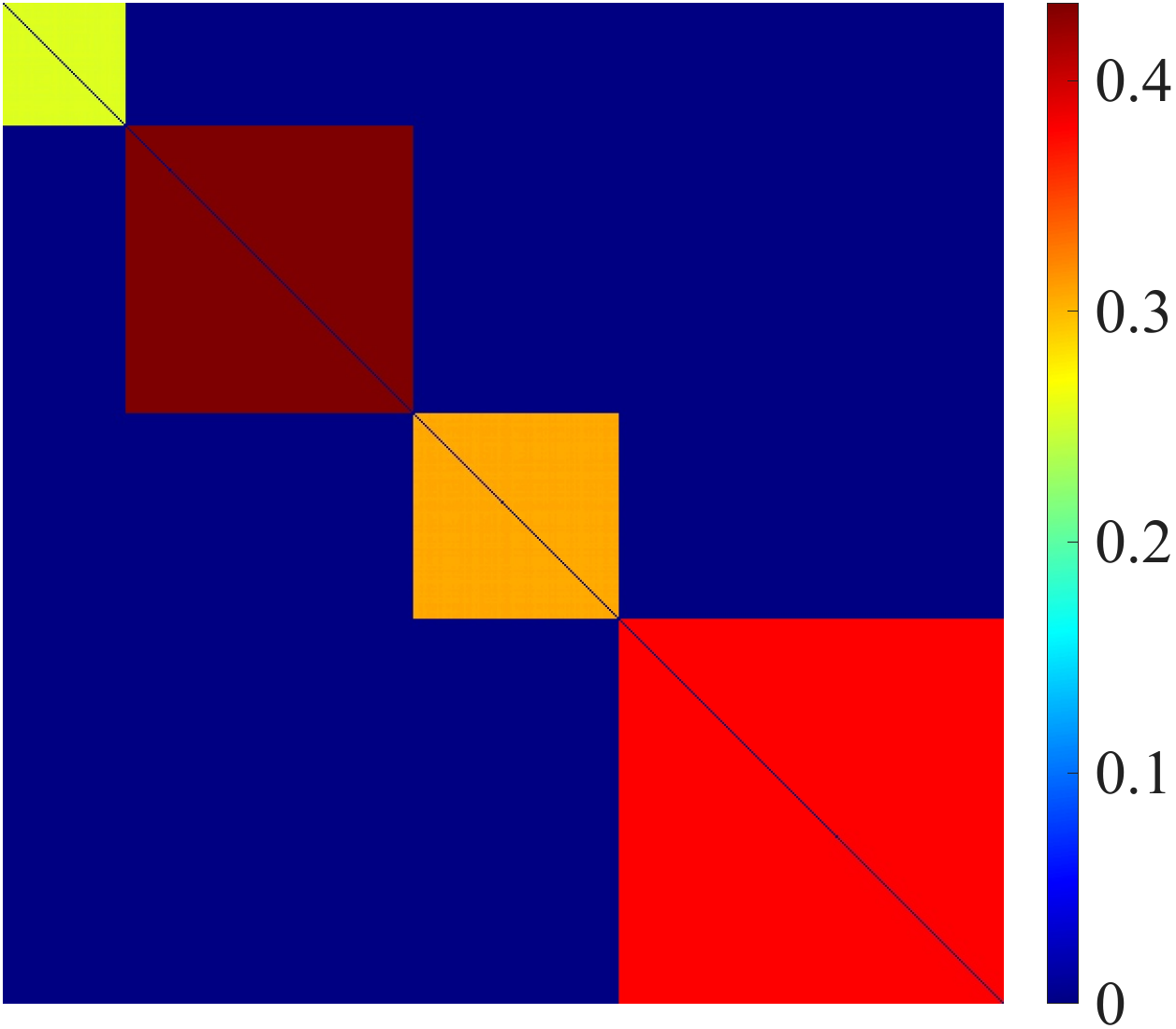}
      \includegraphics[width=0.315\textwidth]{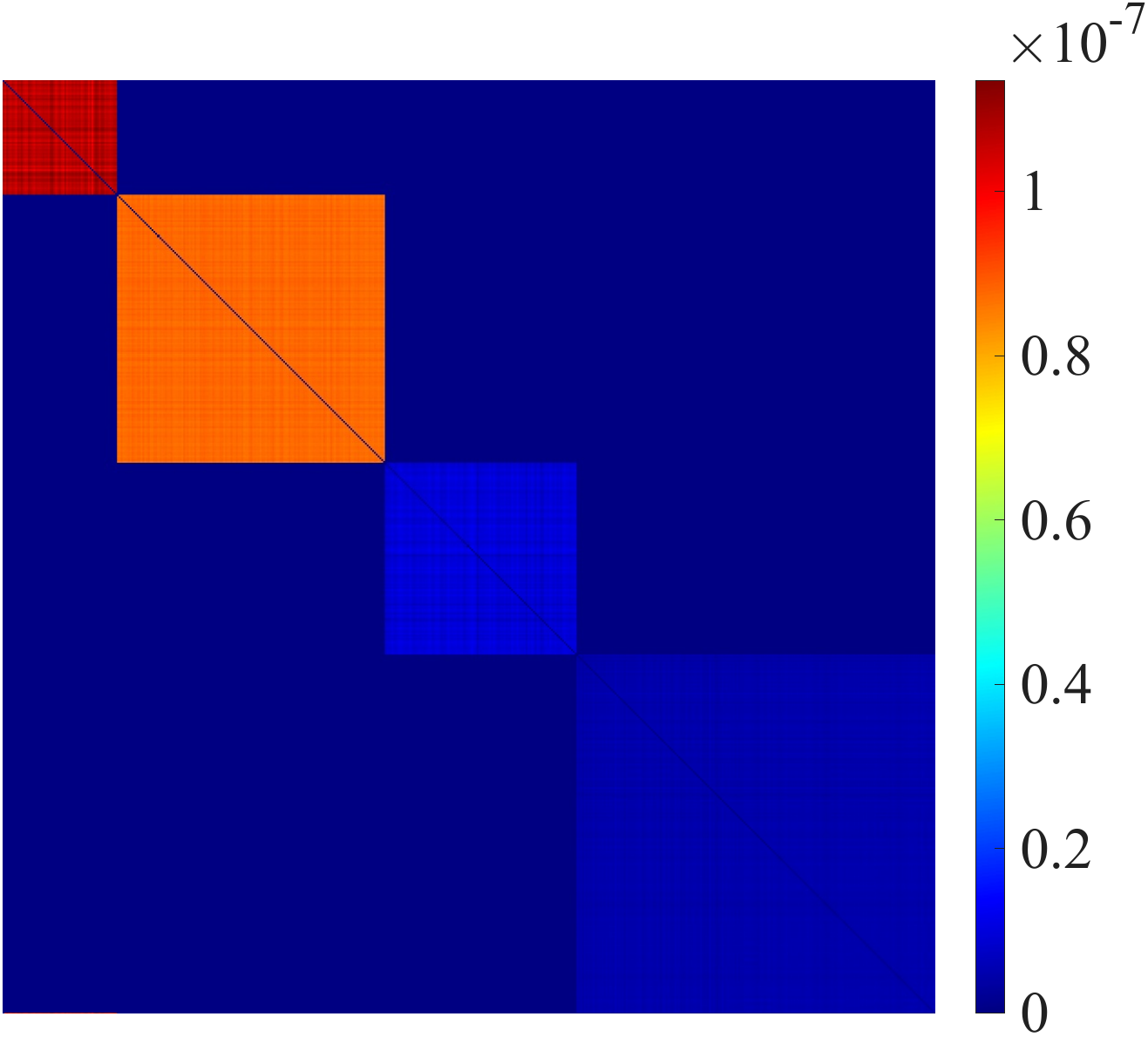}
    }
    \caption{Left: the edge probability matrix $\E{\pr} = \bP$ associated with
      $\sbm{\bp}{q}{n}$ that was used to generate the adjacency matrix in
      \cref{clustering}-left; center: the barycentre graph $\fmpr$, given by
      \cref{le_graph_barycentre}; right: the residual error between $\bP$ and $\fmpr$.
      \label{les-reconstructions}}
  \end{figure}
    \subsubsection{Rate of convergence of \texorpdfstring{$\fmprM$}{TEXT} as a function of the graph
    size} \cref{leprobleme-whp} promises that $\Lmu$ converges to $\cLbar$ with probability
  converging to $1$ as the graph size $n\rightarrow +\infty$. Therefore, we studied the
  effect of the graph size, $n$, on the mean squared error (see
  \cref{effet-des-parametres}-left).  We rescaled the $M=4$ $\sbm{\bp}{q}{n}$ model
  described in the previous paragraph, keeping the relative sizes of the communities the
  same, and increased the graph size from $n=100$ to $n=1,075$. For each value of $n$, we
  computed the mean squared error (\cref{MSE}). As expected, the error decreases as a
  function of $n$.

  We found $\mse{\E{\pr} - \fmprM} \propto n^{-1.84}$. This rate of convergence is of the
  same order as the optimal (minimax) rate for the estimation of graphons under the mean
  squared error \cite{gao15,olhede14,xu18}. Since $\fmprM$ is a stochastic block model we do
  not expect the underlying graphon to be smooth, and therefore the optimal rate is the
  bound $n^{-1}\log(M) + n^{-2} M^2$ \cite{gao15,olhede14,xu18}.
  
  We note that the alignment performed by the spectral clustering is not always accurate
  (as reflected by the presence of outliers in the plot of the mean squared error; e.g.,
  $n = 374$ in \cref{effet-des-parametres}-left). This is due to the fact that we use a
  simple
  \begin{figure}[H]
    \centerline{
      \includegraphics[width=0.375\textwidth]{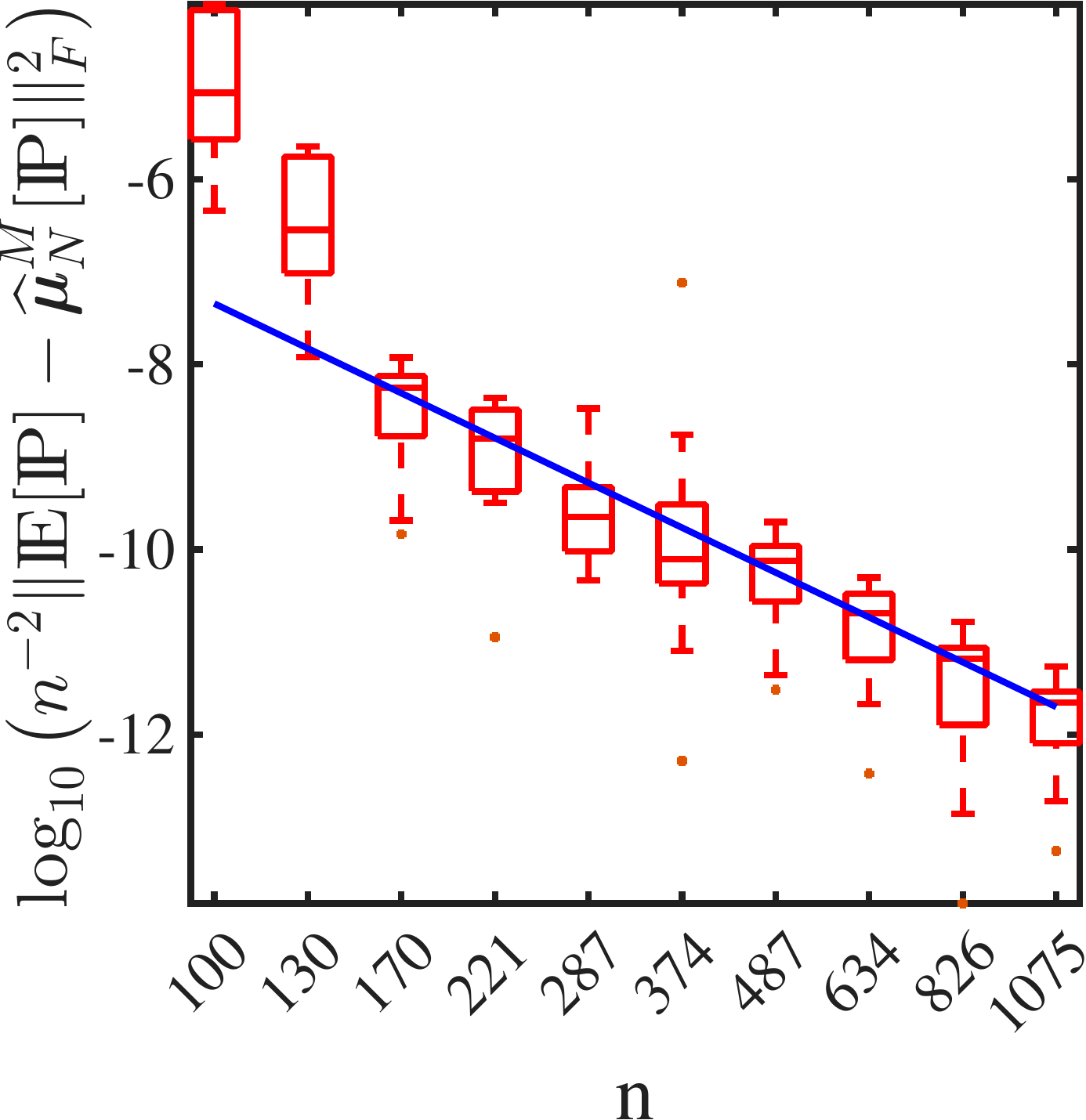}
      \hspace*{2pc}
 \raisebox{4pt}{\includegraphics[width=0.375\textwidth]{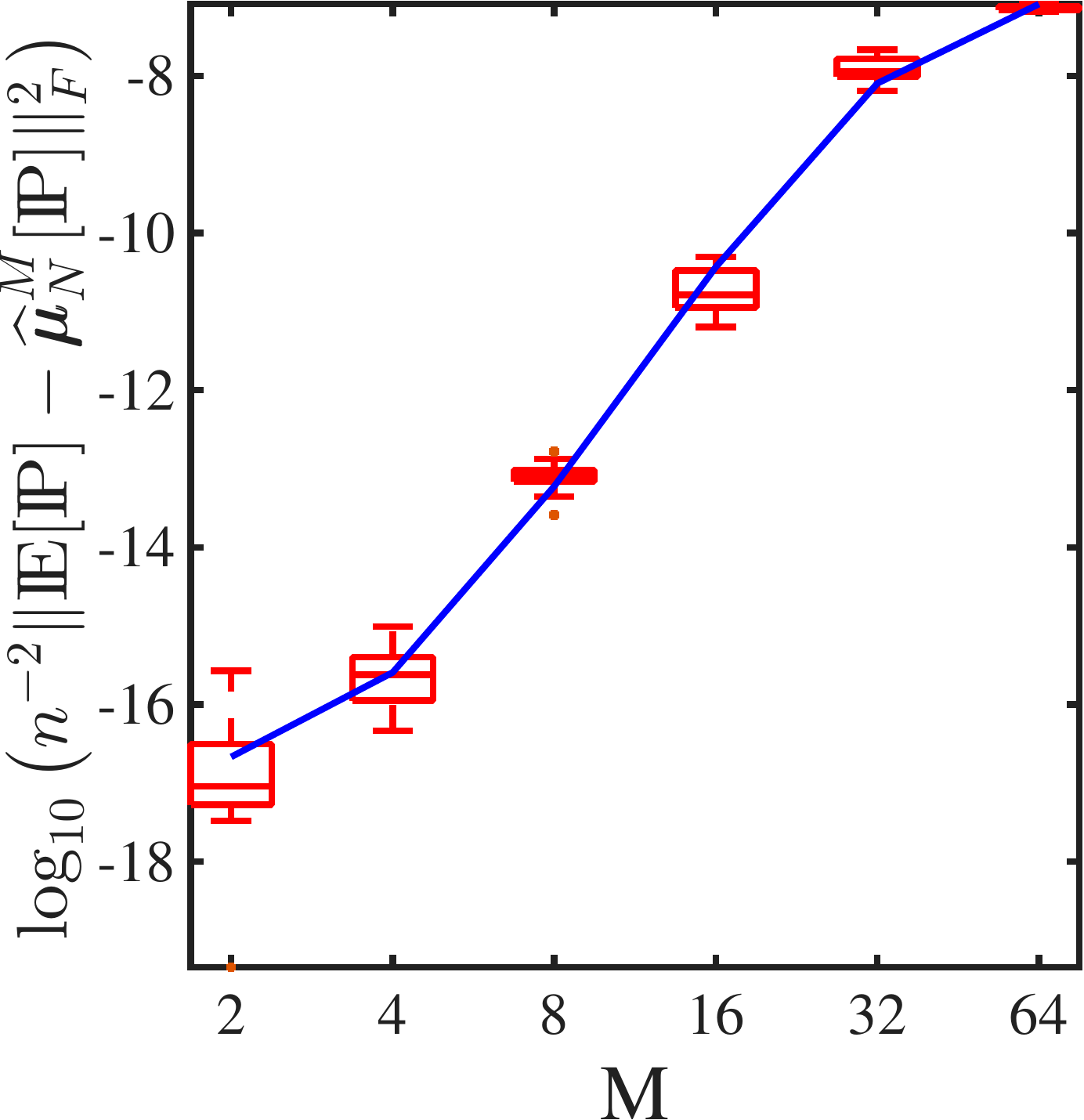}}
    }
    \caption{Left: mean squared error $\mse{\E{\pr} - \fmprM}$ as a function of the
      graph size, $n$. The graph is composed of $M=4$ communities, and is a scaled
      version of the graph shown in \cref{clustering}-left. Right: mean squared error
      $\mse{\E{\pr} - \fmprM}$ as a function of the number of blocks, $M$. Each graph is
      sampled from a balanced $\sbm{p}{q}{n}$ with $M$ blocks of size $n/M$; $p_i =
      3(\log{n})^2/n, q =2\log{n}/n$, and $n=1,024$.
      \label{effet-des-parametres}}
  \end{figure}
  \noindent spectral clustering algorithm. The present work focuses on the construction of
  the barycentre graph; the study of the combined performance of the pre-processing step
  with the computation of the barycentre is left for future work.
  \subsubsection{Effect of the number of blocks $M$}
  The next experiment illustrates the effect of the number of blocks $M$ in a balanced
  $\sbm{\bp}{q}{n}$ when the edge probabilities are equal, $p_1 = \cdots = p_M$. When $M$
  becomes large, then the first $M-1$ non trivial eigenvalues $\lambda_m$ of $\cL$, converge
  to $1$. Because these eigenvalues are no longer separated from the bulk, the regularized
  reconstruction \cref{truncated_Laplacian} becomes numerically unstable, and the
  reconstruction error increases (see \cref{effet-des-parametres}-right).
  \subsection{Real world graphs \label{lespetitsfrancais}}
  We evaluated the performance of our algorithm on a time-sequence of social-contact graphs,
  collected via RFID tags in an French primary school \cite{Stehle2011}. The dataset was
  described earlier in \cref{SBM-section}. In this dataset, the time-varying graphs undergo
  significant structural changes as they evolve in time (e.g., merging of two classes, or
  the emergence of a single subgraph as a connective hub between disparate regions of the
  graph (see \cref{lespetits}). 

  This dataset has been used as a standard benchmark to quantify the performance of
  sophisticated algorithms to analyse dynamic networks, which exhibit intricate dynamics
  and complex changes in connectivity and structural properties
  \cite{cencetti25,djurdjevac25,failla24,ferraz21,gauvin14,sattar23}.  For the purpose of
  this experiment, we think of each class as a community of connected students; despite
  the fact that classes are weakly connected (e.g., see \cref{lespetits} at times 9:00
  a.m., and 2:03 p.m.), the goal of the experiment is to recover the communities
  determined by the classes using the subset of graphs associated with the morning and
  afternoon periods separately.
  \noindent   We divide the school day into morning (8:30 AM--12:00 PM) and afternoon (2:00 PM --4:30
  PM). We exclude the lunch period because many students leave the school to take their lunch
  at home. The morning period is divided into $T=35$ time intervals of approximately 6
  minutes; the afternoon is divided into $T=26$ time intervals of approximately 6 minutes. For
  each time interval we construct an undirected unweighted graph $\Gt$, where the $n=232$
  nodes correspond to the students in the 10 classes.

  The morning barycentre graph is computed using the $T=35$ graphs associated with the
  morning period. The afternoon barycentre graph is determined using the $T=26$ afternoon
  graphs. We display in \cref{le_graph_du_matin}-left the graph
  \noindent   \begin{figure}[H]
    \centerline{
      \includegraphics[width=0.8\textwidth]{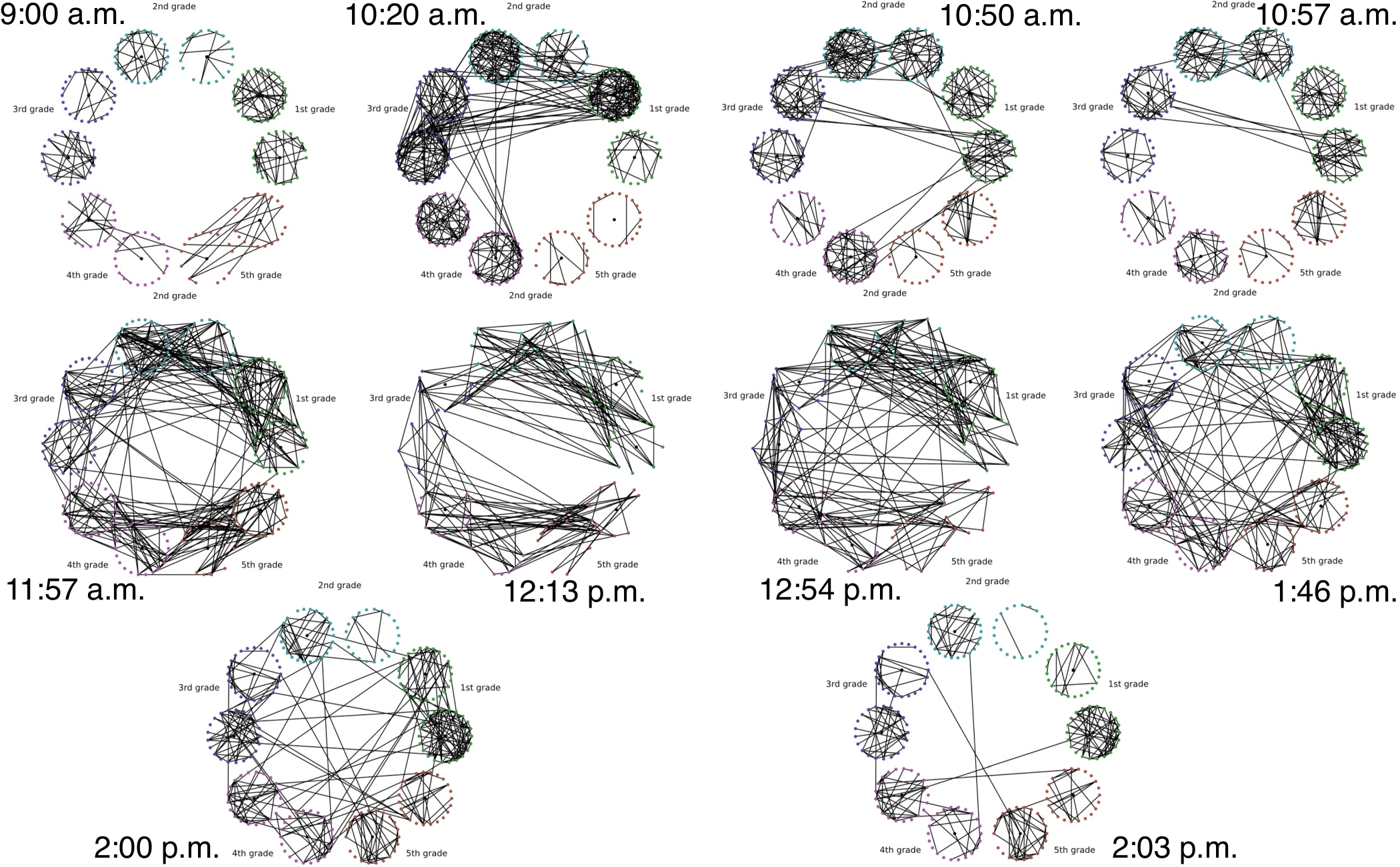}}
    \caption{Top to bottom, left to right: snapshots of the face-to-face contact graph at times
      (shown next to each graph) surrounding significant topological changes.
      \label{lespetits}}
  \end{figure}
  \begin{figure}[H]
    \centerline{
      \includegraphics[width=0.5\textwidth]{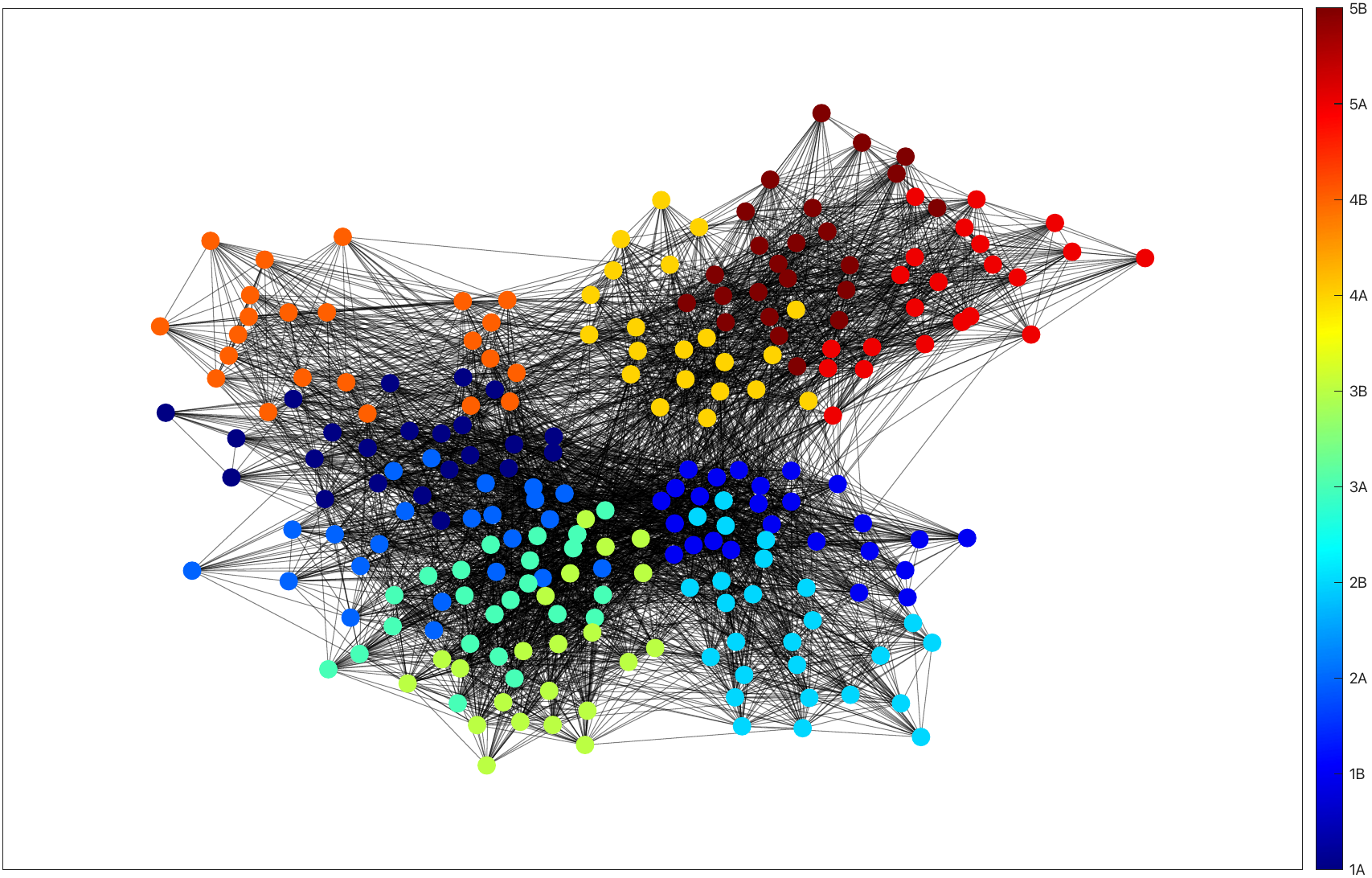}
      \includegraphics[width=0.5\textwidth]{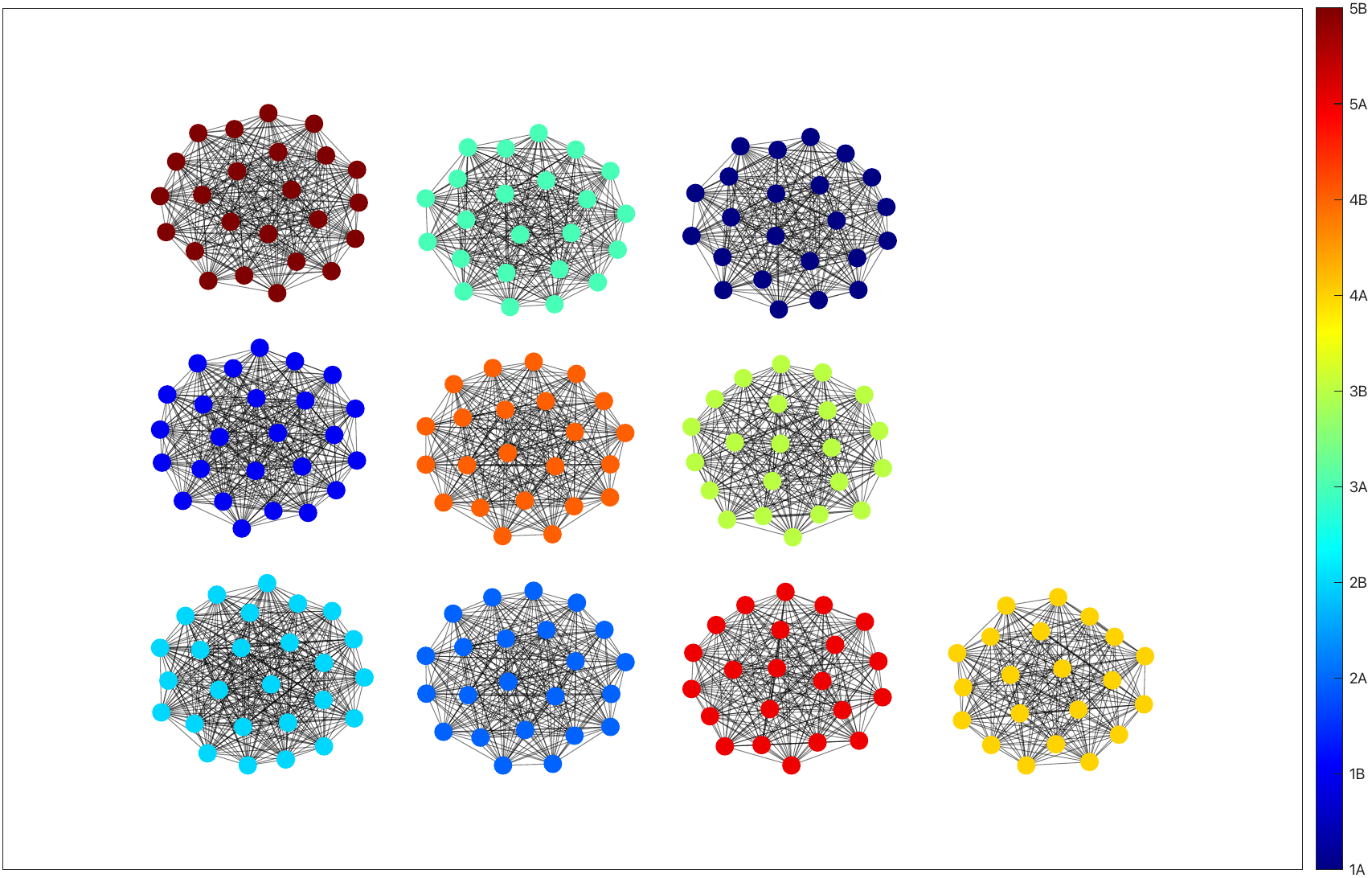}
    }
    \caption{Morning period. Left: graph of the sample mean adjacency matrix $\mA$ computed
      over the morning period; right: barycentre graph $\fmprM$ computed
      over the morning period.\label{le_graph_du_matin}}
  \end{figure}
  \begin{figure}[H]
    \centerline{
      \includegraphics[width=0.5\textwidth]{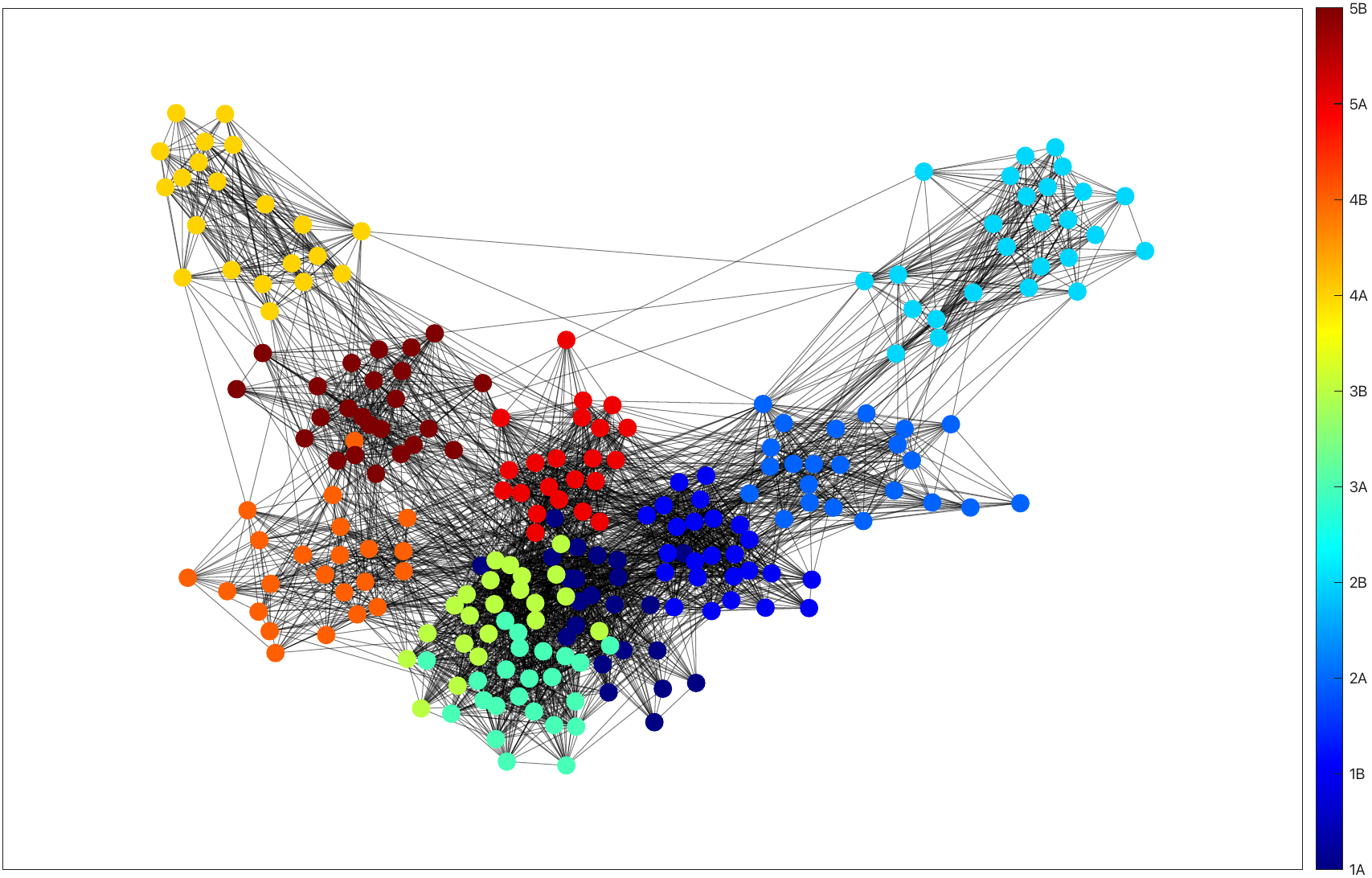}
      \includegraphics[width=0.5\textwidth]{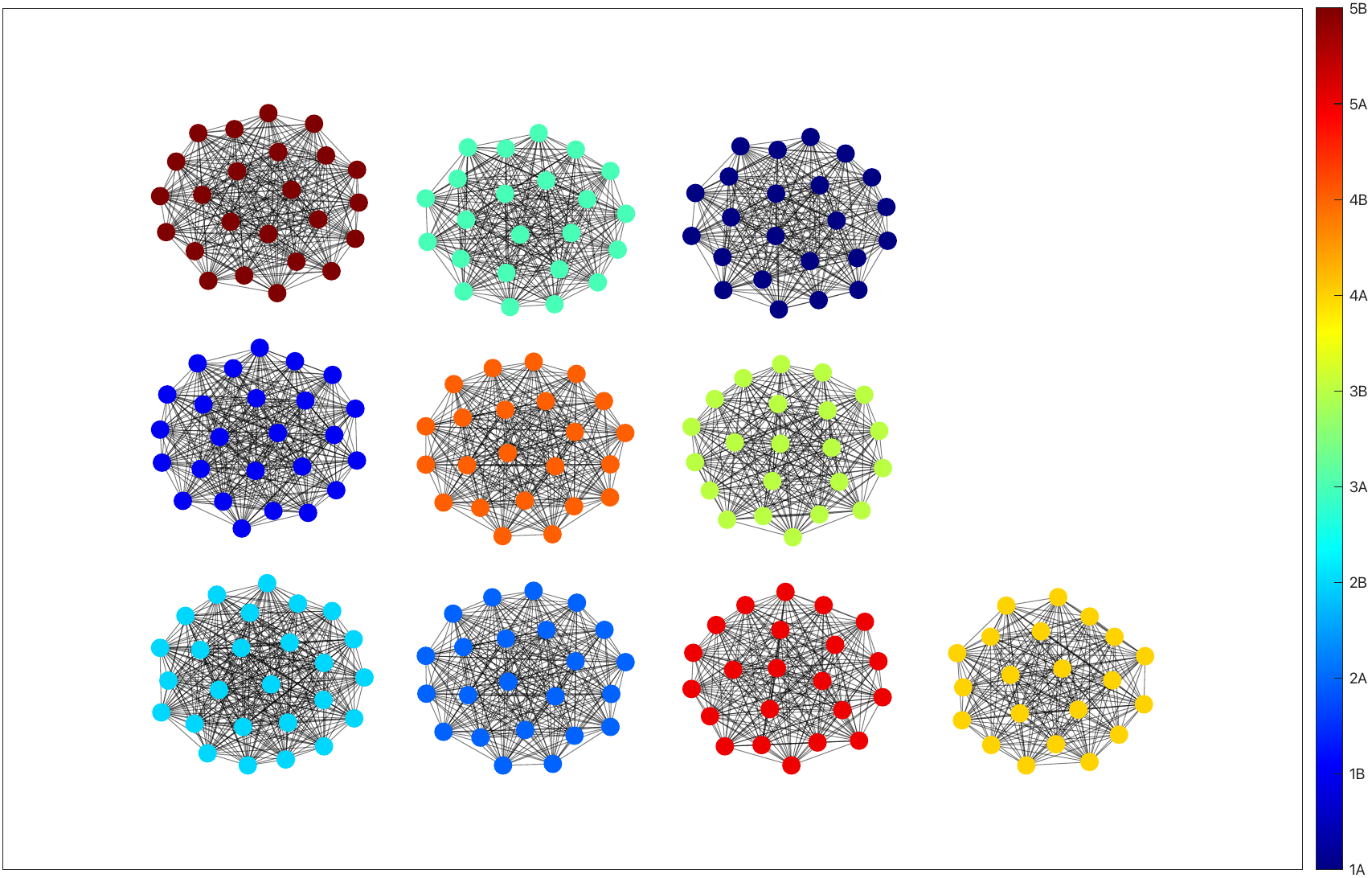}
    }
    \caption{Afternoon period. Left: graph of the sample mean adjacency matrix $\mA$
      computed over the afternoon period; right: barycentre graph $\fmprM$ computed over the
      afternoon period. \label{le_graph_de_lapresmidi}}
  \end{figure}
\noindent associated with the sample mean adjacency matrix $\mA$. We observe that the
events such as lunchtime and recess trigger significant increases in the number of links
between the communities, and disrupt the community structure (see \cref{lespetits}).
  \noindent As a result the community structure associated with the individual classes collapses in
  the graph constructed from $\mA$, both for the morning and afternoon periods (see
  \cref{le_graph_du_matin}-left, and \cref{le_graph_de_lapresmidi}-left). In comparison, the
  barycentre graph, which does not rely on the presence or absence of edges, is able to
  recover the individual classes (see \cref{le_graph_du_matin}-right, and
  \cref{le_graph_de_lapresmidi}-right). \Cref{les_valeurs_propres}, which was displayed in
  \cref{SBM-section}, displays the histogram of all the eigenvalues of $\cL$ integrated over
  the morning (left) and afternoon (right) periods. The ten lowest eigenvalues are separated
  from the bulk in the morning and afternoon distributions (see \cref{les_valeurs_propres}), which guarantees that the graph associated with
  these distributions will have a community structure. The separation is more noticeable
  during the morning since students spend more time in their classroom during this period.
  \section{Discussion
    \label{la-discussion}}
   
    In this work, we address the following question: when one computes a barycentre graph
    using a spectral pseudo-distance, one needs a basis of eigenvectors to reconstruct the
    barycentre graph: the distance only yields the eigenvalues. This paper contains several
    original contributions: (1) we formally define a criterion for estimating the
    eigenvectors, (2) we devise an algorithm to estimate the eigenvectors using the library of
    Soules vectors, (3) we provide a theoretical justification for the algorithm (in the
    context of the analysis of balanced Stochastic Block Models), and (4) we conduct
    experiments that suggest that the algorithm performs as expected beyond the ensemble of
    balanced stochastic block models. As many papers in the research area of data
    representation and analysis, this manuscript combines algorithms with theory, where the
    theory is sometimes developed in a more restricted framework.

    This work opens the door to further studies. Our experiments suggest that the method works
    beyond the restricted scope of balanced SBM, and one should be able to prove that
    \cref{BSB} solves \cref{le_gros_probleme} when graphs are sampled from
    $\sbm{\bp}{q}{n}$. A second extension concerns the application of \cref{BSB} to graphs
    that have different sizes. Because we work with the normalized graph Laplacian, one can
    always compare interpolate a discrete spectrum on the interval $[0,2]$, making the
    comparison of two graphs of different size possible. Finally, the analysis of the
    performance of the algorithm with graphs sampled from the preferential attachment models
    \cite{Yule1925,Barabasi1999} would have a significant practical impact. While less is
    known about the eigenvalues of the normalized Laplacian \cite{Flaxman05,Farkas2001}, the
    topology of the preferential attachment model (with the high-degree vertices) is
    frequently found in real world networks, and therefore this model is worthy of attention.
  
  \section*{Acknowledgments}
  The author is grateful to the anonymous reviewers for their insightful comments and
  suggestions that greatly improved the content and presentation of this manuscript.
  \section{Additional proofs
    \label{les-preuves}}

  \subsection{Proof of \cref{lemma4}
    \label{proof-lemma4}}
  We derive the proof of \cref{lemma4}. In the process, we prove several technical lemmata. 
  \subsubsection{The tensor product \texorpdfstring{$\bpsi_l\bpsi_l^T$}{TEXT}}
  \begin{lemma}
    \label{computation-psik}
    We choose $\bpsi_1 \eqdef \scalebox{0.8}{$\small n^{-1/2}$} \one$, and denote by
    $\bpsi_l$ the Soules vector returned by \cref{BSB} at level $l$. Let
    $\supp{\bpsi_l} = [i_0,i_1]$, and let $\ist$ be the location of the split in $[i_0,i_1]$
    such that $\rstr{\bpsi_l}{[i0,\ist]} > 0$ and $\rstr{\bpsi_l}{[\ist+1,i_1]} < 0$ (see
    \cref{one-iteration}). Then,
    \begin{equation}
      \bpsi_l \bpsi_l^T (i,j) =
      \frac{1}{i_1 - i_0 + 1}
      \begin{cases}
        \displaystyle \frac{i_1 - \ist}{\ist - i_0 +1} & \text{if} \mspace{8mu} i_0 \le i,j \le
        \ist,\\
        \\
        \displaystyle \frac{\ist - i_0 +1}{i_1 -\ist} & \text{if} \mspace{8mu} \ist+1 \le i,j \le i_1,\\\\
        -1 &\text{if} \mspace{8mu}
        \begin{cases}
          i_0 \le i \le \ist, \mspace{16mu} \ist+1 \le j \le i_1,\\
          \ist+1 \le i \le i_1,  \mspace{16mu} i_0 \le j \le \ist,
        \end{cases}\\
        0 & \text{otherwise.}
      \end{cases}
      \label{psikpsik}
    \end{equation}
  \end{lemma}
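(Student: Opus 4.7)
The plan is to prove \cref{computation-psik} by direct computation: substitute $\bpsi_1 = n^{-1/2}\one$ into the Soules recursion \cref{from_l_to_plusone}, simplify the resulting scalar coefficients, and then form the outer product $\bpsi_l\bpsi_l^T$ entrywise to match the four cases in \cref{psikpsik}.

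First I would record the three relevant partial norms. Since $\bpsi_1 = n^{-1/2}\one$, the three restrictions have squared norms
\begin{equation}
\|\bpsi_1(i_0{:}i_1)\|^2 = \tfrac{i_1-i_0+1}{n}, \quad \|\bpsi_1(i_0{:}\ist)\|^2 = \tfrac{\ist-i_0+1}{n}, \quad \|\bpsi_1(\ist{+}1{:}i_1)\|^2 = \tfrac{i_1-\ist}{n}.
\end{equation}
Plugging these into \cref{from_l_to_plusone} and using $\bpsi_1(i) = n^{-1/2}$ on the supports, the $n$-factors cancel and I obtain the closed-form scalar values
\begin{equation}
\bpsi_l(i) = \sqrt{\tfrac{i_1-\ist}{(i_1-i_0+1)(\ist-i_0+1)}} \text{ for } i_0 \le i \le \ist, \quad \bpsi_l(i) = -\sqrt{\tfrac{\ist-i_0+1}{(i_1-i_0+1)(i_1-\ist)}} \text{ for } \ist{+}1 \le i \le i_1,
\end{equation}
and $\bpsi_l(i) = 0$ otherwise.

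Next I would form $\bpsi_l\bpsi_l^T(i,j) = \bpsi_l(i)\bpsi_l(j)$ and split into four cases according to where $i$ and $j$ sit relative to $\ist$. In each of the three nontrivial cases the square roots combine algebraically, one common factor of $(i_1-i_0+1)$ appears in the denominator, and the remaining ratios collapse to $(i_1-\ist)/(\ist-i_0+1)$ (positive block above $\ist$), $(\ist-i_0+1)/(i_1-\ist)$ (positive block below $\ist$), and $-1$ (the two off-diagonal rectangles), which is exactly \cref{psikpsik}. When either $i$ or $j$ lies outside $[i_0,i_1]$, $\bpsi_l$ vanishes there and so does the product.

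There is no real obstacle: the statement is essentially an unpacking of \cref{from_l_to_plusone} specialized to $\bpsi_1 = n^{-1/2}\one$. The only thing to be careful about is keeping track of the signs (the second branch of \cref{from_l_to_plusone} carries a minus sign) and the off-by-one counts in the interval lengths $\ist-i_0+1$ and $i_1-\ist$, which must be consistent with the convention $[i_0,\ist] \cup [\ist+1, i_1] = [i_0,i_1]$ used in \cref{one-iteration}.
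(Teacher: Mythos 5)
Your proposal is correct and follows essentially the same route as the paper: the paper likewise observes that with $\bpsi_1 = n^{-1/2}\one$ the recursion \cref{from_l_to_plusone} yields the piecewise-constant values $\bpsi_l(i) = \pm\frac{1}{\sqrt{i_1-i_0+1}}\sqrt{\frac{i_1-\ist}{\ist-i_0+1}}^{\,\pm 1}$ on the two sub-intervals, and then computes the tensor product entrywise. Your bookkeeping of the partial norms and the sign in the second branch matches the paper's calculation exactly.
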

  \begin{proof}
    The proof is an elementary calculation based on the definition  of $\bpsi_l$ given by
    \cref{from_l_to_plusone}, and the observation that $\bpsi_1(i) = n^{-1/2}$. Indeed,
    we know from \cref{from_l_to_plusone} that $\bpsi_l$ is piecewise constant, and given by
    \begin{equation}
      \bpsi_l (i) = \frac{1}{\sqrt{i_11 - i_0 +1}}
      \begin{cases}
        \displaystyle \mspace{16mu} \frac{\sqrt{i_1 -\ist}}{\sqrt{\ist - i_0 +1}}   &
        \text{if} \mspace{8mu} i_0 \le i \le \ist,\\ \\
        \displaystyle - \frac{\sqrt{\ist - i_0 +1}}{\sqrt{i_1 -\ist}} & \text{if} \mspace{8mu} \ist+1 \le i \le i_1,\\
        0 & \text{otherwise.}
      \end{cases}
      \label{levecteur_psi_l}
    \end{equation}
    The computation of the tensor product is immediate and yields the advertised result.
  \end{proof}
  \subsubsection{The matrix \texorpdfstring{$\sum_{k=1}^M \bpsi_k \bpsi_k^T$}{TEXT}}
  In the following corollary, we describe the matrix $\sum_{k=1}^M \bpsi_k
  \bpsi_k^T$. When combined with \cref{corollary3p51}, we use this corollary to
  reconstruct the geometry of the blocks in the SBM.
  \begin{corollary}
    \label{corollaryp37}
    Let $\{J_m\}$ be the $M$ leaves in the binary Soules tree (these are intervals that are no
    longer split) after $M$ steps of \cref{BSB}. Then $E_M \eqdef \sum_{k=1}^M
    \bpsi_k \bpsi_k^T$ is equal to
    \begin{equation}
      e_M(i,j) =
      \begin{cases}
        \displaystyle \frac{1}{\lon{J_m}} & \text{if} \mspace{8mu} (i,j) \in J_m \times J_m,\\
        0 & \text{otherwise.}
      \end{cases}
      \label{EM}
    \end{equation}
    Also, 
    \begin{equation}
      \begin{cases}
        \sum_{k=2}^M \bpsi_k \bpsi_k^T (i,j) > 0 & \text{if}\mspace{8mu} \exists m \in
        \{1, 2 \ldots, M\}, \mspace{8mu} (i,j) \in J_m \times J_m,\\
        \sum_{k=2}^M \bpsi_k \bpsi_k^T (i,j) < 0 & \text{otherwise}
      \end{cases}
      \label{EMmoinsPsi1}
    \end{equation}
  \end{corollary}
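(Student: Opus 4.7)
The plan is a straightforward induction on the level $l$ of the Soules construction, combined with the explicit form of the rank-one updates given by \cref{computation-psik}. Specifically, I would prove the following stronger statement: for every $l \in \{1,\ldots,M\}$, if $\{J^{(l)}_1,\ldots,J^{(l)}_l\}$ denote the leaves of the binary Soules tree after the first $l$ steps of \cref{BSB}, then
\begin{equation}
\sum_{k=1}^{l}\bpsi_k\bpsi_k^T(i,j) \;=\; \begin{cases} 1/\lon{J^{(l)}_m} & \text{if } (i,j)\in J^{(l)}_m\times J^{(l)}_m,\\ 0 & \text{otherwise.}\end{cases}
\end{equation}
Specializing this to $l=M$ immediately yields \cref{EM}.

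The base case $l=1$ is trivial since $\bpsi_1=n^{-1/2}\one$ gives $\bpsi_1\bpsi_1^T=(1/n)\bJ$ and the unique leaf at level $1$ is $[n]$. For the inductive step, assume the claim holds at level $l$, and suppose the $(l+1)$-th step of \cref{BSB} splits some leaf $J^{(l)}_q=[i_0,i_1]$ at $\ist$, so that $J^{(l+1)}_q=[i_0,\ist]$ and $J^{(l+1)}_{q+1}=[\ist+1,i_1]$ while all other leaves are unchanged. The induction hypothesis says the sum at level $l$ equals $1/(i_1-i_0+1)$ on $[i_0,i_1]\times[i_0,i_1]$, and \cref{psikpsik} gives the entries of $\bpsi_{l+1}\bpsi_{l+1}^T$ on this block. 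A direct algebraic check shows that on $[i_0,\ist]\times[i_0,\ist]$ the new sum equals
\begin{equation}
\frac{1}{i_1-i_0+1}\Bigl(1+\frac{i_1-\ist}{\ist-i_0+1}\Bigr)=\frac{1}{\ist-i_0+1},
\end{equation}
symmetrically $1/(i_1-\ist)$ on $[\ist+1,i_1]\times[\ist+1,i_1]$, and the cross blocks contribute $1/(i_1-i_0+1)+(-1)/(i_1-i_0+1)=0$. Outside $[i_0,i_1]\times[i_0,i_1]$ the increment $\bpsi_{l+1}\bpsi_{l+1}^T$ vanishes, so the rest of the block pattern is preserved. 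This closes the induction.

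For the second claim \cref{EMmoinsPsi1}, I would simply subtract $\bpsi_1\bpsi_1^T=(1/n)\bJ$ from the identity just established. On a diagonal block $J_m\times J_m$ one gets $1/\lon{J_m}-1/n$, which is strictly positive because the splitting in \cref{BSB} produces at least two nonempty leaves (so $\lon{J_m}<n$ whenever $M\ge 2$). Off the diagonal blocks, the value is $0-1/n=-1/n<0$. There is essentially no obstacle here; the only care needed is to track the block bookkeeping carefully in the inductive step and to observe that each $\bpsi_{l+1}$ produced by \cref{BSB} is supported on exactly one current leaf, so updates at disjoint leaves do not interfere.
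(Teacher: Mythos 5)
Your proposal is correct and follows essentially the same route as the paper: an induction on the number of splits, using the explicit entries of $\bpsi_{l+1}\bpsi_{l+1}^T$ from \cref{computation-psik} to verify that the split leaf's block value $1/(i_1-i_0+1)$ updates to $1/(\ist-i_0+1)$ and $1/(i_1-\ist)$ on the two children and cancels on the cross blocks, followed by subtracting $\bpsi_1\bpsi_1^T=(1/n)\bJ$ for the sign claim. Your justification of strict positivity (namely $\lon{J_m}<n$ when $M\ge 2$) is in fact the correct condition, stated more cleanly than in the paper's proof.
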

  \begin{proof}
    We first observe that after $M$ iterations of \cref{BSB} there are $M$ intervals
    $J_m$ that are not split (the leaves in the binary tree shown in
    \cref{the-soules-tree}, where we count the construction of $\bpsi_1$ as the first
    iteration of the algorithm ($M=1$). This can be proved by induction, after observing that
    an iteration of \cref{BSB}, described by \cref{from_l_to_plusone}, turns exactly
    one leaf in the tree into two leaves.\\

    \noindent Next, we prove that $\sum_{k=1}^M \bpsi_k \bpsi_k^T$ is nonnegative on each $J_m
    \times J_m$, $1\le m \le M$. Since, each interval $J_m$ is a leaf of the tree, the
    interval $J_m$ is not further decomposed, and there exists a vector $\bpsi_k$ such that
    $\rstr{\bpsi_k}{J_m} > 0$ or $\rstr{\bpsi_k}{J_m} < 0$ (see
    \cref{the-soules-tree}-right). We can therefore apply \cref{computation-psik},
    with $J_m = [i_0,\ist]$ or $J_m = [\ist, i_1]$, and $\bpsi_k\bpsi_k^T $ is constant on $J_m
    \times J_m$ (see \cref{psikpsik}). All other larger scale (vectors  $\bpsi_k$ such that
    $J_m \subset \supp{\bpsi_k}$, also keep a constant value on $J_m$, and therefore $\bpsi_k
    \bpsi_k^T $ is constant on $J_m \times J_m$. We conclude that $\sum_{k=1}^M \bpsi_k
    \bpsi_k^T$ is constant on each $J_m \times J_m$, $1\le m \le M$.\\

    \noindent We can then prove by induction that
    \begin{equation}
      e_M(i,j) =
      \begin{cases}
        \displaystyle \frac{1}{\lon{J_m}} & \text{if} \mspace{8mu} (i,j) \in J_m \times J_m,\\
        0 & \text{otherwise.}
      \end{cases}
      \label{le-EM}
    \end{equation}
    For $M=1$ there is nothing to prove, since $\bpsi_1 = \scalebox{0.8}{$n^{-1/2}$}
    \one$. Now, assume that \cref{le-EM} holds for $M\ge 1$, then $E_{M+1} = E_M +
    \bpsi_{M+1} \times \bpsi_{M+1}$, and $\bpsi_{M+1}$ is created by splitting an interval
    $J_q, 1 \le q \le M$, such that $\supp{\bpsi_{M+1}} = J_q$, and
    $\rstr{\bpsi_{M+1}}{J_m} = 0$ for all $m \ne q$. Since $J_q$ is the only block that
    changes when going from $M$ to $M+1$, $\sum_{k=1}^{M+1} \bpsi_k \bpsi_k^T$ is equal to
    $\sum_{k=1}^M \bpsi_k \bpsi_k^T$ on all the other blocks. Using the induction
    hypothesis, we then have for all $m\ne q$,
    \begin{equation}
      \forall (i,j) \in J_m \times J_m, \mspace{16mu}
      e_{M+1} (i,j) = e_M(i,j) = \displaystyle \frac{1}{\lon{J_m}}.
    \end{equation}
    We are left with the computation of $\sum_{k=1}^{M+1} \bpsi_k \bpsi_k^T$ on $J_q \times
    J_q$. Let us define $i_0$ and $i_1$ such that $J_q = [i_0,i_1]$, and let $\ist$ be the
    index where $J_q$ is split, $J_q = [i_0,\ist] \cup [\ist+1,i_1]$. Then using
    \cref{computation-psik} we have for all $(i,j) \in J_q\times J_q$,
    \begin{equation}
      \bpsi_{M+1} \times \bpsi_{M+1} (i,j) =
      \frac{1}{i_1 - i_0 + 1}
      \begin{cases}
        \displaystyle \frac{i_1 - \ist}{\ist - i_0 +1} & \text{if} \mspace{8mu} (i,j) \in [i_0, \ist] \times [i_0,\ist],\\
        \displaystyle  \frac{\ist - i_0 +1}{i_1 -\ist} & \text{if} \mspace{8mu} (i,j) \in [\ist+1,i_1] \times [\ist+1,i_1],\\
        -1 &\text{otherwise.}
      \end{cases}
    \end{equation}
    From the induction hypothesis, we have for all  $(i,j) \in J_q\times J_q$, $e_M (i,j) =
    \lon{i_1 - i_0 + 1}^{-1}$. Adding $\sum_{k=1}^M \bpsi_k \bpsi_k^T$ and $\bpsi_{M+1} \bpsi_{M+1}^T$ yields
    for all  $(i,j)
    \in J_q\times J_q$,
    \begin{equation}
      e_{M+1} (i,j) =
      \begin{cases}
        \displaystyle \frac{1}{\ist - i_0 +1} & \text{if} \mspace{8mu} (i,j) \in [i_0, \ist] \times [i_0,\ist],\\
        \displaystyle \frac{1}{i_1 -\ist} & \text{if} \mspace{8mu} (i,j) \in [\ist+1,i_1] \times [\ist+1,i_1],\\
        0 &\text{otherwise,}
      \end{cases}
    \end{equation}
    which concludes the case for $M+1$. By induction, \cref{le-EM} holds for all $M$.\\

    \noindent We conclude the proof of \cref{corollaryp37} by proving
    \cref{EMmoinsPsi1}. Let $i,j$ be two nodes in the leaf $J_q$ then $e_M(i,j) =
    \lon{J_q}^{-1}$. Also, $\bpsi_1 \times \bpsi_1 (i,j) = n^{-1}$, and thus
    \begin{equation}
      \sum_{m=2}^M \bpsi_k \bpsi_k^T (i,j)  = \frac{1}{\lon{J_q}} - \frac{1}{n} >  0,
    \end{equation}
    since $\lon{J_q} > 1$. Now,  if $(i,j)$ is not in any blocks $J_q \times J_q$, then
    $e_M(i,j) = 0$, and therefore $e_M(i,j) - \bpsi_1 \times \bpsi_1 (i,j) = -n^{-1} <
    0$.
  \end{proof}
  We now prove a series of lemmata that address the performance of \cref{BSB} and its
  ability to detect the blocks of a general $\sbm{\bp}{q}{n}$ by aligning the successive
  $\psi_m$ with the block boundaries. The proof hinges on the study of one iteration of
  \cref{BSB}, as explained in \cref{maximum-coefficient}.
  \subsubsection{One iteration of \cref{BSB}}
  The next lemma studies a single iteration of \cref{BSB}, which leads to
  the construction of the Soules vector $\bpsi_l$. We assume that $\supp{\bpsi_l} =
  [i_0,i_1]$, and we consider the matrix $\bP$ that is nonzero only on $[i_0,i_1] \times
  [i_0,i_1]$, and is piecewise constant on two blocks $J_0 \times J_0$ and $J_1 \times J_1$,
  where $J_0 = [i_0,j]$, and $J_1 = [j+1,i_1]$ (see \cref{2blocks}),
  \begin{equation}
    \bP
    = p_0 \big(\one_{J_0}\one_{J_0}^T\big)
    + p_1 \big(\one_{J_1}\one_{J_1}^T \big) 
    + q\big( \one_{J_0}\one_{J_1}^T +\one_{J_1} \one_{J_0}^T \big).
  \end{equation}
  \noindent We prove that in order to maximize $\lvert \ipr{\bpsi_l
    \bpsi_l^T}{\bP}\rvert^2$, \cref{BSB} must always align $\ist$ (the zero-crossing
  of $\bpsi_l$) with the jump in the SBM inside $\supp{\bpsi_l\bpsi_l^T}$ (see
  \cref{2blocks}). 
  \begin{lemma}
    \label{maximum-coefficient}
    Let $\bpsi_l$ be the Soules vector returned by \cref{BSB} at level $l$ with
    support $\supp{\bpsi_l} \eqdef [i_0,i_1]$. We consider the matrix $\bP$ that is nonzero only on
    $[i_0,i_1] \times [i_0,i_1]$, and is piecewise constant on two blocks $J_0 \times J_0$
    and $J_1 \times J_1$, where $J_0 = [i_0,j]$, and $J_1 = [j+1,i_1]$ (see
    \cref{2blocks}),
    \begin{equation}
      \bP
      = p_0 \big(\one_{J_0}\one_{J_0}^T\big)
      + p_1 \big(\one_{J_1}\one_{J_1}^T \big) 
      + q\big( \one_{J_0}\one_{J_1}^T +\one_{J_1} \one_{J_0}^T \big).
    \end{equation}
    Then, $\lvert \ipr{\bpsi_l \bpsi_l^T}{\bP}\rvert^2$ is maximum if the location of the
    zero-crossing of $\bpsi_l$ is equal to the location of the jump in the SBM, $\ist=j$ (see
    \cref{2blocks}).
  \end{lemma}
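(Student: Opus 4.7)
The plan is to exploit the mean-zero property of $\bpsi_l$ on its support to reduce $\ipr{\bpsi_l\bpsi_l^T}{\bP}$ to a scalar function of the signed shift $t = j - \ist$, and then to verify directly that this function, in absolute value, is strictly decreasing in $|t|$.

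First, I would decompose the block matrix as $\bP = q\,\one_{[i_0,i_1]}\one_{[i_0,i_1]}^T + (p_0-q)\,\one_{J_0}\one_{J_0}^T + (p_1-q)\,\one_{J_1}\one_{J_1}^T$. The Soules recursion \cref{from_l_to_plusone} with $\bpsi_1=n^{-1/2}\one$ forces $\bpsi_l$ to sum to zero on its support $[i_0,i_1]$ (indeed, $\rstr{\bpsi_l}{[i_0,\ist]}$ and $\rstr{\bpsi_l}{[\ist+1,i_1]}$ are constants $a$ and $-b$ satisfying $a n_0 = b n_1$). Hence $\ipr{\bpsi_l}{\one_{[i_0,i_1]}}=0$, which collapses the bilinear form to $\ipr{\bpsi_l\bpsi_l^T}{\bP} = \bpsi_l^T\bP\bpsi_l = (p_0-q)\ipr{\bpsi_l}{\one_{J_0}}^2 + (p_1-q)\ipr{\bpsi_l}{\one_{J_1}}^2$. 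The same mean-zero identity also gives $\ipr{\bpsi_l}{\one_{J_0}} = -\ipr{\bpsi_l}{\one_{J_1}}$, since $\one_{J_0}+\one_{J_1} = \one_{[i_0,i_1]}$.

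Next, without loss of generality I would treat the case $\ist\le j$ (the case $\ist>j$ follows by swapping $(p_0,J_0)\leftrightarrow(p_1,J_1)$ and interchanging $m_0\leftrightarrow m_1$). Setting $t = j - \ist$, $n_0 = \ist - i_0 + 1 = m_0 - t$, $n_1 = i_1 - \ist = t + m_1$, and $N = m_0 + m_1$, one uses the explicit values of $\bpsi_l$ from \cref{from_l_to_plusone} (or equivalently \cref{computation-psik}) to compute $\ipr{\bpsi_l}{\one_{J_0}}^2 = m_1^2 (m_0 - t)/\bigl(N(t + m_1)\bigr)$ after a short algebraic simplification. Substituting back into the two-term expansion above produces the compact closed form $\ipr{\bpsi_l\bpsi_l^T}{\bP} = (p_0 + p_1 - 2q)\, m_1^2 (m_0 - t)/\bigl(N(t + m_1)\bigr)$.

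Finally, I would analyze $h(t) = (m_0 - t)/(t + m_1)$ on the admissible range $t \in [0, m_0 - 1]$. Its derivative is $h'(t) = -N/(t + m_1)^2 < 0$, so $h$ is strictly positive and strictly decreasing. Consequently $\ipr{\bpsi_l\bpsi_l^T}{\bP}$ has constant sign (determined by $p_0 + p_1 - 2q$), and $|\ipr{\bpsi_l\bpsi_l^T}{\bP}|^2$ is strictly decreasing in $t$ regardless of that sign. Its unique maximum on the admissible range is therefore attained at $t=0$, that is, at $\ist = j$; the symmetric case $\ist > j$ yields the analogous formula with $m_0,m_1$ swapped and the same conclusion. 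The main obstacle is the bookkeeping needed to evaluate $\ipr{\bpsi_l}{\one_{J_0}}^2$ cleanly; once the mean-zero identity $\ipr{\bpsi_l}{\one_{[i_0,i_1]}}=0$ is used to collapse the bilinear form, the rest is elementary algebra and one line of calculus.
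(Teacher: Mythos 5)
Your proof is correct and follows essentially the same route as the paper: both reduce $\ipr{\bpsi_l\bpsi_l^T}{\bP}$ to the scalars $r_q=\ipr{\bpsi_l}{\one_{J_q}}$, arrive at the identical closed form $(p_0+p_1-2q)\,\tfrac{n_0n_1}{N}(\cdot)^2$, and conclude by monotonicity in the offset between $\ist$ and $j$. Your use of $\ipr{\bpsi_l}{\one_{[i_0,i_1]}}=0$ to kill the $q$-background term and obtain $r_1=-r_0$ without a separate computation is a modest streamlining of the paper's case-by-case evaluation of $r_0$ and $r_1$, and your explicit derivative $h'(t)=-N/(t+m_1)^2$ makes rigorous the monotonicity the paper merely asserts.
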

  \begin{proof}
    The proof relies on the computation of the inner-product between a Soules tensor product
    $\bpsi_l \bpsi_l^T$ and an SBM whose support coincide with the support of $\bpsi_l
    \bpsi_l^T$. We use \cref{computation-psik}, and we study two cases for the choice of
    $\ist \in [i_0,i_1]$. We have
    \begin{equation}
      \ipr{\bpsi_l \bpsi_l^T}{\bP} = 
      p_0  \ipr{\bpsi_l \bpsi_l^T}{\one_{J_0} \one_{J_0}}^T +
      p_1  \ipr{\bpsi_l \bpsi_l^T}{\one_{J_1}\one_{J_1}^T} 
      + c  \ipr {\bpsi_l \bpsi_l^T}{\one_{J_0}\one_{J_1}^T +\one_{J_1}\one_{J_0}^T}.
    \end{equation}
    \begin{figure}[H]
      \centerline{
        \includegraphics[width=0.45\textwidth]{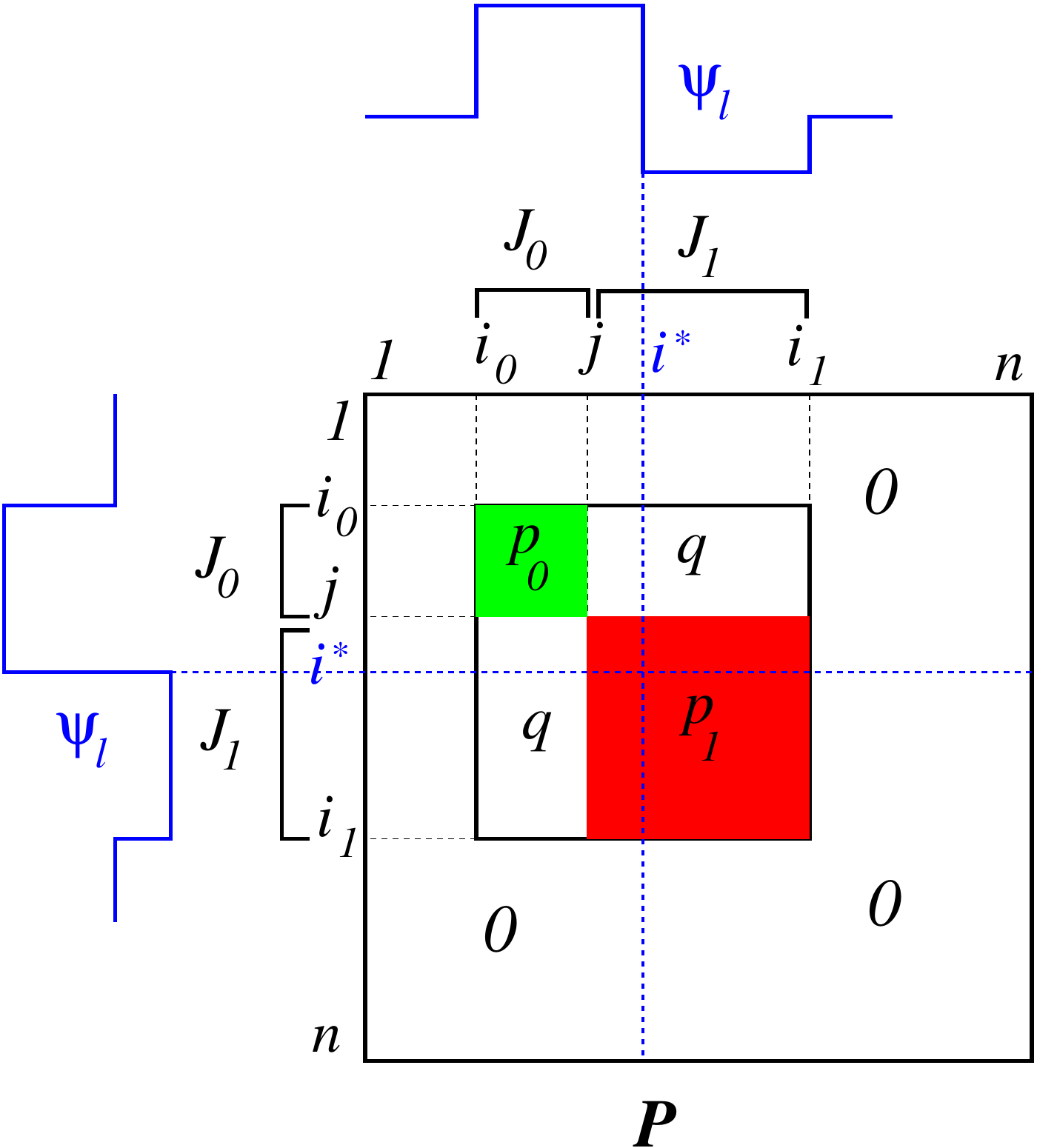}
      }
      \caption{The vector $\bpsi_l$ (in blue) is created by splitting a block of indices
        $[i_0,i_1]$ at level $l-1$ into two sub-blocks, $[i_0,\ist] \cup [\ist+1,i_1]$ at level
        $l$.  We consider the matrix $\bP$ that is nonzero only on $[i_0,i_1] \times
        [i_0,i_1]$, and is piecewise constant on two blocks $J_0 \times J_0$ (in green) and $J_1
        \times J_1$ (in red), where $J_0 = [i_0,j]$, and $J_1 = [j+1,i_1]$
        \label{2blocks}}
    \end{figure}
    \noindent    Also, $\ipr {\bpsi_l \bpsi_l^T}{\one_{J_q}\one_{J_r}^T} = \ipr{\bpsi_l}{\one_{J_q}}
    \ipr{\bpsi_l}{\one_{J_r}},\mspace{8mu} \text{for} \mspace{4mu} q,r \in \{0,1\}$.  We
    define $r_q \eqdef \ipr{\bpsi_l}{\one_{J_q}}$ for $q=0,1$. Then
    \begin{equation}
      \ipr{\bpsi_l =  \bpsi_l^T}{\bP}  = p_0 r_0^2 + 2q r_0r_1 + p_1 r_1^2.
    \end{equation}
    The expression of the coefficients $r_0$ and $r_1$ can be derived by using
    \cref{levecteur_psi_l}. We give the details for the computation of $r_0$, the computation
    of $r_1$ is very similar. To compute $r_0$, we need to consider the two cases, $i_0 \le \ist
    \le j$ and $j \le \ist \le i_1$. We recall from \cref{levecteur_psi_l} that we always have
    \begin{equation}
      \bpsi_l (i) = \frac{1}{\sqrt{i_1 - i_0 +1}}
      \begin{cases}
        \displaystyle \mspace{16mu} \frac{\sqrt{i_1 -\ist}}{\sqrt{\ist - i_0 +1}}   &
        \text{if} \mspace{8mu} i_0 \le i \le \ist,\\ \\
        \displaystyle - \frac{\sqrt{\ist - i_0 +1}}{\sqrt{i_1 -\ist}} & \text{if} \mspace{8mu} \ist+1 \le i \le i_1,\\
        0 & \text{otherwise.}
      \end{cases}
    \end{equation}
    \noindent If $\ist \le j$ then $\bpsi_l$ changes sign over $J_0$ and we have
    \begin{equation}
      \begin{aligned}
        r_0 & = \ipr{\bpsi_l}{\one_{J_0}}
        = \frac{1}{\sqrt{i_1 - i_0 +1}} \bigg\{\sum_{i=i_0}^\ist \frac{\sqrt{i_1 -\ist}}{\sqrt{\ist - i_0 +1}}
        - \sum_{i=\ist+1}^j \frac{\sqrt{\ist - i_0 +1}}{\sqrt{i_1 -\ist}} \bigg \}\\
        & = \sqrt{\frac{(\ist -i_0 +1)(i_1 -\ist)}{i_1 - i_0 +1}}\bigg(\frac{i_1 - j}{i_1 - \ist}\bigg).
      \end{aligned}
    \end{equation}
    \noindent If $j \le \ist$ then $\bpsi_l$ is positive over $J_0$ (this is the case for
    \cref{2blocks}) and we have
    \begin{equation}
      r_0 = \ipr{\bpsi_l}{\one_{J_0}} 
      = \frac{1}{\sqrt{i_1 - i_0 +1}} \sum_{i=i_0}^j \frac{\sqrt{i_1 -\ist}}{\sqrt{\ist - i_0 +1}}
      = \sqrt{\frac{(\ist -i_0 +1)(i_1 -\ist)}{i_1 - i_0 +1}}\bigg(\frac{j - i_0 + 1}{\ist - i_0 +1}\bigg).
    \end{equation}
    A similar calculation yields $r_1$. If $\ist +1\le j+1$ then $\bpsi_l$ is negative  over $J_1$ and we have
    \begin{equation}
      r_1 = - \sqrt{\frac{(\ist -i_0 +1)(i_1 -\ist)}{i_1 - i_0 +1}}\bigg(\frac{i_1 - j}{i_1 - \ist}\bigg),
    \end{equation}
    and if $j+1 \le \ist +1$ (this is the case for \cref{2blocks}) then $\bpsi_l$ changes sign
    over $J_1$ and we have
    \begin{equation}
      r_1 = - \sqrt{\frac{(\ist -i_0 +1)(i_1 -\ist)}{i_1 - i_0 +1}}\bigg(\frac{j - i_0 + 1}{\ist - i_0
        +1}\bigg).
    \end{equation}
    We are now ready to evaluate $\ipr{\bpsi_l\bpsi_l^T}{\bP} = p_0 r_0^2 + 2q r_0r_1 + p_1
    r_1^2$. Again, we need to consider the following two cases. If $\ist \le j$ then
    \begin{equation}
      \begin{aligned}
        \ipr{\bpsi_l\bpsi_l^T}{\bP}
        & =  p_0 \frac{(\ist -i_0 +1)(i_1 -\ist)}{i_1 - i_0 +1}\bigg(\frac{i_1 - j}{i_1 - \ist}\bigg)^2
        +  p_1 \frac{(\ist -i_0 +1)(i_1 -\ist)}{i_1 - i_0 +1}\bigg(\frac{i_1 - j}{i_1 - \ist}\bigg)^2\\
        & - 2q \frac{(\ist -i_0 +1)(i_1 -\ist)}{i_1 - i_0 +1}\bigg(\frac{i_1 - j}{i_1 - \ist}\bigg)^2\\
        & = \frac{(\ist -i_0 +1)(i_1 -\ist)}{i_1 - i_0 +1}\bigg(\frac{i_1 - j}{i_1 - \ist}\bigg)^2
        \big\{p_0 + p_1 -2q\big\},
      \end{aligned}
    \end{equation}
    which is maximum when $\ist = j$. In the case where  if $j \le \ist$ we have
    \begin{equation}
      \ipr{\bpsi_l\bpsi_l^T}{\bP}
      = \frac{(\ist -i_0 +1)(i_1 -\ist)}{i_1 - i_0 +1}
      \bigg(\frac{j - i_0 + 1}{\ist - i_0 + 1}\bigg)^2
      \big\{p_0 + p_1 -2q\big\},
    \end{equation}
    which is also maximum when $\ist = j$. This concludes the proof that
    $\ipr{\bpsi_l\bpsi_l^T}{\bP}$ is maximal if $\ist=j$.
  \end{proof}
  \noindent \cref{first-cut} extends \cref{maximum-coefficient} to the general
  edge probability matrix $\bP$ of an SBM (see \cref{SBM})); it is used to prove
  \cref{corollary3p51} by induction. \cref{first-cut} can be proved using a proof
  by contradiction (using \cref{maximum-coefficient}).

  \begin{lemma}
    \label{first-cut}
    Let $\bP$ be the population mean adjacency matrix of $\sbm{\bp}{q}{n}$ defined by
    \begin{equation}
      \bP = \sum_{m=1}^M (p_m - q)\one_{B_m}\one_{B_m}^T + q \bJ,
    \end{equation}
    where the $M$ blocks $\{B_m\}$ form a partition of $[n]$.  Then, the split that creates
    $\bpsi_2$ in \cref{BSB}, is always located at the boundary between two blocks $B_m$
    and $B_{m+1}$.
  \end{lemma}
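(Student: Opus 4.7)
The plan is to prove the lemma via a convexity argument applied to a closed-form expression for the objective $f(\ist) \eqdef \langle \bpsi_2\bpsi_2^T, \bP\rangle$, where $\bpsi_2$ is the candidate Soules vector with split index $\ist$. Since $\bpsi_2 \perp \bpsi_1 = n^{-1/2}\one$, the decomposition $\bP = q\bJ + \sum_m (p_m-q)\one_{B_m}\one_{B_m}^T$ immediately gives $f(\ist) = \sum_m (p_m-q)\langle \bpsi_2, \one_{B_m}\rangle^2 \ge 0$ under the generic assumption $p_{m'}\ge q$, so that maximizing $|f|^2$ reduces to maximizing $f$. An equivalent and more tractable form exploits the fact that $\bpsi_2$ is piecewise constant, taking the values $a=\sqrt{(n-\ist)/(n\ist)}$ on $[1,\ist]$ and $-b=-\sqrt{\ist/(n(n-\ist))}$ on $[\ist+1,n]$, with $ab=1/n$. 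Writing $S_{cd}$ for the sum of $P_{ij}$ over the corresponding quadrant of the $\{[1,\ist],[\ist+1,n]\}$-partition and $T \eqdef \sum_{i,j} P_{ij}$, the identities $a^2+1/n=1/\ist$, $b^2+1/n=1/(n-\ist)$, together with $S_{00}+2S_{01}+S_{11}=T$, yield after brief algebra the clean formula
\begin{equation*}
f(\ist) = \frac{S_{00}(\ist)}{\ist} + \frac{S_{11}(\ist)}{n-\ist} - \frac{T}{n}.
\end{equation*}

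Next, assuming for contradiction that the optimal $\ist^*$ returned by \cref{BSB} lies strictly inside some $B_m$, I consider the interval $\mathcal{I}_m$ of $\ist$ values for which every $B_{m'}\neq B_m$ is entirely on one side of the split; by construction the two endpoints of $\mathcal{I}_m$ are precisely the boundaries of $B_m$ with its left and right neighbours. Routine bookkeeping --- splitting $[1,\ist]$ into the union of the $B_{m'}$ with $m'<m$ and the left portion of $B_m$ --- produces
\begin{equation*}
S_{00}(\ist) = q\ist^2 + Q_L + (p_m-q)(\ist - N_L)^2,
\end{equation*}
with $N_L \eqdef \sum_{m'<m}|B_{m'}|$ and $Q_L \eqdef \sum_{m'<m}(p_{m'}-q)|B_{m'}|^2$, and a symmetric expression for $S_{11}$. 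Substituting into the identity for $f$ and collecting terms, the linear and constant pieces coalesce and the expression compresses to the key block-wise representation
\begin{equation*}
f(\ist) = \frac{\alpha_m}{\ist} + \frac{\beta_m}{n-\ist} + C_m,
\end{equation*}
with $\alpha_m \eqdef Q_L + (p_m-q)N_L^2 \ge 0$, $\beta_m\ge 0$ defined symmetrically from the blocks to the right of $B_m$, and $C_m$ independent of $\ist$.

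The contradiction then follows immediately from strict convexity: both $\ist\mapsto \alpha_m/\ist$ and $\ist\mapsto \beta_m/(n-\ist)$ are convex on $(0,n)$, strictly so whenever $1<m<M$, so $f$ is strictly convex on $\mathcal{I}_m$. A strictly convex function on a closed interval attains its maximum only at the endpoints, which are inter-block boundaries; this contradicts the assumption that $\ist^*$ lies strictly inside $B_m$. The edge cases $m=1$ and $m=M$ degenerate to a single hyperbolic term that is monotone on $\mathcal{I}_m$ and hence maximized at the unique block boundary at the inner end of the regime. The main obstacle lies neither in the convexity nor in the contradiction, both of which are immediate, but in the algebraic reduction that compacts $S_{00}(\ist)$ and $S_{11}(\ist)$ into the hyperbolic form $\alpha_m/\ist + \beta_m/(n-\ist) + C_m$; this is the step at which \cref{maximum-coefficient} is morally invoked, since on each regime $\mathcal{I}_m$ the structure seen by $\bpsi_2$ is effectively that of the 2-block SBM handled by that lemma.
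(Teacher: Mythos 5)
Your proof is correct, and it takes a genuinely different route from the paper's. The paper argues by contradiction and delegates the computation to \cref{maximum-coefficient}: if $\ist$ fell strictly inside a block $B_m$, then $\bP$ restricted to $B_m\times B_m$ is constant, and the two-block calculation of \cref{maximum-coefficient} is invoked to conclude that placing the zero-crossing there cannot increase the objective. That argument is short but leaves implicit how the contributions of the $M-2$ blocks lying entirely on one side of the split are controlled, since \cref{maximum-coefficient} is stated for a matrix supported on exactly two blocks inside $\supp{\bpsi_l}$. Your derivation instead produces the exact closed form $f(\ist)=S_{00}(\ist)/\ist+S_{11}(\ist)/(n-\ist)-T/n$ for the full $M$-block matrix (I checked the identities $a^2+1/n=1/\ist$, $b^2+1/n=1/(n-\ist)$ and the cancellation of the linear terms $p_m\ist+p_m(n-\ist)=p_mn$; all are right), and then settles the question by strict convexity of $\alpha_m/\ist+\beta_m/(n-\ist)$ on each inter-boundary regime $\mathcal{I}_m$, with the monotone degenerate cases $m=1,M$ handled separately. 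What your approach buys is a self-contained, fully global argument that quantifies how much better the boundary split is, and it surfaces an assumption that both proofs actually need but that the lemma statement omits: assortativity, i.e.\ $p_m\ge q$ for all $m$ (with strict inequality somewhere), which you use twice --- to get $f\ge0$ so that maximizing $\lvert f\rvert^2$ is the same as maximizing $f$, and to get $\alpha_m,\beta_m\ge0$ so that the hyperbolic terms are convex with the right orientation. The paper's own computation in \cref{maximum-coefficient} ends with the factor $p_0+p_1-2q$, so the same hypothesis is hidden there; making it explicit, as you do, is an improvement rather than a discrepancy.
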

  \begin{proof}
    Let $\ist$ be the index associated with the construction of $\bpsi_2$ and the subdivision
    of $[n]$. We need to prove that $\ist$ coincides with the endpoint of a block $B_m$.  By
    contradiction, if $\ist$ does not correspond to the boundary between two blocks, then there
    exists $i_0< i_1$ such that $B_m=[i_0,i_1]$ and $i_0 < \ist < i_1$. Since $\bP$ is constant
    over the block $[i_0,i_1] \times [i_0,i_1]$ (see \cref{2blocks} with $p_0 = p_1 =
    q$), \cref{maximum-coefficient} tells us that the value of $\bP$ in $B_m \times
    B_m$ does not contribute to $\lvert \ipr{\bpsi_2 \bpsi_2^T}{\bP}\rvert^2$, and
    \cref{BSB} should not have placed $\ist$ in $B_m$.
  \end{proof}
  \subsubsection{$M$ iterations of \cref{BSB}}
  This last lemma guarantees that after $M$ iterations of \cref{BSB}, the matrix
  $\sum_{k=1}^M \bpsi_k \bpsi_k^T$ associated with the first $M$ Soules vectors recovers the block
  geometry. \cref{corollary3p51} is proved by induction on $M$, using \cref{first-cut}.
  \begin{lemma}
    \label{corollary3p51}
    Let $\bP$ be the population mean adjacency matrix of $\sbm{\bp}{q}{n}$ defined by
    \begin{equation}
      \bP = \sum_{m=1}^M (p_m - q)\one_{B_m}\one_{B_m}^T + q \bJ.
      \label{unSBM}
    \end{equation}
    Let $J_l, 1 \le l \le M$ be the leaves in the binary Soules tree (these are intervals
    that are no longer split) after $M$ steps of \cref{BSB}. Then, the $M$ blocks
    $\{B_m\}$ in \cref{unSBM} coincide with the $M$ intervals $\{J_l\}$ discovered by
    \cref{BSB}.
  \end{lemma}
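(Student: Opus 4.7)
I would proceed by induction on $M$. The base case $M=1$ is immediate: the initialization $\bpsi_1 = n^{-1/2}\one$ creates a tree with the single leaf $J_1 = [n] = B_1$, no splits are required, and the claim holds. For the inductive step, assume the result for every SBM with strictly fewer than $M$ blocks, and consider an $M$-block SBM as in \cref{unSBM}. By \cref{first-cut}, the split that produces $\bpsi_2$ lies between two consecutive blocks $B_r$ and $B_{r+1}$, so after this first non-trivial iteration the tree has two leaves $J^L = B_1 \cup \cdots \cup B_r$ and $J^R = B_{r+1} \cup \cdots \cup B_M$, each a contiguous union of original blocks.

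The next step is to reduce all subsequent iterations of \cref{BSB} to two independent runs, one on the sub-SBM carried by $J^L$ and one on the sub-SBM carried by $J^R$. By the construction rule \cref{from_l_to_plusone}, every Soules vector created after $\bpsi_2$ has support contained in exactly one existing leaf; since $\ipr{\bpsi_k \bpsi_k^T}{\bP}$ depends only on $\bP$ restricted to $\supp{\bpsi_k} \times \supp{\bpsi_k}$, and since $\bP$ restricted to $J^L \times J^L$ (respectively $J^R \times J^R$) is itself the edge-probability matrix of an SBM on $r$ (respectively $M-r$) blocks, the greedy selection inside each leaf mirrors exactly \cref{BSB} applied to the smaller SBM.

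I would then verify that the greedy criterion never wastes an iteration on a leaf that already coincides with a single block. If $J = B_m$, then $\bP|_{J \times J}$ is constant with value $p_m$, and the explicit formulas established in the proof of \cref{maximum-coefficient} (with $p_0 = p_1 = q = p_m$) give $\ipr{\bpsi \bpsi^T}{\bP} = 0$ for any candidate split of $J$. Conversely, any multi-block leaf admits a split at a true block boundary yielding a strictly positive value, by the same formulas with $p_0 + p_1 - 2q \neq 0$. Hence the greedy rule always selects a multi-block leaf and, by the localized version of \cref{first-cut}, splits it at a genuine block boundary. Applying the inductive hypothesis inside each of $J^L$ and $J^R$ -- which require $r-1$ and $M-r-1$ further splits respectively, summing to $M-2$ additional iterations -- the $M$ leaves produced after $M$ steps of \cref{BSB} are exactly $\{B_1, \ldots, B_M\}$.

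The main obstacle will be the cross-leaf comparison embedded in the greedy step: \cref{BSB} selects, at every level, the best split across all leaves \emph{and} all candidate indices simultaneously, so I must ensure that no interior (non-boundary) split inside one leaf ever beats the best block-boundary split available in some other leaf. The zero-contribution identity for single-block leaves disposes of fully resolved leaves, while \cref{maximum-coefficient} guarantees that \emph{within} every fixed multi-block leaf the objective is strictly maximized at a block boundary. These two facts together force the global maximum to always be attained at a block boundary of some multi-block leaf, which closes the induction and yields $\{J_\ell\} = \{B_m\}$.
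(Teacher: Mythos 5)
Your proposal is correct and follows essentially the same route as the paper's proof: induction on the number of blocks, using \cref{first-cut} to place the first split at a block boundary and then recursing independently into the two resulting sub-SBMs. You are in fact somewhat more careful than the paper, which leaves implicit both the cross-leaf greedy comparison and the fact that a fully resolved (single-block) leaf contributes a zero objective and is therefore never split.
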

  \begin{proof}
    We prove the result by induction on $M$. If $M=1$, there is nothing to prove. If $M=2$,
    then \cref{first-cut} shows that $\bpsi_2$ recovers the block geometry. We assume that
    the result holds for all matrices $\bP$ with $m \le M$ blocks that are given by
    \cref{unSBM}. We consider the population mean adjacency matrix $\bQ$ defined by
    \begin{equation}
      \bQ = \sum_{m=1}^{M+1} (p_m - q)\one_{C_m}\one_{C_m}^T + q \bJ,
    \end{equation}
    where $\cup_{m=1}^{M+1} C_m = [n]$. Because of \cref{first-cut}, the first split of
    $[n]$, which leads to the construction of $\bpsi_2$ is aligned with the boundary of a
    block $C_{m_0}= [i_0,\ist]$. Without loss of generality, we can assume that the cut is
    aligned with the endpoint of $C_{m_0}$. We can then partition $\bQ = \bQ_1 + \bQ_2$,
    where
    \begin{equation}
      \bQ_1 = \sum_{m=1}^{\ist} (p_m - q)\one_{C_m}\one_{C_m}^T + q \one_{[\ist]}\one_{[\ist]}^T
    \end{equation}
    and
    \begin{equation}
      \bQ_2 = \sum_{m=\ist+ 1}^{M+1} (p_m - q)\one_{C_m}\one_{C_m}^T + q
      \one_{\{\ist+1,\ldots,n\}}\one_{\{\ist+1,\ldots,n\}}^T.
    \end{equation}
    Again, because of \cref{first-cut}, the next splits happen (independently) in
    $\bQ_1$, or $\bQ_2$. We can use the induction hypothesis to argue that all further
    splits will be located along the blocks in $\bQ_1$, or $\bQ_2$. After $M$ splits, the
    algorithm has detected all $M+1$ blocks. By induction, the result holds for all $M$.
  \end{proof}
  \noindent \cref{lemma4} is then a direct consequence of \cref{corollary3p51} and \cref{corollaryp37}.


\end{document}